    \newtheorem{theorem}{Theorem}
    \newtheorem{corollary}{Corollary}
    \newtheorem{proposition}{Proposition}
    \theoremstyle{definition}
    \newtheorem{definition}{Definition}
    \newcounter{rowcntr}[table]
    \newcolumntype{N}{>{\refstepcounter{rowcntr}\alph{rowcntr}}c} 
    \providecommand{\vcentcolon}{\mathrel{\mathop{:}}}
    \newcommand{\normord}[1]{\vcentcolon\mathrel{#1}\vcentcolon} 
    \DeclareSIUnit{\year}{\text{yr}}
    \definecolor{NickRed}{HTML}{e20134} 
    \definecolor{NickOrange}{HTML}{f1611a}
    \definecolor{NickYellow}{HTML}{ffc100}
    \definecolor{NickGreen}{HTML}{04c421}
    \definecolor{NickBlue}{HTML}{0062ff}
    \definecolor{NickPurple}{HTML}{6610f2}
    \pgfplotsset{compat=1.18}
\newcommand{\ie}{\emph{i.e.}} 
\newcommand{\eg}{\emph{e.g.}} 
\begin{document}

\title{Positive Mass in General Relativity Without Energy Conditions}

\author{Níckolas de \surname{Aguiar Alves}}
\orcid{0000-0002-0309-735X}
\email{alves.nickolas@ufabc.edu.br}    
\affiliation{Center for Natural and Human Sciences, Federal University of ABC,\\Av. dos Estados, 5001, 09210-580, Santo André, SP, Brazil}

\author{André G. S. Landulfo}
\orcid{0000-0002-3717-4966}
\email{andre.landulfo@ufabc.edu.br}
\affiliation{Center for Natural and Human Sciences, Federal University of ABC,\\Av. dos Estados, 5001, 09210-580, Santo André, SP, Brazil}

\author{Bruno Arderucio Costa}
\orcid{0000-0001-5182-2010}
\email{bcosta@troy.edu}
\affiliation{Center for Relativity and Cosmology, Troy University,\\ Troy, AL, 36082, USA}

\date{January 17, 2025}

\begin{abstract}
A long-standing problem in physics is why observed masses are always positive. While energy conditions in quantum field theory can partly answer this problem, in this paper we find evidence that classical general relativity abhors negative masses, without the need for quantum theory or energy conditions. This is done by considering many different models of negative-mass ``stars'' and showing they are dynamically unstable. \emph{A fortiori}, we show that any barotropic negative-mass star must be dynamically unstable.
\end{abstract}

\maketitle

\section{Introduction}\label{sec: intro}
A simple, yet profound, question one can ask about gravity is: ``why do things fall down?'' This is a conspicuous experimental fact about gravitational physics, and it is certainly fundamental for the formation of structures in the Universe and the existence of life itself. The answer, however, may dive deep into the foundations of physics. 

From a purely quantum field theoretical point of view, one could answer that things fall down because gravity is mediated by a spin \num{2} field---the graviton. As is well-known \cite{percacci2017IntroductionCovariantQuantum,peskin1995IntroductionQuantumField}, even-spin mediators lead to attractive forces for particles with charges with the same sign. Hence, positive masses must attract each other. This is in contrast to the situation with electromagnetism, in which the spin \num{1} photon makes opposite charges attract each other, but similar charges repel each other. 

A follow-up question arises: why do we only observe ``gravitational charges'' (\ie, masses) of the same sign? The absence of negative masses is a prominent, but curious, feature of nature. One at first would expect some sort of symmetry between positive and negative masses.

The fact is that assuming that the equivalence principle holds in such cases (i.e., inertial and gravitational masses are equal), there is no such symmetry. The reason is the curious gravitational behavior of masses with different signs that can be understood in Newtonian gravity, for simplicity. Consider a pair of masses. If both the masses are positive, then the situation is standard and the masses will be attracted to each other. In particular, the gravitational force on each particle and their acceleration point in the same direction (see figure \ref{fig: signs-of-gravity}). If the masses have opposite signs, the forces are repulsive, but the net effect is that the negative mass pursues the positive mass, while the latter runs from the former. This is because the force and acceleration on the negative mass point to different directions (\(\vb{F} = - \abs{m} \vb{a}\)). Finally, for a similar reason, the gravitational force between two negative masses is attractive, but the effect is repulsive. This is pictured in figure \ref{fig: signs-of-gravity}. These examples make it clear that negative masses' gravitational dynamics are fairly different from positive masses. While negative and positive masses can lead to similar orbits in Newtonian mechanics \cite{shatskiy2011KeplerProblemCollisions}, we know of no experimental evidence supporting the existence of negative masses.

\begin{figure}[t]
    \centering
    \begin{tikzpicture}
        \coordinate (A) at (-1.5,0);
        \coordinate (B) at (+1.5,0);
        \coordinate (C) at (0,1ex);
        \draw[-{Stealth[round,length=3ex]},ultra thick] ($(B) + (C)$) -- ++(-1.25,0);
        \draw[-{Stealth[round,length=3ex]},ultra thick] ($(A) + (C)$) -- ++(+1.25,0);
        \draw[-{Stealth[round,length=3ex]},ultra thick] ($(B) - (C)$) -- ++(-1.25,0);
        \draw[-{Stealth[round,length=3ex]},ultra thick] ($(A) - (C)$) -- ++(+1.25,0);
        \node[anchor=south] at ($(B)+(C)+0.5*(-1.25,0)$) {\(\vb{F}\)};
        \node[anchor=south] at ($(A)+(C)+0.5*(+1.25,0)$) {\(\vb{F}\)};
        \node[anchor=north] at ($(B)-(C)+0.5*(-1.25,0)$) {\(\vb{a}\)};
        \node[anchor=north] at ($(A)-(C)+0.5*(+1.25,0)$) {\(\vb{a}\)};
        \node[white,font=\huge] at (A) {\faCircle};
        \node[white,font=\huge] at (B) {\faCircle};
        \node[NickBlue,font=\huge] at (A) {\faPlusCircle};
        \node[NickBlue,font=\huge] at (B) {\faPlusCircle};

        \coordinate (D) at ($(A)+(0,-1.5)$);
        \coordinate (E) at ($(B)+(0,-1.5)$);
        \coordinate (C) at (0,1ex);
        \draw[-{Stealth[round,length=3ex]},ultra thick] ($(E)$) -- ++(+1.25,0);
        \draw[-{Stealth[round,length=3ex]},ultra thick] ($(D) + (C)$) -- ++(-1.25,0);
        \draw[-{Stealth[round,length=3ex]},ultra thick] ($(E)$) -- ++(-1.25,0);
        \draw[-{Stealth[round,length=3ex]},ultra thick] ($(D) - (C)$) -- ++(-1.25,0);
        \node[anchor=south] at ($(E)+0.5*(+1.25,0)$) {\(\vb{F}\)};
        \node[anchor=south] at ($(D)+(C)+0.5*(-1.25,0)$) {\(\vb{F}\)};
        \node[anchor=south] at ($(E)+0.5*(-1.25,0)$) {\(\vb{a}\)};
        \node[anchor=north] at ($(D)-(C)+0.5*(-1.25,0)$) {\(\vb{a}\)};
        \node[white,font=\huge] at (D) {\faCircle};
        \node[white,font=\huge] at (E) {\faCircle};
        \node[NickBlue,font=\huge] at (D) {\faPlusCircle};
        \node[NickRed,font=\huge] at (E) {\faMinusCircle};
        
        \coordinate (F) at ($(D)+(0,-1.5)$);
        \coordinate (G) at ($(E)+(0,-1.5)$);
        \draw[-{Stealth[round,length=3ex]},ultra thick] ($(G)$) -- ++(+1.25,0);
        \draw[-{Stealth[round,length=3ex]},ultra thick] ($(F)$) -- ++(+1.25,0);
        \draw[-{Stealth[round,length=3ex]},ultra thick] ($(G)$) -- ++(-1.25,0);
        \draw[-{Stealth[round,length=3ex]},ultra thick] ($(F)$) -- ++(-1.25,0);
        \node[anchor=south] at ($(G)+0.5*(-1.25,0)$) {\(\vb{F}\)};
        \node[anchor=south] at ($(F)+0.5*(+1.25,0)$) {\(\vb{F}\)};
        \node[anchor=south] at ($(G)+0.5*(+1.25,0)$) {\(\vb{a}\)};
        \node[anchor=south] at ($(F)+0.5*(-1.25,0)$) {\(\vb{a}\)};
        \node[white,font=\huge] at (F) {\faCircle};
        \node[white,font=\huge] at (G) {\faCircle};
        \node[NickRed,font=\huge] at (F) {\faMinusCircle};
        \node[NickRed,font=\huge] at (G) {\faMinusCircle};
    \end{tikzpicture}
    \caption{Gravitational dynamics of two pointlike masses with various signs. Due to Newton's second law, a negative mass accelerates in the direction opposite to that of the applied force. Top: two positive masses accelerate toward each other. Middle: a positive mass accelerates away from a negative mass, but the negative mass accelerates toward the positive mass. Bottom: two negative masses accelerate away from each other.}
    \label{fig: signs-of-gravity}
\end{figure}
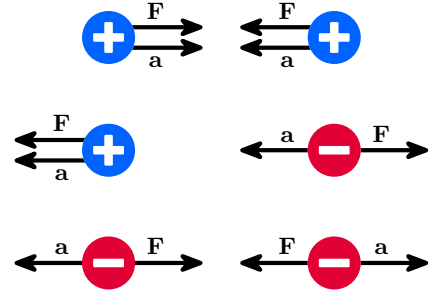

Within classical general relativity, this makes sense. One has a limited amount of matter types (\ie, of stress-energy-momentum tensors) that can be considered in the theory. These correspond to the fields of the standard model of particle physics and their emergent consequences. It turns out that all known classical forms of matter satisfy certain energy conditions, which are impositions made on the stress tensor to restrict the allowed behavior of the matter in a certain spacetime. Within classical general relativity, it is well-known that the so-called dominant energy condition (DEC), together with other reasonable assumptions, is sufficient to ensure the positivity of the total mass in an asymptotically flat spacetime at a given time \cite{schoen1979ProofPositiveMass,schoen1981ProofPositiveMass,witten1981NewProofPositive}. Hence, classical general relativity provides an answer to why we do not observe negative masses: as long as the matter content obeys the classical energy conditions, it follows that the total mass must be positive.

It happens, however, that quantum mechanical systems can easily violate the classical energy conditions, including the DEC. This means that these systems admit regions with negative energy density. For example, the Casimir effect \cite{casimir1948AttractionTwoPerfectly} leads to a negative energy density between two conducting plates. It is then natural to wonder whether one could produce a macroscopic object with negative mass by exploiting quantum effects. 

With this in mind, Costa and Matsas recently analyzed whether quantum mechanical effects could lead to macroscopic negative masses by considering the Casimir system \cite{costa2022CanQuantumMechanics}. They noticed that to keep the Casimir plates separated, it is necessary to hold them apart with ``struts.'' Assuming these struts are classical, they must satisfy the DEC, and the authors showed that the positive mass due to the struts is sufficient for the complete system to have a positive mass.

While Costa and Matsas analyzed the Casimir effect in Minkowski spacetime, their conclusions suggested the conjecture (already stated at the end of Ref.~\cite{costa2022CanQuantumMechanics}) that a ``cosmic-weight watcher must rule out from nature regular asymptotically flat stationary solutions of Einstein’s equations with \(M\) < 0''. This statement is vague regarding what could count as a ``cosmic-weight watcher,'' but some other conditions such as restrictions on the matter content or stability are known to be needed. For example, in Ref.~\cite{novikov2018StarsCreatingGravitational}, Novikov, Bisnovatyi-Kogan, and Novikov considered a number of possible stellar solutions with negative energy density. Their interest was mostly in the fact that a star with negative energy density tends to expand due to gravitational effects and contract due to the star's pressure---which is the exact opposite of the roles pressure and gravity play in regular stars. With this goal, the authors studied a few different models that show how general relativity allows these solutions, but do not comment on whether the matter necessary to form these stars exists or whether these solutions are stable. 

If one decides to take input from the matter theory, then there is evidence that negative masses should be forbidden. While quantum fields do not obey the classical energy conditions, there is evidence that they obey weakened energy conditions, which hold only on average. In rough terms, while quantum field theory allows negative energy densities in a given region, these negative energy densities must be balanced by positive energy densities elsewhere. With this assumption, and assuming that the Einstein field equations hold semiclassically, one can conclude under some geometric assumptions on the underlying spacetime that mass must always be positive. This conclusion is mostly supported by a theorem due to Penrose, Sorkin, and Woolgar \cite{penrose1993PositiveMassTheorem}, which must be complemented by a theorem due to Borde \cite{borde1987GeodesicFocusingEnergy} to translate the result into a condition about the stress tensor.

It is interesting, however, to return attention to the original cosmic-weight watcher conjecture and ask: can general relativity discard negative mass solutions without the aid of quantum field theory? This would provide an alternative mechanism for the nonoccurrence of negative masses that does not rely on quantum theory at all. Hence, if true, it would show that gravity itself abhors negative masses, regardless of the matter content.

A natural candidate for the role of a cosmic-weight watcher is the stability of solutions. More specifically, we conjecture that there are no stable regular asymptotically flat stationary solutions of Einstein’s equations with \(M\) < 0. This time, we consider the possibility that equilibrium solutions are allowed but discard their relevance based on whether or not they are dynamically stable. Our findings provide strong evidence for this refined conjecture. Namely, all models of negative-mass stars we considered turned out to be unstable under linear perturbations or presented other problems. \emph{A fortiori}, we establish that any stars with \(\qty(\pdv*{P}{\rho})_s < 0\) at any point must be dynamically unstable, and all barotropic negative-mass stars must satisfy this condition somewhere (and therefore are unstable). This analysis does not require the use of energy conditions, and thus the conclusion is very different from the earlier results on positive mass theorems.

The structure of the paper is as follows. Section \ref{sec: tov} reviews the Tolman--Oppenheimer--Volkoff equation, which is the basic equation for the hydrostatic equilibrium of a static and spherically symmetric star in general relativity. Section \ref{sec: energy-conditions} reviews, for completeness, the role of energy conditions in ruling out negative masses, with particular emphasis on the semiclassical scenario, in which the averaged null energy condition (ANEC) is relevant. Section \ref{sec: stability} discusses how to study the stability of static, spherically symmetric stars in general relativity and how to adapt that formalism to stars with negative masses. Section \ref{sec: examples} then provides concrete examples of negative-mass stars. Section \ref{sec: stability-negative-mass} gives our main results on how negative-mass stars are hydrodynamically unstable. We conclude in section \ref{sec: conclusions}. Appendix \ref{app: anec-stellar} expresses the ANEC integral (which is essential for verifying whether a given spacetime satisfies the ANEC) in a simpler form for a TOV-like star. Appendices \ref{app: nec-violations-qft} and \ref{app: bump-functions} are included solely in the \texttt{arXiv} version, and they discuss some minor details---the former exemplifies how quantum effects allow the violation of energy conditions in a simple way, while the latter deals with the details of integrals involving bump functions.

We employ the same conventions used in Ref. \cite{wald1984GeneralRelativity}, which includes abstract index notation and has the sign convention \(+++\) in the Misner--Thorne--Wheeler classification \cite{misner2017Gravitation}. Latin indices stand for abstract indices. We also use geometric units with \(G=c=1\). 

\section{Spherically Symmetric Equilibrium Configurations}\label{sec: tov}
To study negative-mass stars, we work on the framework of standard general relativity with some simplifying symmetry assumptions about the source \(\tensor{T}{_a_b}\) and metric. At this point, we refrain from interpreting its origin.

The first symmetry we demand is that the spacetime is stationary, which is motivated by the fact we would like to describe a negative-mass star that can retain equilibrium. Then we ask for spherical symmetry to keep the calculations simple and manageable. Finally, we impose that the material composing the star is isotropic, meaning all of the three principal pressures of the stress tensor coincide. This simplifies the analysis, but this condition could likely be lifted at the expense of the calculations becoming more complex. Finally, we shall also focus on finite configurations, hence neglecting cosmological scenarios.

Within these assumptions, there is a spherical coordinate system in which the line element can be written in the form
\begin{equation}\label{eq: static-spherically-symmetric-ansatz}
    \dd{s}^2 = - e^{2\phi(r)}\dd{t}^2 + e^{2\psi(r)}\dd{r}^2 + r^2 \dd{\Omega}^2,
\end{equation}
where \(\dd{\Omega}^2\) is the line element for the round metric in the unit two-sphere. \(r\) is such that the area of the spheres with constant coordinate radius \(r\) is \(4 \pi r^2\).

The symmetry assumptions also restrict the stress tensor to have the form
\begin{equation}\label{eq: fluid-stress-tensor}
    \tensor{T}{_a_b} = \rho(r) \tensor{u}{_a}\tensor{u}{_b} + P(r) (\tensor{g}{_a_b} + \tensor{u}{_a}\tensor{u}{_b}),
\end{equation}
where
\begin{equation}\label{eq: four-velocity-tov}
    \tensor{u}{^a} = e^{-\phi(r)} \tensor{\qty(\pdv{t})}{^a}.
\end{equation}
Notice that \(\tensor{u}{^a}\) is a normalized timelike vector which is everywhere parallel to the stationary Killing vector field. 

The problem of solving the Einstein field equations with these ansätze was originally considered in these coordinates by Tolman, Oppenheimer, and Volkoff \cite{tolman1934EffectInhomogeneityCosmological,tolman1934RelativityThermodynamicsCosmology,tolman1939StaticSolutionsEinstein,oppenheimer1939MassiveNeutronCores} and is reviewed in standard textbooks in general relativity \cite{wald1984GeneralRelativity,choquet-bruhat2015IntroductionGeneralRelativity,chrusciel2019ElementsGeneralRelativity,misner2017Gravitation} and stellar structure \cite{glendenning1997CompactStarsNuclear,zeldovich1996StarsRelativity}. It can be shown that the solution inside the star is given by the line element
\begin{equation}
    \dd{s}^2_< = - e^{2\phi(r)}\dd{t}^2 + \qty(1 - \frac{2m(r)}{r})^{-1}\dd{r}^2 + r^2 \dd{\Omega}^2,
    \label{eq: staticsphericallineel}
\end{equation}
where the subscript \(<\) in \(\dd{s}^2_<\) indicates this is the interior solution. The functions \(\phi\) and \(m\) are determined by means of the differential equations
\begin{align}
    \dv{m}{r} &= 4 \pi \rho(r) r^2, \label{eq: dm-dr} \\
    \dv{\phi}{r} &= \frac{4 \pi P(r) r^3 + m(r)}{r[r - 2m(r)]}, \label{eq: dphi-dr} \\
    \dv{P}{r} &= -(\rho + P) \frac{4 \pi P(r) r^3 + m(r)}{r[r - 2m(r)]}. \label{eq: tov}
\end{align}
Equation (\ref{eq: tov}) is known as the Tolman--Oppenheimer--Volkoff (TOV) equation.

The system of differential equations given by Eqs. (\ref{eq: dm-dr}) to (\ref{eq: tov}) is underdetermined, since it involves four unknowns (\(\rho\), \(P\), \(m\), and \(\phi\)), but only three equations. In most applications, the remaining equation is supplied in the form of a thermodynamical equation of state relating \(P\) and \(\rho\). The procedure to solve the TOV system is then to integrate Eqs. (\ref{eq: dm-dr}) and (\ref{eq: tov}) from the center to the border of the star with the aid of the equation of state. This is often done numerically by providing the initial conditions \(m(0) = 0\) and \(P(0) = P_0\) \footnote{It is also possible to provide \(\rho(0) = \rho_0\) instead of \(P(0) = P_0\). The condition \(m(0) = 0\) is used to avoid the presence of a physical singularity at the origin (this singularity would be analogous to the singularity at the origin of Schwarzschild spacetime). Notice also that in practical numerical computations, one does not give the initial condition precisely at \(r = 0\), where the right-hand side of the TOV equation is indeterminate at best. Rather, one introduces a small cutoff \(r_0 > 0\) and provides initial conditions at \(r_0\). Here, we use this method with initial conditions of the form \(\rho(r_0) = \rho_0\) and \(m(r_0) = \frac{4}{3}\pi \rho_0 r_0^3\).}. One then proceeds with the integration until the boundary of the star is reached at \(r=R\), defined by the condition that 
\begin{equation}
    P(R) = 0. \label{eq: P-at-R}
\end{equation}
Then, it is usually assumed that the star ends at \(R\) and \(\rho\) and \(P\) are understood to vanish for \(r > R\). In this outer region, the metric is given by the standard Schwarzschild metric with mass parameter
\begin{equation}
    M = m(R). \label{eq: m-at-R}
\end{equation}

Once \(\rho\), \(P\), and \(m\) are known, one can solve Eq. (\ref{eq: dphi-dr}) to obtain \(\phi\). The boundary condition is then that the metric of the spacetime is continuous across the stellar boundary at \(r = R\) and translates into the condition \footnote{In numerical computations, it may be more efficient to solve for \(\phi\) while solving for the remaining variables. This can be done by imposing any initial condition for \(\phi\) at the center of the star, and then using the fact that Eq. (\ref{eq: dphi-dr}) is linear in \(\phi\) to subtract the value obtained at the boundary and add the value desired at the boundary, hence fixing the appropriate boundary condition.}
\begin{equation}
    e^{2\phi(R)} = \qty(1 - \frac{2M}{R}). \label{eq: phi-at-R}
\end{equation}

The boundary conditions at \(R\), Eqs. (\ref{eq: P-at-R}) to (\ref{eq: phi-at-R}), can be deduced from the Israel junction conditions \cite{israel1966SingularHypersurfacesThin,*israel1967SingularHypersurfacesThin} with the additional assumption that the stress-energy tensor is non-singular at the stellar surface. Stars with discontinuous pressure can be described by allowing a thin matter shell at \(r=R\), in which case the boundary conditions are relaxed as well.

In most applications in astrophysics, one is interested in a particular model for a star and wants to understand how this model gravitates. For example, one can consider an equation of state modeling a neutron star and use it to solve the TOV system. Nevertheless, this approach is restrictive in the study of negative-mass stars, whose equation of state is not predetermined. Instead of using only equations of state to fix a stellar solution, one can impose that the star should have a particular energy density profile. In other words, one imposes by hand that the function \(\rho(r)\) is given by an ansatz. With this function fixed, one then solves the TOV system with the condition that the star ends at some predefined radius \(r=R\). This method was employed in Ref. \cite{novikov2018StarsCreatingGravitational}. 

\subsection{Example: The Schwarzschild Star}
    A first example of a negative-mass star is based on a star of constant density profile \(\rho(r) = \rho_0\). This is known as a Schwarzschild star \cite{schwarzschild1916UberGravitationsfeldKugel}. Under this ansatz, the TOV system can be handled analytically. The mass parameter is given by
    \begin{equation}
        m(r) = \frac{4 \pi \rho_0 r^3}{3},
    \end{equation}
    and the boundary condition \(m(R) = M\) fixes the value of \(M\) according to
    \begin{equation}
        \rho_0 = \frac{3 M}{4 \pi R^3}.
    \end{equation}
    We are interested in the case with \(M < 0\) (and hence \(\rho_0 < 0\)).

    The solution for the pressure is the same as in the positive-mass case. It is given by 
    \begin{equation}\label{eq: schwarzschild-pressure}
        P(r) = \rho_0 \qty[\frac{\sqrt{1 - \frac{2M}{R}} - \sqrt{1 - \frac{2M r^2}{R^3}}}{\sqrt{1 - \frac{2M r^2}{R^3}} - 3 \sqrt{1 - \frac{2M}{R}}}].
    \end{equation}
    For positive values of \(M\), the pressure is everywhere non-negative and finite if \(M/R < 4/9\), which holds for a wide class of equations of state and is known as the Buchdahl limit \cite{buchdahl1959GeneralRelativisticFluid}. Notice that the pressure is still everywhere non-negative and finite even if the star has negative mass. 

    Having found a star with negative mass, we address two important questions in the next two sections:
    \begin{itemize}
        \item[i.] Is the matter composing this star allowed by quantum field theory?
        \item[ii.] Is this configuration stable?
    \end{itemize}

\section{Energy Conditions}\label{sec: energy-conditions}
It is not surprising that the Einstein equations allow for negative-mass stars. After all, any Lorentzian geometry is a solution to the Einstein equations as long as one chooses the right stress-energy-momentum tensor. 

To put constraints on the physical reasonableness of a given solution, one often imposes energy conditions \cite{curiel2017PrimerEnergyConditions,fewster2017QuantumEnergyInequalities,kontou2020EnergyConditionsGeneral,martin-moruno2017ClassicalSemiclassicalEnergy,witten2020LightRaysSingularities}. These are restrictions on the stress tensor that enforce it to have some interesting properties, typically understood as energy being positive in a suitable sense. The weakest of the classical energy conditions is the null energy condition (NEC), reviewed for example in Refs. \cite{curiel2017PrimerEnergyConditions,kontou2020EnergyConditionsGeneral,martin-moruno2017ClassicalSemiclassicalEnergy,witten2020LightRaysSingularities}. It states that for all null vectors \(\tensor{k}{^a}\) the stress tensor satisfies the bound
\begin{equation}
    \tensor{T}{_a_b}\tensor{k}{^a}\tensor{k}{^b} \geq 0.
\end{equation}
In the particular case of interest, a perfect fluid, this condition states that \(\rho + P \geq 0\). 

The NEC is not the only energy condition of classical interest. It is important to point out the dominant energy condition (DEC), which states that for all future-directed causal vectors \(\tensor{\xi}{^a}\) it holds that
\begin{equation}
    -\tensor{T}{^a_b}\tensor{\xi}{^b}
\end{equation}
is causal and future-directed. This condition essentially requires all observers to see causal, future-directed energy fluxes. For a perfect fluid, the DEC states that \(\rho \geq \abs{P}\). Notice that the DEC implies the NEC. The interest in this condition is that it was used by Schoen, Yau, and Witten to obtain the first positive-mass theorems in general relativity \cite{schoen1979ProofPositiveMass,schoen1981ProofPositiveMass,witten1981NewProofPositive}. Hence, within suitable assumptions, the dominant energy condition ensures that the spacetime has a positive mass.

All known forms of classical matter satisfy the NEC \footnote{It is known that some forms of classical matter, such as a non-minimally coupled scalar field, violate the NEC---see, \eg, Refs. \cite{fewster2017QuantumEnergyInequalities,kontou2020EnergyConditionsGeneral}. Nevertheless, we are restricting our focus to the classical fields associated with the standard model, and we assume general relativity to be the theory describing gravitational interactions. In particular, once we assume gravity to be described by general relativity, we also assume that all standard model fields are minimally coupled. In any case, this does not affect the main points of our discussion.}, and even the DEC. Does our star?

The positive mass theorems due to Schoen, Yau, and Witten already imply that the star cannot satisfy the DEC. Using Eq. (\ref{eq: schwarzschild-pressure}), one can promptly show that
\begin{equation}
    \rho(r) + P(r) = \rho_0 \qty[\frac{2 \sqrt{1 - \frac{2M}{R}}}{3 \sqrt{1 - \frac{2M}{R}} - \sqrt{1 - \frac{2Mr^2}{R^3}}}].
\end{equation}
The term in brackets is always positive for \(0 \leq r \leq R\). Since \(\rho_0 < 0\), we conclude that \(\rho(r) + P(r) < 0\). Hence, the star cannot be built from any known form of classical matter.

This is not surprising. After all, we started this discussion by considering the Casimir effect, which is inherently quantum. There is no \emph{a priori} reason to expect that classical configurations of matter could lead to negative energy solutions, but, in principle, one may question whether quantum configurations can lead to macroscopic negative masses since they are known to allow local violations of the positivity of energy---in fact, all classical energy conditions, including the NEC, can be easily violated by exploiting quantum effects~\cite{fewster2012LecturesQuantumEnergy,fewster2017QuantumEnergyInequalities}. See App. \ref{app: nec-violations-qft}. 

A quantum analog of the classical energy conditions is the averaged null energy condition (ANEC). In the conventions of Ref.~\cite{kontou2020EnergyConditionsGeneral}, the ANEC states that, for any inextendible null geodesic \(\gamma\) with affine parameter \(\lambda\),
\begin{equation}\label{eq: anec-integral}
    \int_{-\infty}^{+\infty} \tensor{T}{_a_b}(\gamma(\lambda)) \tensor{\dot{\gamma}}{^a}(\lambda) \tensor{\dot{\gamma}}{^b}(\lambda) \dd{\lambda} \geq 0
\end{equation}
whenever the integral is absolutely convergent. \(\tensor{T}{_a_b}\) is now understood as the renormalized expectation value of the stress tensor. The ANEC states that the NEC can be locally violated, but it still holds on average---an idea originally due to Tipler \cite{tipler1978EnergyConditionsSpacetime}. An even weaker condition is the achronal ANEC (AANEC), which states that the ANEC must hold only for achronal geodesics, \ie, for null geodesics such that no pair of its points can be connected by a timelike curve. The restriction of the ANEC to achronal geodesics was originally considered by Wald and Yurtsever \cite{wald1991GeneralProofAveraged}, and it is interesting because there is evidence that the AANEC may be true in all physically reasonable circumstances and that it could be implied by a fundamental principle of full quantum gravity---see Refs. \cite{kontou2020EnergyConditionsGeneral,wall2010ProvingAchronalAveraged}. Furthermore, the ANEC is equivalent to the quantum null energy condition \cite{bousso2016ProofQuantumNull,ceyhan2020RecoveringQNECANEC}, which is interesting in its own right and is implied in turn by an interesting conjecture pertaining to the quantum focusing of congruences of geodesics \cite{bousso2016QuantumFocusingConjecture}.

The fact that the ANEC demands that energy be positive on average is remarkably similar to how Costa and Matsas dealt with the Casimir system: while the energy density could be negative within the plates, it was positive somewhere else, and the net energy was positive. Hence, the ANEC appears to provide an interesting criterion for establishing the net positivity of mass. Instead of focusing on forcing the energy density to be positive everywhere, we simply require that negative energy densities be compensated elsewhere. Furthermore, it should be mentioned that the ANEC does hold in the Casimir system \cite{fewster2007AveragedNullEnergy,graham2005PlateHoleObeys}.

Assuming that every AANEC integral is absolutely convergent, this intuition is correct. This is a corollary of works by Borde \cite{borde1987GeodesicFocusingEnergy} and Penrose, Sorkin, and Woolgar \cite{penrose1993PositiveMassTheorem}, which we briefly review below. 

Borde's theorem is a statement about the occurrence of conjugate points in geodesics satisfying a weaker form of an averaged energy condition. His original result, focusing theorem 2 in Ref. \cite{borde1987GeodesicFocusingEnergy}, applies to any causal geodesic, but we will focus on the case of null geodesics. Borde's theorem is the following. 

\begin{theorem}[Borde \cite{borde1987GeodesicFocusingEnergy}]
    Consider a complete null geodesic with tangent vector \(\tensor{k}{^a}\) and affine parameter \(\lambda\). For each \(\epsilon > 0\), assume there is some \(b > 0\) such that for any \(\lambda_1 < \lambda_2\) there is a pair of intervals \(I^- < \lambda_1\) and \(I^+ > \lambda_2\)---each with length larger than or equal to \(b\)---such that
    \begin{equation}
        \int_{\lambda'}^{\lambda''} \tensor{R}{_a_b}\tensor{k}{^a}\tensor{k}{^b} \dd{\lambda} \geq - \epsilon\qc \forall{} \lambda' \in I^-, \forall{} \lambda'' \in I^+.
    \end{equation}
    If \(\tensor{k}{^c}\tensor{k}{^d}\tensor{k}{_[_a}\tensor{R}{_b_]_c_d_[_e}\tensor{k}{_f_]} \neq 0\) at some point on \(\gamma\), then \(\gamma\) contains a pair of conjugate points.
\end{theorem}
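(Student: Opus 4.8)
The plan is to run the classical null geodesic focusing argument built on the Raychaudhuri equation, and to locate precisely where its one use of the pointwise null convergence condition can be traded for the interval-averaged bound in the hypothesis. I would introduce a congruence of null geodesics containing \(\gamma\), chosen hypersurface-orthogonal near \(\gamma\) so that its twist vanishes; its expansion \(\theta\) and shear \(\sigma_{ab}\) then satisfy
\begin{equation}
    \dv{\theta}{\lambda} = -\frac{1}{2}\theta^2 - \sigma_{ab}\sigma^{ab} - R_{ab}k^a k^b .
\end{equation}
Argue by contradiction: assume the complete geodesic \(\gamma\) carries no pair of mutually conjugate points. Then, tracking a point-source congruence, \(\theta\) is smooth and finite on all of \(\mathbb{R}\), since a point conjugate to the source is precisely where such a \(\theta\) diverges to \(-\infty\).

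Next I would extract the ``focusing seed'' from the non-degeneracy hypothesis \(k^c k^d k_{[a}R_{b]cd[e}k_{f]} \neq 0\) at some \(\gamma(\lambda_0)\). This condition says the tidal curvature transverse to \(k^a\) is nonzero there, and it enters as the inhomogeneous term \(\hat C_{ab}\) of the shear transport equation \(\dot\sigma_{ab} = -\theta\sigma_{ab} - \hat C_{ab}\); hence a congruence with \(\sigma_{ab}(\lambda_0) = 0\) has \(\sigma_{ab} \neq 0\) immediately afterward, so \(\int \sigma_{ab}\sigma^{ab}\,\dd{\lambda}\) over any interval containing \(\lambda_0\) exceeds a fixed \(\delta > 0\). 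I would fix the free parameter \(\epsilon\) in the hypothesis to satisfy \(\epsilon < \delta\) from the outset.

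The heart of the proof is to show the averaged bound cannot ``undo'' this seed. First, \(\theta\) is bounded: if \(\theta\) ever grew too negative, then \(\frac{\dd}{\dd{\lambda}}(1/\theta) = \tfrac12 + (\sigma_{ab}\sigma^{ab} + R_{ab}k^a k^b)/\theta^2\) would---after splitting \(R_{ab}k^ak^b\) into positive and negative parts and using \(\int R_{ab}k^ak^b\,\dd{\lambda} \geq -\epsilon\) to render the negative part a bounded-total-mass perturbation, negligible against \(\tfrac12\) on the region where \(|\theta|\) stays bounded below---be positive on average, driving \(1/\theta\) up to \(0^-\) within finite affine parameter, i.e.\ \(\theta \to -\infty\), a conjugate point; the time-reversed argument rules out \(\theta\) growing too positive. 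With \(\theta\) bounded, integrating the Raychaudhuri equation between \(\lambda' \in I^-\) and \(\lambda'' \in I^+\) yields
\begin{equation}
    \int_{\lambda'}^{\lambda''} \theta^2\,\dd{\lambda} = 2\bigl[\theta(\lambda') - \theta(\lambda'')\bigr] - 2\int_{\lambda'}^{\lambda''}\sigma_{ab}\sigma^{ab}\,\dd{\lambda} - 2\int_{\lambda'}^{\lambda''} R_{ab}k^a k^b\,\dd{\lambda} \leq 4\sup|\theta| + 2\epsilon ,
\end{equation}
and since the hypothesis lets \(I^\pm\) sit arbitrarily far out, \(\int_{-\infty}^{\infty}\theta^2\,\dd{\lambda} < \infty\) and likewise \(\int \sigma_{ab}\sigma^{ab}\,\dd{\lambda} < \infty\). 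Then \(\theta\) cannot remain bounded away from \(0\); but near \(\theta = 0\) one has \(\frac{\dd}{\dd{\lambda}}(1/\theta) \approx \tfrac12 > 0\), so \(\theta\) is again forced through \(0\) to \(\pm\infty\) unless \(\theta \equiv 0\). Finally \(\theta \equiv 0\) in the Raychaudhuri equation forces \(\sigma_{ab}\sigma^{ab} + R_{ab}k^a k^b \equiv 0\), which, fed back through the shear equation, is incompatible with the seed \(\delta > \epsilon\). This contradiction establishes the conjugate pair.

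The step I expect to be the main obstacle is the boundedness of \(\theta\): the averaged hypothesis controls only \(\int R_{ab}k^ak^b\,\dd{\lambda}\), whereas the blow-up argument for \(1/\theta\) involves the \(\theta^{-2}\)-weighted Ricci integral, so one must combine the a priori \(|\theta|\)-bound with a careful quantitative estimate showing that a negative Ricci part of small total mass \(\epsilon\), spread over a long interval on which \(|\theta|\) is bounded below, is negligible against the \(\tfrac12\) coming from \(-\tfrac12\theta^2\). The two-sided interval structure in the hypothesis is exactly what permits this estimate in both affine directions and what supplies the ``arbitrarily long interval'' leverage behind the \(L^2\) bound. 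One must also keep track throughout that ``a pair of conjugate points'' means two points of \(\gamma\) conjugate to one another, so the bookkeeping ultimately concerns producing two successive focal points of a point-source congruence rather than a single divergence of the expansion.
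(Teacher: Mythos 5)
First, a point of comparison: the paper does not prove this theorem at all. It is imported verbatim from Borde's 1987 paper, and the text only describes the idea in one paragraph (manipulate the Raychaudhuri equation; the curvature condition prevents defocusing while the generic condition seeds the focusing). Your strategy is consistent with that description, but as a standalone proof it has genuine gaps.

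The most serious one is the boundedness-of-\(\theta\) step. You split \(\tensor{R}{_a_b}\tensor{k}{^a}\tensor{k}{^b}\) into positive and negative parts and assert that the hypothesis renders the negative part a ``bounded-total-mass perturbation.'' It does not: the hypothesis only bounds the \emph{net} integral \(\int_{\lambda'}^{\lambda''}\tensor{R}{_a_b}\tensor{k}{^a}\tensor{k}{^b}\dd{\lambda}\) from below, and only for \(\lambda'\in I^-\), \(\lambda''\in I^+\). The integral of the negative part alone can be arbitrarily large provided it is compensated by an equally large positive part, and the hypothesis gives no control over subintervals that do not stretch from \(I^-\) to \(I^+\). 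Moreover, in the equation for \(1/\theta\) the two parts are weighted by \(\theta^{-2}\), so a compensating positive spike located where \(\abs{\theta}\) is large does nothing to offset a negative spike located where \(\abs{\theta}\) is small; the cancellation you rely on is destroyed by the weighting. This is precisely the difficulty Borde's actual argument is engineered to overcome (his lemmas on the scalar Riccati inequality \(\theta'\leq -\theta^2/2 - f\) under interval-averaged bounds on \(\int f\)), and without it your \(L^2\) bound on \(\theta\), and hence the endgame, does not follow. There are also smaller gaps: a point-source congruence is singular at its vertex, so ``\(\theta\) smooth and finite on all of \(\mathbb{R}\)'' requires the standard limiting-congruence construction rather than a single point source; the generic condition \(\tensor{k}{^c}\tensor{k}{^d}\tensor{k}{_[_a}\tensor{R}{_b_]_c_d_[_e}\tensor{k}{_f_]}\neq 0\) can be realized purely through the trace (Ricci) part of the transverse tidal tensor with vanishing Weyl part, in which case your shear seed is absent and the focusing must be driven through the \(\theta\) equation directly; and your ``fixed \(\delta>0\) for any interval containing \(\lambda_0\)'' must be restated for intervals containing a fixed neighborhood of \(\lambda_0\), with \(\delta\) depending on the congruence, which must therefore be constructed before \(\epsilon\) is chosen rather than ``from the outset.''
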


This is a generalization of an earlier result due to Hawking and Penrose that establishes the occurrence of conjugate points based on the validity of the null convergence condition, which states that \(\tensor{R}{_a_b}\tensor{k}{^a}\tensor{k}{^b} \geq 0\) for all null vectors \(\tensor{k}{^a}\) \cite{hawking1970SingularitiesGravitationalCollapse}. The basic idea of both proofs is to use the curvature condition to ensure that the geodesics do not defocus, while the condition \(\tensor{k}{^c}\tensor{k}{^d}\tensor{k}{_[_a}\tensor{R}{_b_]_c_d_[_e}\tensor{k}{_f_]} \neq 0\) ensures that the geodesic interacts with curvature at some point for the focusing to start. The main technique in the proof consists of manipulating the Raychaudhuri equation \cite{raychaudhuri1955RelativisticCosmology}. 

Under the hypothesis of absolute convergence of the integral $\int_{-\infty}^\infty R_{ab}k^ak^b\dd\lambda$, we can show the following intuitive corollary.

\begin{corollary}
    Consider a complete null geodesic with tangent vector \(\tensor{k}{^a}\) and affine parameter \(\lambda\). Suppose that
    \begin{equation}
        \int_{-\infty}^{+\infty} \tensor{R}{_a_b}\tensor{k}{^a}\tensor{k}{^b} \dd{\lambda} \geq 0
    \end{equation}
    and that the integral converges absolutely. If it holds that \(\tensor{k}{^c}\tensor{k}{^d}\tensor{k}{_[_a}\tensor{R}{_b_]_c_d_[_e}\tensor{k}{_f_]} \neq 0\) at some point on \(\gamma\), then \(\gamma\) must contain a pair of conjugate points.
\end{corollary}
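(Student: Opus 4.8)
The plan is to show that the hypotheses of this corollary imply the (weaker, averaged) hypothesis of Borde's theorem stated above, and then invoke that theorem directly. Write $f(\lambda) \vcentcolon= \tensor{R}{_a_b}\tensor{k}{^a}\tensor{k}{^b}$ evaluated along $\gamma$, and let $\mathcal{I} \vcentcolon= \int_{-\infty}^{+\infty} f(\lambda)\,\dd{\lambda} \geq 0$. Absolute convergence means $\int_{-\infty}^{+\infty}\abs{f(\lambda)}\,\dd{\lambda} < \infty$, so the tails $\int_{-\infty}^{-L}\abs{f}\,\dd{\lambda}$ and $\int_{L}^{+\infty}\abs{f}\,\dd{\lambda}$ both tend to $0$ as $L \to +\infty$; this is essentially the only analytic input required.

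First I would fix $\epsilon > 0$ and choose $L = L(\epsilon) > 0$ with $\int_{-\infty}^{-L}\abs{f}\,\dd{\lambda} < \epsilon/2$ and $\int_{L}^{+\infty}\abs{f}\,\dd{\lambda} < \epsilon/2$, and then set $b = 1$ (any fixed positive number works, since the estimate below holds on the entire half-lines beyond $\mp L$, independently of $\lambda_1,\lambda_2$). Given any $\lambda_1 < \lambda_2$, put $\mu^- = \min(\lambda_1,-L) - 1$, $\mu^+ = \max(\lambda_2,L) + 1$, and take $I^- = [\mu^- - 1,\mu^-]$ and $I^+ = [\mu^+,\mu^+ + 1]$; by construction $I^- < \lambda_1$, $I^+ > \lambda_2$, both have length $b = 1$, every point of $I^-$ is $\leq -L$, and every point of $I^+$ is $\geq L$. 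The tail estimate then follows: for $\lambda' \in I^-$ and $\lambda'' \in I^+$, split $\mathcal{I} = \int_{-\infty}^{\lambda'}f + \int_{\lambda'}^{\lambda''}f + \int_{\lambda''}^{+\infty}f$ (licit by absolute convergence), giving
\[
    \int_{\lambda'}^{\lambda''}\tensor{R}{_a_b}\tensor{k}{^a}\tensor{k}{^b}\,\dd{\lambda} \;=\; \mathcal{I} - \int_{-\infty}^{\lambda'}f - \int_{\lambda''}^{+\infty}f \;\geq\; \mathcal{I} - \int_{-\infty}^{-L}\abs{f} - \int_{L}^{+\infty}\abs{f} \;\geq\; -\epsilon .
\]
Hence the hypothesis of Borde's theorem is met (with $b$ even independent of $\lambda_1,\lambda_2$), and since $\tensor{k}{^c}\tensor{k}{^d}\tensor{k}{_[_a}\tensor{R}{_b_]_c_d_[_e}\tensor{k}{_f_]} \neq 0$ somewhere on $\gamma$ by assumption, Borde's theorem yields a pair of conjugate points on $\gamma$.

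I do not anticipate a genuine obstacle; the only point demanding care is the bookkeeping that turns ``the tails of $\int \tensor{R}{_a_b}\tensor{k}{^a}\tensor{k}{^b}\,\dd{\lambda}$ are uniformly small'' into Borde's precise requirement of intervals $I^\pm$ of a common length $b$ that work for \emph{all} $\lambda_1 < \lambda_2$ at once. Absolute convergence is exactly what makes this go through: it lets me bound $\abs{\int_{-\infty}^{\lambda'}f}$ and $\abs{\int_{\lambda''}^{+\infty}f}$ by $\int_{-\infty}^{-L}\abs{f}$ and $\int_{L}^{+\infty}\abs{f}$ respectively, whereas mere conditional convergence would not deliver a uniform tail bound and the reduction would fail.
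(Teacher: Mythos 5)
Your proof is correct and takes essentially the same route as the paper, which merely remarks that absolute convergence makes the tails of the integrand negligible and leaves the reduction to Borde's theorem implicit; you have simply supplied the explicit $\epsilon$--$L$ bookkeeping. The key estimate $\int_{\lambda'}^{\lambda''} f \geq \mathcal{I} - \int_{-\infty}^{-L}\abs{f} - \int_{L}^{+\infty}\abs{f} \geq -\epsilon$ and the choice of unit-length intervals beyond $\mp L$ are exactly what is needed.
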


The assumption of absolute convergence ensures that the ``tails'' of the integrand as \(\abs{\lambda} \to +\infty\) are negligible compared to the ``bulk'' contributions. Notice that if the Einstein field equations hold (as we assume they do), then the condition on the integral of \(\tensor{R}{_a_b}\tensor{k}{^a}\tensor{k}{^b}\) is equivalent to the ANEC integral being non-negative. Furthermore, it is known that achronal null geodesics cannot have conjugate points (Proposition 4.5.12 in Ref. \cite{hawking1973LargeScaleStructure}). Hence, we get the following result.

\begin{corollary}\label{corol: final-result-borde}
    Consider a complete null geodesic with tangent vector \(\tensor{k}{^a}\) and affine parameter \(\lambda\). Suppose that the Einstein equations and the ANEC hold. If \(\tensor{k}{^c}\tensor{k}{^d}\tensor{k}{_[_a}\tensor{R}{_b_]_c_d_[_e}\tensor{k}{_f_]} \neq 0\) at some point on \(\gamma\), then \(\gamma\) is not achronal.
\end{corollary}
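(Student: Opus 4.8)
The plan is to argue by contradiction, stringing together the preceding (unlabeled) corollary, the Einstein equations, and a classical fact about achronal null geodesics. Suppose $\gamma$ were achronal. Since $\gamma$ is complete, it is in particular inextendible, so the ANEC applies to it; and since $\gamma$ is achronal, its ANEC integral is one of the AANEC integrals, which we are assuming throughout to be absolutely convergent. Hence the ANEC \eqref{eq: anec-integral} delivers $\int_{-\infty}^{+\infty} T_{ab}\dot\gamma^a\dot\gamma^b\,\dd\lambda \geq 0$, with the integral absolutely convergent.

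Next I would pass from the stress tensor to the Ricci tensor. Contracting the Einstein equations $R_{ab} - \tfrac{1}{2}R\,g_{ab} + \Lambda g_{ab} = 8\pi T_{ab}$ with $\dot\gamma^a\dot\gamma^b$ and using $g_{ab}\dot\gamma^a\dot\gamma^b = 0$, the trace and cosmological-constant terms drop out, leaving $R_{ab}\dot\gamma^a\dot\gamma^b = 8\pi T_{ab}\dot\gamma^a\dot\gamma^b$ pointwise along $\gamma$. Therefore $\int_{-\infty}^{+\infty} R_{ab}\dot\gamma^a\dot\gamma^b\,\dd\lambda \geq 0$ and this integral is absolutely convergent as well. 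The remaining hypothesis $k^ck^dk_{[a}R_{b]cd[e}k_{f]}\neq 0$ at some point of $\gamma$ is exactly the one required, so the preceding corollary applies and yields that $\gamma$ contains a pair of conjugate points.

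Finally I would invoke Proposition 4.5.12 of Ref.~\cite{hawking1973LargeScaleStructure}, according to which an achronal null geodesic contains no pair of conjugate points. This contradicts the conclusion of the previous step, so the assumption that $\gamma$ is achronal is untenable, and $\gamma$ is not achronal, as claimed.

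There is no deep obstacle here—the statement is a corollary by construction—but the step requiring the most care is the bookkeeping of absolute convergence. Both the ANEC (which only asserts non-negativity when the integrand is absolutely integrable) and the preceding corollary have absolute convergence as a hypothesis, so one must make explicit the standing assumption that, for achronal geodesics, the relevant integral converges absolutely, and observe that this coincides with the AANEC-convergence hypothesis already adopted in the discussion. The only other point worth a sentence is verifying that the $\tfrac{1}{2}R\,g_{ab}$ and $\Lambda g_{ab}$ terms genuinely vanish upon contraction with a null vector, so that the ANEC—a statement about $T_{ab}$—is indeed equivalent to non-negativity of $\int R_{ab}\dot\gamma^a\dot\gamma^b\,\dd\lambda$.
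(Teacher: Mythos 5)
Your proposal is correct and follows essentially the same route as the paper: translate the ANEC into non-negativity of \(\int R_{ab}\dot\gamma^a\dot\gamma^b\,\dd\lambda\) via the Einstein equations contracted with the null tangent, apply the preceding corollary to obtain a pair of conjugate points, and conclude via Proposition 4.5.12 of Ref.~\cite{hawking1973LargeScaleStructure} that \(\gamma\) cannot be achronal. Your explicit attention to the absolute-convergence hypothesis is a point the paper handles only implicitly (by its standing assumption that the AANEC integrals converge absolutely), but it does not change the argument.
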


This was known to Penrose, Sorkin, and Woolgar at the time of writing of Ref. \cite{penrose1993PositiveMassTheorem}. The result they established is then a theorem concerning the positivity of mass based on the existence of achronal geodesics in the spacetime. 

The basic idea behind the argument runs as follows. The Shapiro time delay \cite{shapiro1964FourthTestGeneral} shows that null geodesics that pass closer to a positive mass are delayed (in coordinate time) relative to geodesics that pass farther away. Analogously, one concludes that geodesics passing close to a negative mass will be ``faster'' (in coordinate time) than geodesics that pass far from the negative mass. With this in mind, Penrose, Sorkin, and Woolgar constructed the fastest causal curve from past null infinity to future null infinity of the conformal completion of an asymptotically flat spacetime and showed this is an achronal geodesic. This geodesic can pass either through the interior of the spacetime or avoid it completely by traveling only through the conformal boundaries of the spacetime. Due to the Shapiro time delay, if the spacetime has negative mass, this geodesic must enter the spacetime. They then conclude that if the spacetime has no complete achronal null geodesics, the spacetime mass must be non-negative. 

The precise statement is as follows \cite{chrusciel2004PoorManPositive,cameron2023PositivityMassHigher}:

\begin{definition}[Uniformly Schwarzschildean]
    Let \(M \in \mathbb{R}\) and let \(B \subseteq \mathbb{R}^{3}\) be a ball with radius \(R > 2 M\). Consider the metric \(\tensor*{g}{^M_a_b}\) on \(\mathbb{R} \times (\mathbb{R}^3 \setminus B)\) with line element
    \begin{equation}
        \dd{s}_M^2 = - \qty(1 - \frac{2M}{r}) \dd{t}^2 + \qty(1 - \frac{2M}{r})^{-1} \dd{r}^2 + r^2 \dd{\Omega}^2. \label{eq: sch-M-sph}
    \end{equation}
    A metric \(\tensor{g}{_a_b}\) on \(\mathbb{R} \times (\mathbb{R}^3 \setminus B)\) is said to be uniformly Schwarzschildean if there is an \(M \in \mathbb{R}\) such that, in the coordinates of Eq. (\ref{eq: sch-M-sph}), it holds that
    \begin{gather}
        \tensor{g}{_\mu_\nu} - \tensor*{g}{^M_\mu_\nu} = o\qty(\abs{M} r^{-1}), \\
        \tensor{\partial}{_\mu}(\tensor{g}{_\nu_\rho} - \tensor*{g}{^M_\nu_\rho}) = o\qty(\abs{M} r^{-2}).
    \end{gather}
    In the case \(M = 0\), the above notation is understood to mean the metric is flat for \(r > R\), where \(R \geq 0\) is some constant.
\end{definition}

\begin{theorem}[Penrose--Sorkin--Woolgar \cite{penrose1993PositiveMassTheorem}]
    Consider a spacetime \((M,g)\) taken to be asymptotically flat at null and spatial infinity and uniformly Schwarzschildean. Let \(\mathcal{D} = I^-(\mathcal{I}^+) \cap I^+(\mathcal{I}^-)\) be the domain of outer communications of the spacetime. Assume that \(\mathcal{D} \cup \mathcal{I}^+ \cup \mathcal{I}^-\) is globally hyperbolic as a subset of the conformal extension of \(M\). If there are no achronal null geodesics connecting \(\mathcal{I}^-\) to \(\mathcal{I}^+\) and passing through \(\mathcal{D}\), then it follows that the ADM four-momentum of \(M\) is future-causal.  
\end{theorem}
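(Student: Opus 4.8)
The plan is to prove the contrapositive: assuming the ADM four-momentum is \emph{not} future-causal, I will exhibit an achronal null geodesic running from \(\mathcal{I}^-\) to \(\mathcal{I}^+\) through \(\mathcal{D}\). Because the metric is uniformly Schwarzschildean, in the asymptotic chart of Eq.~(\ref{eq: sch-M-sph}) the spatial ADM momentum vanishes and the ADM energy equals the parameter \(M\); hence ``four-momentum not future-causal'' is just the statement \(M<0\), which is what I would assume from here on. (More invariantly, failure of future-causality means the Bondi/ADM energy is negative in some asymptotic Lorentz frame, and one repeats the argument below with the null generators adapted to that frame.)

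First I would work in the conformal completion \(\bar{M}\) and set \(\mathcal{N} = \mathcal{D}\cup\mathcal{I}^+\cup\mathcal{I}^-\), which is globally hyperbolic by hypothesis. Fix a point \(p\) far down a generator of \(\mathcal{I}^-\). By global hyperbolicity, \(J^+(p;\mathcal{N})\) is closed and \(\partial J^+(p;\mathcal{N})\setminus\{p\}\) is an achronal \(C^0\) hypersurface ruled by null geodesics, each of which, traced to the past, either ends at \(p\) or is past-inextendible in \(\mathcal{N}\) (standard causal theory; cf.\ Refs.~\cite{hawking1973LargeScaleStructure,wald1984GeneralRelativity}). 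For \(p\) sufficiently early, \(J^+(p;\mathcal{N})\) engulfs a nonempty piece of \(\mathcal{I}^+\), so \(\partial J^+(p;\mathcal{N})\) meets \(\mathcal{I}^+\); choosing a point \(q\) in that intersection and following the generator of \(\partial J^+(p;\mathcal{N})\) through \(q\) to the past yields a null geodesic \(\gamma\) from \(p\) to \(q\) lying in \(\partial J^+(p;\mathcal{N})\), hence achronal in \(\mathcal{N}\). The only remaining point is to show \(\gamma\) is not confined to a neighborhood of the conformal boundary, i.e.\ that it genuinely enters \(\mathcal{D}\).

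That is where the Shapiro time delay \cite{shapiro1964FourthTestGeneral} does the work. I would compare, among causal curves from \(p\) to \(\mathcal{I}^+\), a ``grazing'' family that hugs the conformal boundary (running out toward spatial infinity and back along \(\mathcal{I}^+\)) against a family that dips inward through \(\mathcal{D}\) at impact parameter \(b\). In exact Schwarzschild \(g^M_{\mu\nu}\), a null ray from radius \(r_1\) near \(\mathcal{I}^-\) to radius \(r_2\) near \(\mathcal{I}^+\) at impact parameter \(b\) arrives later than the flat-space value by \(\approx 2M\ln(4 r_1 r_2/b^2)\); for \(M<0\) this excess is negative and diverges to \(-\infty\) as \(b\) decreases toward the stellar radius \(R\), so a deep-passing ray beats the grazing ones by an unbounded amount. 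The uniformly Schwarzschildean falloff, \(g_{\mu\nu}-g^M_{\mu\nu}=o(\abs{M}r^{-1})\) and \(\partial_\mu(g_{\nu\rho}-g^M_{\nu\rho})=o(\abs{M}r^{-2})\), bounds the metric perturbation's correction to light-travel times in the asymptotic region by \(o(M)\) — negligible against the logarithmically large Schwarzschild delay — so the comparison is robust: for \(M<0\) no curve trapped near the conformal boundary can be the fastest one, and the boundary generator \(\gamma\) landing on \(\mathcal{I}^+\) must pass through \(\mathcal{D}\). This contradicts the hypothesis, so \(M\ge 0\) and \(P^a\) is future-causal.

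The step I expect to be the main obstacle is making the estimate of the previous paragraph rigorous: one must show that the ``fastest causal curve from \(\mathcal{I}^-\) to \(\mathcal{I}^+\)'' functional on \(\mathcal{N}\) is controlled — to the precision needed to read off the sign of \(M\) — by the exact-Schwarzschild Shapiro integral, uniformly as the curve's point of closest approach varies and as \(p,q\) recede to infinity along \(\mathcal{I}^\mp\). This demands care in two places: ensuring the competing ``grazing route'' is an honest limit of causal curves in \(\mathcal{N}\) and that no generator of \(\partial J^+(p;\mathcal{N})\) escapes to \(i^0\) or \(i^\pm\) without reaching \(\mathcal{I}^+\); and controlling the effect of the \(o(\abs{M}r^{-1})\) metric perturbation on causal relations near infinity. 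The global hyperbolicity of \(\mathcal{D}\cup\mathcal{I}^+\cup\mathcal{I}^-\) and the asymptotic-flatness hypotheses are precisely what make both controllable; see Refs.~\cite{penrose1993PositiveMassTheorem,chrusciel2004PoorManPositive} for the original and streamlined treatments.
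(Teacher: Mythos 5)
The paper does not actually prove this theorem---it is quoted from Ref.~\cite{penrose1993PositiveMassTheorem} and accompanied only by an informal sketch (the fastest causal curve from \(\mathcal{I}^-\) to \(\mathcal{I}^+\) is an achronal null geodesic, and the Shapiro time delay with \(M<0\) forces it to enter the interior), and your proposal is a faithful, somewhat more detailed reconstruction of exactly that argument, with the genuinely hard steps (controlling the grazing-route comparison and the \(o(\abs{M}r^{-1})\) perturbation uniformly as the endpoints recede to infinity) correctly identified rather than glossed over. No divergence of approach to report.
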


Recall that the ADM four-momentum (after Arnowitt, Deser, and Misner \cite{arnowitt2008RepublicationDynamicsGeneral}) is a notion of four-momentum of the whole spacetime measured at a single instant of time (which is to be understood as a spacelike hypersurface of the spacetime). 

The Penrose--Sorkin--Woolgar theorem can then be combined with corollary \ref{corol: final-result-borde} to yield the following result. 

\begin{corollary}
    Consider a spacetime \((M,g)\) taken to be asymptotically flat at null and spatial infinity and uniformly Schwarzschildean. Let \(\mathcal{D} = I^-(\mathcal{I}^+) \cap I^+(\mathcal{I}^-)\) be the domain of outer communications of the spacetime. Assume that \(\mathcal{D} \cup \mathcal{I}^+ \cup \mathcal{I}^-\) is globally hyperbolic as a subset of the conformal extension of \(M\). Suppose the Einstein field equations hold, that the AANEC integrals are always absolutely convergent and non-negative, and that the null generic condition holds---\ie, that for all null geodesics, there is some point at which \(\tensor{k}{^c}\tensor{k}{^d}\tensor{k}{_[_a}\tensor{R}{_b_]_c_d_[_e}\tensor{k}{_f_]} \neq 0\). Then the ADM four-momentum of the spacetime is future-causal.
\end{corollary}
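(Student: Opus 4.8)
\textit{Proof proposal.} The plan is to obtain this corollary by chaining Corollary~\ref{corol: final-result-borde} (or, more precisely, the conjugate-point corollary that precedes it) with the Penrose--Sorkin--Woolgar theorem; the only real content is to check that the hypotheses assumed here feed correctly into each of those results. First I would record the elementary fact that for a null vector \(\tensor{k}{^a}\) the Einstein equations give \(\tensor{R}{_a_b}\tensor{k}{^a}\tensor{k}{^b} = 8\pi\, \tensor{T}{_a_b}\tensor{k}{^a}\tensor{k}{^b}\), because the trace term in \(\tensor{G}{_a_b}\) is contracted against \(\tensor{g}{_a_b}\tensor{k}{^a}\tensor{k}{^b} = 0\) and drops out. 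Consequently, along any null geodesic the AANEC integral of Eq.~(\ref{eq: anec-integral}) equals \(\frac{1}{8\pi}\int_{-\infty}^{+\infty}\tensor{R}{_a_b}\tensor{k}{^a}\tensor{k}{^b}\dd{\lambda}\), and absolute convergence of one integral is equivalent to that of the other.

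Next I would argue by contradiction: suppose there exists an achronal null geodesic \(\gamma\) connecting \(\mathcal{I}^-\) to \(\mathcal{I}^+\) and passing through \(\mathcal{D}\). Because the spacetime is uniformly Schwarzschildean, \(\gamma\) leaves every compact set toward both of its ends, and the asymptotically flat fall-off makes its affine parameter range over all of \(\mathbb{R}\), so \(\gamma\) is a complete null geodesic of the physical metric. Since \(\gamma\) is achronal, the AANEC applies to it; by the identity above and the assumed absolute convergence, \(\int_{-\infty}^{+\infty}\tensor{R}{_a_b}\tensor{\dot{\gamma}}{^a}\tensor{\dot{\gamma}}{^b}\dd{\lambda} \geq 0\) and the integral converges absolutely. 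The null generic condition provides a point of \(\gamma\) at which \(\tensor{k}{^c}\tensor{k}{^d}\tensor{k}{_[_a}\tensor{R}{_b_]_c_d_[_e}\tensor{k}{_f_]} \neq 0\). Invoking the intermediate corollary (the unlabelled one immediately preceding Corollary~\ref{corol: final-result-borde}), \(\gamma\) must contain a pair of conjugate points. This contradicts Proposition 4.5.12 of Ref.~\cite{hawking1973LargeScaleStructure}, according to which an achronal null geodesic has no conjugate points. Hence no such \(\gamma\) exists.

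Finally, having ruled out achronal null geodesics from \(\mathcal{I}^-\) to \(\mathcal{I}^+\) passing through \(\mathcal{D}\), I would observe that the remaining standing assumptions---asymptotic flatness at null and spatial infinity, the uniformly Schwarzschildean condition, and global hyperbolicity of \(\mathcal{D}\cup\mathcal{I}^+\cup\mathcal{I}^-\) in the conformal extension---are exactly the hypotheses of the Penrose--Sorkin--Woolgar theorem. Applying that theorem yields that the ADM four-momentum of the spacetime is future-causal, which is the claim.

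The step I expect to be the main obstacle is the matching of frameworks inside the contradiction argument: the object ``achronal null geodesic connecting \(\mathcal{I}^-\) to \(\mathcal{I}^+\)'' naturally lives in the conformal completion, whereas the convergence, non-negativity, and genericity hypotheses---as well as the notion of conjugate points---are stated for the \emph{physical} metric. One must check carefully that the portion of such a curve lying in \(\mathcal{D}\) is a genuinely complete, inextendible null geodesic of the physical spacetime, and that achronality in the conformal picture restricts to achronality (hence absence of conjugate points) there. The uniformly Schwarzschildean fall-off is precisely what makes this bookkeeping go through; it is this translation, rather than the invocation of the two quoted theorems (which is immediate), that needs attention.
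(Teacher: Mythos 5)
Your proposal is correct and follows essentially the same route as the paper: the text derives this result by combining Corollary~\ref{corol: final-result-borde} (itself obtained from Borde's focusing corollary plus Proposition 4.5.12 of Ref.~\cite{hawking1973LargeScaleStructure}, exactly the contradiction you spell out) with the Penrose--Sorkin--Woolgar theorem, and likewise notes that only the achronal geodesics need to satisfy the averaged condition. Your added care about completeness of the geodesic and the translation between the conformal and physical pictures is a point the paper leaves implicit, but it does not change the argument.
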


In the above result, the AANEC is sufficient (as opposed to the ANEC), because we only need to rule out the achronal geodesics. The AANEC will typically hold as a vacuous truth.

The exterior solution for a spherically symmetric star in general relativity is always the Schwarzschild solution due to Birkhoff's theorem. Hence, all spherically symmetric stellar spacetimes must be uniformly Schwarzschildean, and hence satisfy the conditions for the Penrose--Sorkin--Woolgar theorem. It follows that a star satisfying the AANEC and the null generic condition can never have negative mass in general relativity. Due to the strong evidence supporting the AANEC \cite{wall2010ProvingAchronalAveraged,kontou2020EnergyConditionsGeneral}, it seems unlikely that one can build negative-mass stars within the domain of the semiclassical Einstein equations. 

\section{Dynamical Stability of Stars}\label{sec: stability}
From the perspective of averaged energy conditions, it seems unlikely that one can build negative-mass stars. However, this requires a lot of input from quantum theory, and it would be interesting to try to discard negative masses based purely on classical physics. With this in mind, this section studies the problem from the perspective of stability analysis. 

There are three obvious ways in which a star can be spontaneously destroyed: through hydrodynamic instabilities, thermodynamic instabilities, and the consumption of nuclear fuel. In regular stars, the time scales associated with hydrodynamic processes are much shorter than those associated with thermal processes, which, in turn, are much shorter than those associated with nuclear processes. Ref. \cite{zeldovich1996StarsRelativity}, for example, considers the Sun and estimates the time scales for hydrodynamic processes at around \(t_H \sim \SI{e3}{\second}\), for thermal processes at \(t_T \sim \SI{3e7}{\year}\), and \(t_N \sim \SI{e10}{\year}\) for the burning of nuclear fuel. Hence, hydrodynamic stability is the most important one in regular stars.

While the estimates for the Sun cannot be trusted when dealing with negative-mass stars, it still seems natural to focus on hydrodynamic stability. While this is partially motivated by the time scale estimates we just mentioned, there is also the important reason that a thermodynamic or nuclear analysis would require a precise model of the matter constituting the star, for example in the form of an equation of state. This model, however, is not available, since we are precisely avoiding making too many assumptions on the composition of matter. Hence, we have insufficient information to discuss the thermal or nuclear stabilities of negative-mass stars. Furthermore, since they would necessarily be made of unconventional matter (or purely of quantum effects), one lacks a good justification for the use of traditional techniques developed for regular matter. Thus, we focus on the hydrodynamic stability.

The simplest way of discussing the hydrodynamic stability of a star is to consider linear perturbations of the stellar parameters and metric coefficients about the background spacetime provided by the TOV solution. One then writes the metric and the stellar parameters as the equilibrium solutions to the TOV equations plus small time-dependent perturbations. The Einstein equations for these perturbations are linearized. The end goal is to see whether the perturbations grow in time or remain bounded. If they grow in time, the star is deemed unstable. 

This problem was originally considered by Chandrasekhar \cite{chandrasekhar1964DynamicalInstabilityPRL,*chandrasekhar1964DynamicalInstabilityPRLErratum,chandrasekhar1964DynamicalInstabilityApJ,*chandrasekhar1964DynamicalInstabilityApJErratum} (see also the review in Chap. 26 of Ref. \cite{misner2017Gravitation}), who simplified the problem to the study of solutions of the so-called Chandrasekhar pulsation equation. This is an equation that describes the Lagrangian displacement of the perturbation \(\xi\), which is a measure of how much each fluid element in the star is dislocated by the perturbation. If the fluid element was originally at \(r\) before the perturbation, it is taken to \(r+ \xi(t,r)\) by the perturbation. The pulsation equation is given by
\begin{equation}\label{eq: SLP-time}
    \pdv{r}\qty[p(r) \pdv{\chi}{r}] + q(r) \chi(t,r) = w(r) \pdv[2]{\chi}{t},
\end{equation}
where \(\chi\) is a ``renormalized Lagrangian displacement'', from which all remaining perturbations can be promptly calculated. It is given in terms of the ``true Lagrangian displacement'' \(\xi\) by 
\begin{equation}
    \chi(t,r) = r^2 e^{-\phi(r)} \xi(t,r).
\end{equation}
The coefficient functions \(p\) (not to be mistaken for the pressure \(P\)), \(q\), and \(r\) are determined by the background stellar spacetime according to
\begin{widetext}
\begin{align}
    p(r) &= e^{\psi + 3 \phi}\frac{\gamma P}{r^2}, \label{eq: SLP-p-original} \\
    q(r) &= e^{\psi + 3 \phi} \qty[\frac{1}{r^2(P + \rho)}\qty(\dv{P}{r})^2 - \frac{4}{r^3}\dv{P}{r} - 8\pi \frac{P}{r^2}  e^{2 \psi} (P+\rho)], \label{eq: SLP-q-original} \\
    w(r) &= \frac{(\rho + P)}{r^2} e^{3\psi + \phi}. \label{eq: SLP-w-original}
\end{align}
\end{widetext}
The coefficient \(p\) in the Chandrasekhar pulsation equation involves the ``effective polytropic index''
\begin{equation}\label{eq: gamma-definition}
    \gamma = \frac{1}{P} \qty(\pdv{P}{n})_\rho \qty[n - (\rho + P) \qty(\pdv{n}{\rho})_P],
\end{equation}
where \(n\) is the baryon number density (with antibaryons counted negatively).

Typically, the boundary conditions for the problem are that \cite{bardeen1966CatalogueMethodsStudying} 
\begin{equation}\label{eq: boundary-SLP-time}
    \lim_{r \to 0} \frac{\abs{\chi(t,r)}}{r^3} < \infty \qq{and} \qty[\frac{\gamma P e^{\phi}}{r^2} \pdv{\chi}{r}]_R = 0.
\end{equation}
The first one ensures \(\chi\) and its derivative stay finite at the center of the star, while the second one enforces that fluid elements on the boundary of the star stay at the boundary of the star \footnote{This is done by imposing that the Lagrangian perturbation of the pressure vanishes at the boundary.}. Additional conditions may be imposed to keep \(\xi\) finite at the boundary \cite{bardeen1966CatalogueMethodsStudying}.

The solution of Eq. (\ref{eq: SLP-time}) is carried out by the separation of variables \(\chi(t,r) = \tau(t)\zeta(r)\). The resulting equations are 
\begin{equation}\label{eq: SLP}
    \dv{r}\qty[p(r) \dv{\zeta}{r}] + q(r) \zeta(r) + \sigma^2 w(r) \zeta(r) = 0,
\end{equation}
also known as the Chandrasekhar pulsation equation, and
\begin{equation}\label{eq: SHO-Chandra}
    \dv[2]{\tau}{t} + \sigma^2 \tau = 0.
\end{equation}

The boundary conditions take the form
\begin{equation}\label{eq: boundary-SLP}
    \lim_{r \to 0} \frac{\abs{\zeta(r)}}{r^3} < \infty \qq{and} \qty[\frac{\gamma P e^{\phi}}{r^2} \dv{\zeta}{r}]_R = 0.
\end{equation}

The solution to Eq. (\ref{eq: SHO-Chandra}) is 
\begin{equation}
    \tau(t) = \tau(0) \cos(\sigma t) + \frac{\dot{\tau}(0)}{\sigma} \sin(\sigma t),
\end{equation}
where the dot denotes a time derivative. Notice, in particular, that if \(\sigma^2 < 0\) (\ie, if \(\sigma\) is imaginary), then \(\tau(t)\) grows exponentially in time for a generic initial condition. In this case, the perturbation is unstable, and so is the star. Hence, if one desires to study the stability of a star with the Chandrasekhar pulsation equation, the main question to be addressed is whether \(\sigma^2 > 0\) holds for all pulsation modes or whether there are exceptions. If there is a single admissible value of \(\sigma\) for which \(\sigma^2 < 0\), then the star is unstable because \(\tau(t)\) will grow exponentially in time for this mode.

Eq. (\ref{eq: SLP}), when combined with the boundary conditions (\ref{eq: boundary-SLP}), corresponds to a Sturm--Liouville problem \cite{arfken2013MathematicalMethodsPhysicists,teschl2012OrdinaryDifferentialEquations,zettl2005SturmLiouvilleTheory,pryce1993NumericalSolutionSturm} with eigenvalues \(\sigma^2\). This determines the allowed values of \(\sigma\). In standard astrophysical applications, it holds that \(p(r) > 0\) and \(w(r) > 0\) in \((0,R)\). This ensures the problem is a self-adjoint eigenvalue problem in a Hilbert space, and thus \(\sigma^2\) is always real.

The standard methods for studying stability with the Chandrasekhar pulsation equation involve techniques from Sturm--Liouville theory. For example, method 2-C in Ref. \cite{bardeen1966CatalogueMethodsStudying} obtains the number of unstable modes by numerically solving the Chandrasekhar equation with \(\sigma^2 = 0\) and counting the number of zeros in the solution. By the Sturm Comparison Theorem \cite{teschl2012OrdinaryDifferentialEquations,zettl2005SturmLiouvilleTheory}, this number of zeros can be understood as the number of eigenvalues smaller than zero, \ie, the number of eigenvalues with \(\sigma^2 < 0\). This is only possible because, for usual stars, \(p(r) > 0\) and \(w(r) > 0\) for \(r \in (0,R)\), which means the problem is sufficiently regular to use standard results from Sturm--Liouville theory (although the endpoints \(r=0\) and \(r=R\) are often singular). Alternatively, method 2-D in Ref. \cite{bardeen1966CatalogueMethodsStudying} uses a variational technique to obtain the sign of the smallest eigenvalue, which is a technique based on the so-called min-max principle of functional analysis \cite{pryce1993NumericalSolutionSturm,reed1978AnalysisOperators}.

There are a few reasons, however, that keep the standard Chandrasekhar pulsation equation from being straightforward when dealing with negative-mass stars. The first is that the NEC violations which are required in negative-mass stars (otherwise the Penrose--Sorkin--Woolgar theorem would imply the star has positive mass) lead to sign flips in the coefficients \(p\) and \(w\). This means we are apparently outside the domain of the regular Sturm--Liouville problem, and there is no \emph{a priori} reason to believe that results such as the Sturm Comparison theorem still hold. Hence, at first glance, it seems challenging to extract the sign of \(\sigma^2\) in a NEC-violating star.

The second reason is also practical in nature and involves the difficulty in knowing the equation of state for the star. Recall that the Chandrasekhar pulsation equation involves the effective polytropic index \(\gamma\). For an adiabatic perturbation, one can rewrite this expression as~\cite{merafina1989SystemsSelfgravitatingClassical}
\begin{equation}\label{eq: gamma-adiabatic}
    \gamma = \frac{\rho + P}{P} \qty(\pdv{P}{\rho})_s,
\end{equation}
where \(s\) is the entropy per baryon in the star. In the case of a barotropic equation of state (\ie, an equation of state of the form \(P = P(\rho)\)), we can write
\begin{equation}
    \qty(\pdv{P}{\rho})_s = \dv{P}{r}\qty(\dv{\rho}{r})^{-1}
\end{equation}
and compute \(\gamma\) as a function of \(r\) without using further information about the equation of state. However, for an equation of state that is not barotropic, this ratio of radial derivatives could correspond to the partial derivative with different quantities being held constant because \(\qty(\pdv*{P}{\rho})_T\), \(\qty(\pdv*{P}{\rho})_s\), and \(\qty(\pdv*{P}{\rho})_n\), for example, need not coincide. 

If the star in question was obtained from an equation of state, this is not a problem. However, stars obtained from energy density profiles provide us more freedom to exploit when building examples. For these stars, there is no \emph{a priori} reason to believe that its constituent matter is barotropic. Hence, the need to know \(\gamma\) is particularly inconvenient.

Finally, a third objection to the Chandrasekhar pulsation equation is conceptual. One of the assumptions that enter Chandrasekhar's derivation is baryon number conservation. Some authors go as far as saying that ``The most fundamental law of thermodynamics---even more fundamental than the `first' and `second' laws---is baryon conservation'' \cite[p. 558]{misner2017Gravitation}. Nevertheless, it is not clear whether this is applicable in the case of a negative-mass star, in which case, it is natural to begin by generalizing ``baryon number'' to ``particle number'', where particle refers to the particles of whichever field composes the star. Nevertheless, even with this generalization the conservation of particle number is still unclear. For example, a quantum state violating the NEC need not be an eigenstate of the particle number operator. Hence, the number of particles in a negative mass star is, in general, ill-defined. We thus prefer to avoid techniques which rely on baryon number conservation.

Luckily, within suitable assumptions, these difficulties can be overcome. The difficulty with \(\gamma\) can be overcome in the particular case of a barotropic equation of state, and the validity of this hypothesis can be verified by making a parametric plot of \(P(r)\) against \(\rho(r)\) and checking whether the resulting graph yields a function. While this is not proof that the equation of state is barotropic, it gives evidence that the star can be well-approximated by a barotropic equation of state. In practice, however, this equation of state may be unobtainable in closed form.

The difficulty concerning baryon number conservation can be solved in the particular case of an adiabatic perturbation. The hypothesis of baryon number conservation enters Chandrasekhar's derivation in order to express the Eulerian pressure perturbation \(\var{P}\) as a function of the Lagrangian displacement \(\xi\). For a general perturbation, this method seems adequate. For an adiabatic perturbation, however, we can impose simply that the Lagrangian pressure perturbation \(\Delta P\) satisfies
\begin{equation}
    \Delta P = \qty(\pdv{P}{\rho})_s \Delta \rho.
\end{equation}
With the aid of Eq. (\ref{eq: gamma-adiabatic}), one can show that this prescription yields the same result obtained by Chandrasekhar \cite{chandrasekhar1964DynamicalInstabilityApJ} using baryon number conservation. Recall that the Lagrangian perturbation \(\Delta \alpha\) of a quantity \(\alpha\) is the perturbative change in a quantity as one follows a fluid element, whereas the Eulerian perturbation \(\var\alpha\) is the change as one considers a fixed point in space \cite{misner2017Gravitation,thorne2017ModernClassicalPhysics}. They are related, to linear order, by
\begin{equation}
    \Delta \alpha(t,r) = \var\alpha(t,r) + \xi(t,r) \dv{\alpha}{r}\qty(r),
\end{equation}
where \(\dv*{\alpha}{r}\) refers to the background (unperturbed) value of \(\alpha\).

We notice that an incorrect approach one could consider to avoid these difficulties is to impose that the perturbations satisfy a simple equation of state that is not necessarily the same as the one of the underlying star. For example, one could impose that the Lagrangian perturbation of the pressure of the star always vanishes. This would correspond to a dust-like perturbation. However, in this case, the equations describe the evolution of the dust particles on a background describing a static star, rather than the motion of the star itself. Hence, one cannot conclude anything about stability in such an approach, but rather only about the behavior of the new matter in the star. For example, one can conclude that grains of dust in the Sun move either toward the surface or the center, not that the Sun is unstable.

With all of these considerations, the remaining mathematical difficulty in the problem is that both the coefficients \(p\) and \(w\) in the Sturm--Liouville problem change signs. In particular, the fact that \(w\) changes sign means we are not considering an eigenvalue problem in a Hilbert space, but rather in a more general vector space with an indefinite inner product. This keeps us from using standard results from functional analysis on Hilbert spaces. However, we can manipulate the Chandrasekhar pulsation equation into a form corresponding to a self-adjoint problem in a Hilbert space, as we explain below. The main observation is that all coefficients flip signs precisely when \(P+\rho\) vanishes (\(p\) and \(q\) can undergo additional sign flips, but these are unimportant). As a consequence, the problem can be multiplied by the sign of \(P+\rho\) to yield a sufficiently regular problem. This allows the use of some of the techniques in Ref. \cite{bardeen1966CatalogueMethodsStudying}. The main available technique is method 2-D, which is numerically costly, but manageable for our purposes. 

\subsection{Alternative Form of the Chandrasekhar Pulsation Equation}
    We assume the perturbations to be adiabatic, so that Eq. (\ref{eq: gamma-adiabatic}) holds. Consider the Chandrasekhar equation in the Sturm--Liouville form of Eq. (\ref{eq: SLP}), with the relevant functions given in Eqs. (\ref{eq: SLP-p-original}) to (\ref{eq: SLP-w-original}) and (\ref{eq: gamma-adiabatic}). Since the TOV equation establishes that \(\dv*{P}{r}\) is proportional to \(P+\rho\), we see that \(p(r)\), \(q(r)\), and \(w(r)\) are also proportional to \(P+\rho\). This combination can flip its sign inside the star due to NEC violations and cause \(w\) to reverse its sign as well. Suppose we eliminate this sign term from the equation. In that case, we will get to a more standard Sturm--Liouville problem, which in the worst-case scenario will still correspond to a self-adjoint eigenvalue problem in a Hilbert space (rather than an indefinite inner product space).

    This can be easily done by defining the function 
    \begin{equation}
        s(r) = \mathrm{sign}(P(r)+\rho(r)).
    \end{equation}
    Where the sign function is defined by
    \begin{equation}
        \mathrm{sign}(x) = \begin{cases}
            +1, & \text{if } x > 0, \\
            0, & \text{if } x = 0, \\
            -1, & \text{if } x < 0.
        \end{cases}
    \end{equation}

    We can now write the Chandrasekhar pulsation equation as
    \begin{equation}\label{eq: modified-SLP}
        \dv{r}\qty[s(r)p(r) \dv{\zeta}{r}] + s(r)q(r) \zeta(r) + \sigma^2 s(r)w(r) \zeta(r) = 0,
    \end{equation}
    which is possible because \(p\), \(q\), and \(w\) all vanish at the point in which \(s(r)\) does. In particular, the derivative of \(s(r)\) does not lead to any problems because \(p(r)\) vanishes at the point in which the derivative does not. By inspecting Eq. (\ref{eq: SLP-w-original}) one can tell that \(s(r)w(r)\) is manifestly non-negative, and this ensures the Sturm--Liouville problem takes place in a Hilbert space. 

    For future convenience, we introduce the shorthand notation
    \begin{equation}
        p_{\star}(r) = s(r)p(r)
    \end{equation}
    and similarly for \(q_{\star}\) and \(w_{\star}\).

\subsection{Variational Technique for Stability Analysis}
    To see whether a Sturm--Liouville problem admits negative eigenvalues, we can use the so-called min-max principle (Theorem XIII.1 in Ref. \cite{reed1978AnalysisOperators}). For our purposes, it states that given a self-adjoint operator \(L\) with spectrum \(\mathrm{spec}(L)\) on a Hilbert space, it holds that
    \begin{equation}
        \inf \mathrm{spec}(L) \leq \frac{\braket{\psi}{L\psi}}{\braket{\psi}}
    \end{equation}
    for all \(\psi \in \mathrm{Dom}(L)\) with \(\psi \neq 0\). Ref. \cite{reed1978AnalysisOperators} makes the additional assumption that \(L\) is bounded from below, but since we are only interested in knowing whether \(\mathrm{spec} (L)\) extends to negative values, we do not need this additional assumption (having a spectrum which extends to minus infinity means there is a negative value in the spectrum). This is a costly albeit standard technique in the study of stellar stability used by Chandrasekhar in the same references in which the pulsation equation was introduced \cite{chandrasekhar1964DynamicalInstabilityApJ,chandrasekhar1964DynamicalInstabilityPRL}. It corresponds to method 2-D in Ref. \cite{bardeen1966CatalogueMethodsStudying}. This method is also widely used in textbook quantum mechanics to find upper bounds on the ground state energy of a system \cite{sakurai2021ModernQuantumMechanics,weinberg2015LecturesQuantumMechanics}.
    
    In the Sturm--Liouville problem with coefficient functions \(p_{\star}\), \(q_{\star}\), and \(w_{\star}\) the inner product is given by 
    \begin{equation}
        \braket{\varphi}{\psi} = \int \varphi(r) \psi(r) w_{\star}(r) \dd{r},
    \end{equation}
    where we assume the functions to be real. We can then cast the problem in the form
    \begin{equation}
        L \zeta = \sigma^2 \zeta,
    \end{equation}
    with 
    \begin{equation}
        L = - \frac{1}{w_{\star}(r)}\qty[\dv{r}\qty[p_{\star}(r) \dv{r}] + q_{\star}(r)].
    \end{equation}
    
    \(L\) is self-adjoint \footnote{The standard references \cite{chandrasekhar1964DynamicalInstabilityApJ,bardeen1966CatalogueMethodsStudying} do not pay close attention to the difference between symmetric and self-adjoint operators, so neither will we. Details of this type within Sturm--Liouville theory can be found, for example, in Ref. \cite{pryce1993NumericalSolutionSturm}.} when we assume homogeneous boundary conditions (our case of interest). Since \(w_{\star}\) can only be zero at points in which \(p_{\star}\) and \(q_{\star}\) vanish (which turns out to be a point in which \(p_{\star}'\) also vanishes) and all of these functions vanish in precisely the same way, the operator \(L\) is well-defined.
    
    Using integration by parts one can write
    \begin{multline}
        \braket{\psi}{L\psi} = p_{\star}(0)\psi'(0)\psi(0) - p_{\star}(R)\psi'(R)\psi(R) \\ + \int_0^R p_{\star}(r) \psi'(r)^2 - q_{\star}(r)\psi(r)^2 \dd{r}.
    \end{multline}
    The boundary conditions we are interested in, given in Eq. (\ref{eq: boundary-SLP}), are such that we can assume the boundary contributions to vanish. Hence, we can write that 
    \begin{equation}\label{eq: S-psi}
        \inf \mathrm{spec}(L) \leq \frac{\int_0^R p_{\star}(r) \psi'(r)^2 - q_{\star}(r)\psi(r)^2 \dd{r}}{\int_0^R w_{\star}(r) \psi(r)^2 \dd{r}} \equiv S[\psi],
    \end{equation}
    where \(\psi\) is assumed to satisfy the boundary conditions in Eq. (\ref{eq: boundary-SLP}). Notice that the question of stability has been reduced to whether \(S[\psi]\) admits negative values for some \(\psi\) satisfying the boundary conditions.

\subsection{Bardeen's Technique for Stability Analysis}
    An alternative method of studying the stability of a star using the Chandrasekhar pulsation equation is due to Bardeen \cite{bardeen1965StabilityDynamicsSpherically} (see also Method 2-C in Ref. \cite{bardeen1966CatalogueMethodsStudying}). It uses the Sturm Comparison Theorem \cite{zettl2005SturmLiouvilleTheory,teschl2012OrdinaryDifferentialEquations} and the fact that the \(n\)-th eigenfunction of a Sturm--Liouville problem has \(n\) roots to establish the number of unstable modes in the star. The basic idea is that increasing the eigenvalue in the Sturm--Liouville equation increases the number of roots a solution with fixed initial (not boundary) conditions has in the interval (this follows from the Sturm Comparison Theorem). As a consequence, one can count how many roots the numerical solution with \(\sigma^2=0\) has, and this will correspond to the number of eigenvalues with \(\sigma^2 < 0\). This number therefore corresponds to the number of unstable modes in the star.

    The basic process is the following. One numerically solves the pulsation equation with the ansatz \(\sigma^2 = 0\). If the stellar surface is a regular point for the Sturm--Liouville problem, then one starts from there. Otherwise, one starts from the center. One counts how many roots the numerical solution has inside the star and this gives the number of unstable modes. If the numerical solution reaches the other end of the star (either the center or the boundary, depending on where integration started) and satisfies the boundary condition there, then \(\sigma^2 = 0\) is one of the eigenvalues of the problem, meaning there is a mode with neutral stability.

    This method is much more efficient than the variational technique but has some drawbacks. The variational technique relies on a very general result from functional analysis (the min-max principle), while Bardeen's method depends on the Sturm Comparison Theorem, which is a result about a certain class of Sturm--Liouville problems. In particular, the theorem requires that \(p_{\star}\) be positive on \((0,R)\), and hence Bardeen's technique cannot be applied to stars in which \(P+\rho\) flips signs or in which \(\qty(\pdv*{P}{\rho})_s < 0\). This excludes all acceptable models of negative-mass stars (see section \ref{sec: stability-negative-mass}), but the method is still convenient to illustrate how the variational technique works.

\subsection{Example: Relativistic Polytropes}
    As a warm-up, let us use these stability criteria against some simple examples. Namely, we consider relativistic polytropes in the sense originally considered by Tooper \cite{tooper1964GeneralRelativisticPolytropic}. These models were also studied by Chandrasekhar in Ref. \cite{chandrasekhar1964DynamicalInstabilityApJ} as an example application of the original pulsation equation.

    The equation of state we will consider is
    \begin{equation}\label{eq: tooper-polytrope}
        P = K\rho^\Gamma,
    \end{equation}
    where \(\Gamma\) and \(K\) are constants. \(K\) is a dimensional constant, but since it does not affect our results (except for appropriate rescaling of functions), we set \(K=1\) for our numerical computations. 
    
    The mass-radius diagrams and mass per central density diagrams for some values of \(\Gamma\) are shown in figures \ref{fig: tooper-polytrope-2}, \ref{fig: tooper-polytrope-53}, and \ref{fig: tooper-polytrope-43}. In the Newtonian case, \(\Gamma = \frac{4}{3}\) and \(\Gamma = \frac{5}{3}\) are common stellar models \cite{weinberg2020LecturesAstrophysics}.

    \begin{figure*}
        \centering
        \begin{tikzpicture}
            \begin{groupplot}[
                group style={group size=2 by 1,y descriptions at=edge left,horizontal sep=0.1cm,},
                minor tick num = 1,
                xmajorgrids=true,
                ymajorgrids=true,
                ylabel = {\(M\)},
                legend style={at={(1.03,0.95)},anchor=north west},
                width = 8cm,
                height= 5cm
            ]
            \nextgroupplot[xlabel = {\(R\)},
                xmode = log
                ]
                
                \addplot[NickRed,mark=*,mark size=1pt] table [x=R,y=M] {datapos63.dat};

                \addplot[black,only marks, mark=*,mark size=1.25pt] table [x=R,y=M] {datapos63select.dat};
                
            \nextgroupplot[xlabel = {\(\rho_0\)},xmode=log,
                xtick = {1e-5,1e-3,1e-1,1e1,1e3},
                ]
            
                \addlegendimage{empty legend}
                    \addlegendentry{\hspace{-.6cm}\(\Gamma\)}
                
                \addplot[NickRed,mark=*,mark size=1pt] table [x=rhoc,y=M] {datapos63.dat};
                    \addlegendentry{\(2\)}

                \addplot[black,only marks, mark=*,mark size=1.25pt] table [x=rhoc,y=M] {datapos63select.dat};
            \end{groupplot}
        \end{tikzpicture}
        \caption{Mass-radius diagram and mass per central density \(\rho_0\) for stars satisfying the equation of state (\ref{eq: tooper-polytrope}) with \(\Gamma = 2\). The points indicated in black will be used to perform the detailed stability analysis.}
        \label{fig: tooper-polytrope-2}
    \end{figure*}

    \begin{figure*}
        \centering
        \begin{tikzpicture}
            \begin{groupplot}[
                group style={group size=2 by 1,y descriptions at=edge left,horizontal sep=0.1cm,},
                minor tick num = 1,
                xmajorgrids=true,
                ymajorgrids=true,
                ylabel = {\(M\)},
                legend style={at={(1.03,0.95)},anchor=north west},
                width = 8cm,
                height= 5cm
            ]
            \nextgroupplot[xlabel = {\(R\)},
                xmode = log
                ]
                
                \addplot[NickGreen,mark=triangle*,mark size=1.2pt] table [x=R,y=M] {datapos53.dat};

                \addplot[black,only marks, mark=triangle*,mark size=1.5pt] table [x=R,y=M] {datapos53select.dat};
            \nextgroupplot[xlabel = {\(\rho_0\)},xmode=log,
                ]
            
                \addlegendimage{empty legend}
                    \addlegendentry{\hspace{-.6cm}\(\Gamma\)}
                
                \addplot[NickGreen,mark=triangle*,mark size=1.2pt] table [x=rhoc,y=M] {datapos53.dat};
                    \addlegendentry{\(\frac{5}{3}\)}

                \addplot[black,only marks, mark=triangle*,mark size=1.5pt] table [x=rhoc,y=M] {datapos53select.dat};
            \end{groupplot}
        \end{tikzpicture}
        \caption{Mass-radius diagram and mass per central density \(\rho_0\) for stars satisfying the equation of state (\ref{eq: tooper-polytrope}) with \(\Gamma = \frac{5}{3}\). The points indicated in black will be used to perform the detailed stability analysis.}
        \label{fig: tooper-polytrope-53}
    \end{figure*}

    \begin{figure*}
        \centering
        \begin{tikzpicture}
            \begin{groupplot}[
                group style={group size=2 by 1,y descriptions at=edge left,horizontal sep=0.1cm,},
                minor tick num = 1,
                xmajorgrids=true,
                ymajorgrids=true,
                ylabel = {\(M\)},
                legend style={at={(1.03,0.95)},anchor=north west},
                width = 8cm,
                height= 5cm
            ]
            \nextgroupplot[xlabel = {\(R\)},
                xmode = log
                ]
                
                \addplot[NickYellow,mark=square*,mark size=1pt] table [x=R,y=M] {datapos43.dat};

                \addplot[black,only marks, mark=square*,mark size=1.25pt] table [x=R,y=M] {datapos43select.dat};
                
            \nextgroupplot[xlabel = {\(\rho_0\)},xmode=log,
                ]
            
                \addlegendimage{empty legend}
                    \addlegendentry{\hspace{-.6cm}\(\Gamma\)}
                
                \addplot[NickYellow,mark=square*,mark size=1pt] table [x=rhoc,y=M] {datapos43.dat};
                    \addlegendentry{\(\frac{4}{3}\)}

                \addplot[black,only marks, mark=square*,mark size=1.25pt] table [x=rhoc,y=M] {datapos43select.dat};
            \end{groupplot}
        \end{tikzpicture}
        \caption{Mass-radius diagram and mass per central density \(\rho_0\) for stars satisfying the equation of state (\ref{eq: tooper-polytrope}) with \(\Gamma = \frac{4}{3}\). The points indicated in black will be used to perform the detailed stability analysis.}
        \label{fig: tooper-polytrope-43}
    \end{figure*}

    The spiral structures in the mass-radius diagrams for these three different sequences of stars are a very generic feature of the TOV equation, as per Theorem 6.4 in Ref. \cite{heinzle2003DynamicalSystemsApproach}. The most important feature for stability analyses is the sign of 
    \begin{equation}
        \pdv{M}{\rho_0},
        \label{elementarycrit}
    \end{equation}
    where \(M\) is the equilibrium value of the mass, for it provides an elementary stability criterion (see, \eg, Ref. \cite{glendenning1997CompactStarsNuclear}). Indeed, consider a sequence of stars and pick a star with mass \(\bar M > 0\) and \(\pdv{M}{\rho_0} > 0\). Perturb this star by increasing the central density according to
    \begin{equation}
        \rho_0 \to \rho_0 + \var{\rho_0}
    \end{equation}
    (supposing \(\rho_0, \var{\rho_0} > 0\)) assuming that \(\bar M\) is kept constant during this process. Since \(\pdv*{M}{\rho_0} > 0\), \(\bar M\) is smaller than the mass \(M\) of the equilibrium star with central density \(\rho_0 + \var{\rho_0}\). This means the gravitational field of the perturbed star is weaker than it should be to retain equilibrium. Consequently, the increased pressure forces the star to expand, thereby decreasing the central pressure and density, and bringing the perturbed star back to the original equilibrium state. Notice that if we had \(\pdv*{M}{\rho_0} < 0\) instead, the perturbation would grow and the star would be unstable. It should be pointed out that this is a necessary, but not sufficient, criterion for stability.

    To perform more detailed stability analyses, we need to fix the boundary conditions to be considered in the Chandrasekhar pulsation equation. Following Eq. (\ref{eq: boundary-SLP}), we take 
    \begin{equation}\label{eq: boundary-stability-0}
        \zeta(r) \sim r^3 \text{ at } r \approx 0
    \end{equation}
    and
    \begin{equation}\label{eq: boundary-stability-R}
        \dv{\zeta}{r} = 0 \text{ at } r = R.
    \end{equation}
    
    We will first use the variational technique to obtain the sign of the lowest oscillation mode. We pick a trial function and compute \(S[\psi]\) to check its sign. If it is negative, the star is unstable. To abide by the boundary conditions, we will use trial functions of the form 
    \begin{equation}\label{eq: trial-function}
        \psi_n(r) = r^3 - \frac{3 R^{-n}}{3+n} r^{3+n}
    \end{equation}
    where the coefficient was adjusted so that \(\psi_n'(R) = 0\). The constant \(n\) must be positive to obey the boundary conditions, but it is otherwise arbitrary. Such trial functions were considered in Ref. \cite{meltzer1966NormalModesRadial} with more elaborate techniques to estimate the first few eigenvalues of the Chandrasekhar pulsation equation. In scenarios in which the star is very inhomogeneous these trial functions can lead to misleading estimates. Yet, since we are only interested in the sign of the smallest eigenvalue we do not need to worry about precise approximations.

    Using these trial functions, we can compute \(S[\psi_n]\) and check its sign. The results are shown in table \ref{tab: stability-tooper-polytrope}. We see that the variational criterion concurs with the expectations based on the sign of Eq.~\eqref{elementarycrit} for the sequences with \(\Gamma = 2\) and \(\Gamma = \frac{5}{3}\), but not for \(\Gamma = \frac{4}{3}\). The sequence with \(\Gamma = \frac{4}{3}\), the star with \(\rho_0 = \num{5.623e-1}\) satisfies $\pdv{M}{\rho_0}>0$ but is not stable. In fact, it is well known \cite{chandrasekhar1964DynamicalInstabilityApJ,glendenning1997CompactStarsNuclear} that in Newtonian theory a polytrope is only stable if \(\Gamma > \frac{4}{3}\), and the bound gets tighter in relativity. This is consistent with the fact that the positivity of~\eqref{elementarycrit} is only a necessary, not sufficient, condition for stability.
    
    \begin{table}[t]
        \centering 
        \caption{Values of \(n\) for the trial function of Eq. (\ref{eq: trial-function}) and of the quantity \(S[\psi_n]\)  defined on Eqs. (\ref{eq: S-psi}) for stars with equation of state of the form (\ref{eq: tooper-polytrope}) and different central densities \(\rho_0\). The different choices of central densities can be identified as the black points in figures \ref{fig: tooper-polytrope-2} to \ref{fig: tooper-polytrope-43}.}
        \label{tab: stability-tooper-polytrope}
        \begin{ruledtabular}
        \begin{tabular}{cccc}
            \(\Gamma\) & \(\rho_0\) & \(n\) & \(S[\psi_n]\) \\ \hline
            \num{2} & \num{3.162e-2} & \num{2} & \num{2.065e-4} \\ 
            \num{2} & \num{1.778} & \num{2} & \num{-1.408e-3} \\
            \(\num{5}/\num{3}\) & \num{1.000e-2} & \num{2} & \num{4.793e-4} \\ 
            \(\num{5}/\num{3}\) & \num{5.623e-1} & \num{2} & \num{-2.046e-3} \\
            \(\num{4}/\num{3}\) & \num{1.000e-3} & \num{2} & \num{-7.337e-3} \\ 
            \(\num{4}/\num{3}\) & \num{5.623e-1} & \num{2} & \num{-7.340e-3} \\
            \(\num{4}/\num{3}\) & \num{1.000e1} & \num{2} & \num{-7.242e-3}
        \end{tabular}
        \end{ruledtabular}
    \end{table}

    Of course, the positive values of \(S[\psi_n]\) in table \ref{tab: stability-tooper-polytrope} do not imply stability, they simply fail to falsify it. In the absence of other methods, we could repeat the computation for other values of \(n\) and choices of test functions.

    In these examples, the pulsation equation is sufficiently well-behaved to employ Bardeen's technique. We report the results in table \ref{tab: bardeen-stability-tooper-polytrope}, all of which agree with the results obtained through variational methods. Notice that Bardeen's technique can establish stability and confirms that the positive values of table \ref{tab: stability-tooper-polytrope} indeed correspond to stable stars.

    \begin{table}[t]
        \centering
        \caption{Number of unstable modes according to Bardeen's method \cite{bardeen1965StabilityDynamicsSpherically,bardeen1966CatalogueMethodsStudying} for stars with equation of state of the form (\ref{eq: tooper-polytrope}) and different central densities \(\rho_0\). The different choices of central densities can be identified as the black points in figures \ref{fig: tooper-polytrope-2} to \ref{fig: tooper-polytrope-43}.}
        \label{tab: bardeen-stability-tooper-polytrope}
        \begin{ruledtabular}
        \begin{tabular}{ccc}
            \(\Gamma\) & \(\rho_0\) & unstable modes \\ \hline
            \num{2} & \num{3.162e-2} & \num{0} \\ 
            \num{2} & \num{1.778} & \num{1} \\
            \(\num{5}/\num{3}\) & \num{1.000e-2} & \num{0} \\ 
            \(\num{5}/\num{3}\) & \num{5.623e-1} & \num{1} \\
            \(\num{4}/\num{3}\) & \num{1.000e-3} & \num{1} \\ 
            \(\num{4}/\num{3}\) & \num{5.623e-1} & \num{2} \\
            \(\num{4}/\num{3}\) & \num{1.000e1} & \num{3}
        \end{tabular}
        \end{ruledtabular}
    \end{table}

\section{Examples of Negative-Mass Stars}\label{sec: examples}
We are now ready to show different stellar models with negative masses and study their stability. 

The two main methods for generating negative-mass stellar models are explored in Ref. \cite{novikov2018StarsCreatingGravitational}: proposing an equation of state or specifying an energy density profile. The latter is more adequate for generating unusual stars that partially satisfy the ANEC since it allows for the fine-tuning of the stellar energy. We focus on this method first. 

\subsection{Solutions with Density Profiles}
    Generating solutions with a density profile involves different techniques from solving the TOV system with an equation of state. With an equation of state one integrates from the center outward to ensure the condition that \(m(0) = 0\), which is necessary to avoid a curvature singularity at \(r=0\). With a density profile, however, this condition is already ensured because one can define \(m\) directly as 
    \begin{equation}
        m(r) = \int_0^r 4 \pi \rho(r') r'^2 \dd{r'},
    \end{equation}
    where \(\rho(r)\) is a previously chosen function. Hence, it is secured that \(m(0) = 0\). Therefore, the initial condition we use instead is that \(P(R) = 0\) for some previously chosen \(R\). For simplicity, we take \(R=1\). This immediately determines that the star's mass is \(M = m(1)\), which can be fine-tuned by carefully selecting \(\rho(r)\). We now solve the TOV system for the pressure, as the remaining variables \(\rho\) and \(m\) are already known. 
    
    It will also be interesting to consider the value of the quantity 
    \begin{equation}\label{eq: F-rho}
        F[\rho] = \int_0^{R} (P+\rho) \frac{e^{-\phi(r)}}{\sqrt{1 - \frac{2m(r)}{r}}} \dd{r}
    \end{equation}
    which, on account of Eq. (\ref{eq: radial-ANEC-integral-in-TOV}), is proportional to the value of the ANEC integral along a radial null geodesic. The sign of \(F[\rho]\) for a given density profile \(\rho\) decides whether the ANEC is satisfied for radial null geodesics. We know that, due to the Penrose--Sorkin--Woolgar theorem and Borde's theorem, the AANEC will fail in all of the negative-mass stellar spacetimes we are about to consider. However, \(F[\rho]\) allows us to understand this failure in more detail. 
    
    \begin{table}[t]
        \centering
        \caption{Mass \(M\) and radial ANEC integral \(F[\rho]\) (see Eq. (\ref{eq: F-rho})) for different profile choices \(\rho\). The leftmost column labels each stellar model to facilitate discussion in the main text.}
        \label{tab: profiles}
        \begin{ruledtabular}
        \begin{tabular}{Nccc}
            \multicolumn{1}{c}{model} & \(\rho(r)\) & \(M\) & \(F[\rho]\) \\ \hline
            \label{model:const} & \(-\frac{3}{4\pi}\) & \num{-1.000} & \num{-8.619e-2} \\
            \label{model:r-1} & \(r-1\) & \num{-1.047} & \num{-1.506e-1} \\
            \label{model:r-2} & \(r-2\) & \num{-5.236} & \num{-1.639e-1} \\
            \label{model:r2-1} & \(r^2-1\) & \num{-1.676} & \num{-1.511e-1} \\
            \label{model:r2-2} & \(r^2-2\) & \num{-5.864} & \num{-1.645e-1} \\
            \label{model:r3-1} & \(r^3-1\) & \num{-2.094} & \num{-1.485e-1} \\
            \label{model:r3-2} & \(r^3-2\) & \num{-6.283} & \num{-1.636e-1} \\
            \label{model:exp-e} & \(\exp(r)-e\) & \num{-2.360} & \num{-1.755e-1} \\
            \label{model:exp-2e} & \(\exp(r)-2e\) & \num{-1.375e1} & \num{-1.869e-1} \\
            \label{model:-cos} & \(-\cos(\frac{\pi r}{2})\) & \num{-1.515} & \num{-1.524e-1} \\
            \label{model:-r} & \(-r\) & \num{-3.142} & \num{-7.964e-2} \\
            \label{model:-r2} & \(-r^2\) & \num{-2.513} & \num{-6.085e-2} \\
            \label{model:-r3} & \(-r^3\) & \num{-2.094} & \num{-5.129e-2} \\
            \label{model:-exp} & \(-\exp(r)\) & \num{-9.026} & \num{-1.282e-1} \\
            \label{model:-sin}& \(-\sin(\frac{\pi r}{2})\) & \num{-3.701} & \num{-8.832e-2} \\
            \label{model:12-r}& \(\frac{1}{2}-r\) & \num{-1.047} & \num{5.453e-2} \\
            \label{model:25-r2}& \(\frac{2}{5}-r^2\) & \num{-8.378e-1} & \num{2.039e-1} \\
            \label{model:12cos} & \(\frac{1}{2}\cos(\pi r)\) & \num{-1.273} & \num{9.402e-2}
        \end{tabular}
        \end{ruledtabular}
    \end{table}
    
    Table \ref{tab: profiles} exhibits the masses and values of \(F[\rho]\) for different profile choices. All models have positive pressures in the interior of the star. Models \ref{model:const}, \ref{model:r2-1}, and \ref{model:r2-2} are particular cases of the ones considered in Ref. \cite{novikov2018StarsCreatingGravitational}. Models \ref{model:r-1}, \ref{model:r-2}, and \ref{model:r3-1} to \ref{model:-cos} are natural generalizations of those models. All of these models have a large amount of negative energy density in the center of the star, and the absolute value of the energy density decreases from the center to the border. This is similar to a regular star, apart, of course, from the sign. Models \ref{model:const} to \ref{model:exp-2e} consistently violate the NEC in the sense that \(P(r) + \rho(r) < 0\) for all \(r < R\) (some of them have \(P(R)+\rho(R) = 0\)). In this sense, they are completely made of negative energy.

    Models \ref{model:-r} to \ref{model:-sin} are less usual. They start with a small amount of negative energy density at the center (often taken to be zero for simplicity) and the absolute value of the energy density grows as one gets farther away. This is an uncommon situation with the peculiar feature that the equation of state yields a non-vanishing pressure for a vanishing energy density. Still, we include this case for completeness. These models satisfy the NEC in the deep interior of the star because the pressure is positive, but the energy density is very small. The NEC is still violated on the outer layers (as the Penrose--Sorkin--Woolgar theorem demands).

    Models \ref{model:12-r} to \ref{model:12cos} are the most interesting examples. Their density profiles are plotted in figure \ref{fig: jk-profiles}. These density profiles flip signs inside the star. In these two particular cases, the stars have a positive mass core, but their outer layers are made of negative mass densities. Since the pressure is everywhere positive inside the star, the core satisfies the NEC, but the outer layers do not. As one can tell from table \ref{tab: profiles}, this can be fine-tuned in such a way that the ANEC is respected by radial geodesics (even though the Penrose--Sorkin--Woolgar theorem implies it must be violated for other geodesics). The reason is that radial null geodesics will cross a region with a substantial amount of positive energy, which suffices to balance out the negative amounts obtained in the outer shells. Nevertheless, since the outer shells have negative energy densities, a ``glancing'' null geodesic, \ie, one that barely penetrates the stellar surface, can cross a region of strictly negative energy and violate the ANEC. This is illustrated in figure \ref{fig: geodesics}. Incidentally, the Penrose--Sorkin--Woolgar theorem entails that at least one of these glancing geodesics must be inextendible and achronal.
    
    \begin{figure}[t]
        \centering
        \begin{tikzpicture}
            \begin{axis}[
                ymin=-0.6,ymax=0.6,
                xmin=0,xmax=1,
                minor tick num = 1,
                xmajorgrids=true,
                ymajorgrids=true,
                legend style={at={(0.97,0.95)},anchor=north east},
                xlabel = \(r\),
                ylabel = {\(\rho(r)\)},
                width = 8cm,
                height= 5cm
            ]
            
            \addplot [
                domain=0:1, 
                samples=100, 
                color=NickRed,
                line width = 1.2pt,
                ]
                {0.5-x};
            \addlegendentry{\ref*{model:12-r}}
            
            \addplot [
                domain=0:1, 
                samples=100, 
                color=NickGreen,
                line width = 1.2pt,
                ]
                {0.2-x*x};
            \addlegendentry{\ref*{model:25-r2}}

            \addplot [
                domain=0:1, 
                samples=100, 
                color=NickYellow,
                line width = 1.2pt,
                ]
                {0.5*cos(180*x)};
            \addlegendentry{\ref*{model:12cos}}
            \end{axis}
        \end{tikzpicture}
        \caption{Energy density profiles of models \ref{model:12-r} to \ref{model:12cos} in table \ref{tab: profiles}.}
        \label{fig: jk-profiles}
    \end{figure}
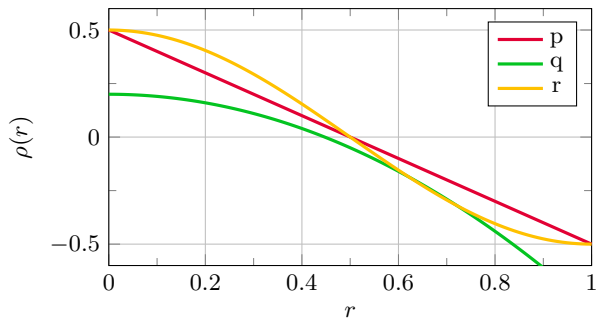

    \begin{figure}[t]
        \centering
        \begin{tikzpicture}
            \begin{axis}[
                trig format plots=rad,
                axis equal,
                hide axis,
                width = 8cm,
                height= 8cm
            ]
                \addplot [domain=0:2*pi, samples=200, black] ({cos(x)}, {sin(x)});
                
                \addplot[NickRed] table [col sep=comma] {l025.csv};
                \addplot[NickRed] table [col sep=comma] {l050.csv};
                \addplot[NickRed] table [col sep=comma] {l075.csv};
            \end{axis}
        \end{tikzpicture}
        \caption{Different null geodesics in the spacetime of star model \ref{model:12-r} (see table \ref{tab: profiles}). The black circle represents the star's surface. The geodesic that comes the closest to the center has a positive value for the ANEC integral, while the remaining ones violate the ANEC.}
        \label{fig: geodesics}
    \end{figure}
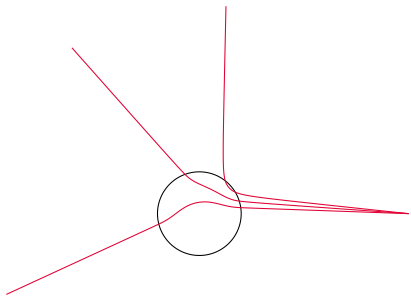
    
    Once we consider the gravitational dynamics of particles with different mass signs as discussed in the introduction, configurations with sign flips appear to be unstable. The positive core is repelled by the outer layers but has nowhere to go. The perfect spherical symmetry of the star keeps the core in what seems to be an unstable equilibrium. Meanwhile, the outer layers fall toward the core. This argument suggests the impossibility of the reversed situation: a negative core would be attracted by the positive layers, which would be repelled by the core. It seems one would need, at the very least, negative pressures in the positive-energy region to keep the system together. Indeed, suppose the total mass \(M\) is negative and that \(\rho(R) > 0\). Since \(P(R) = 0\), we see that the TOV equation leads to
    \begin{equation}
        \eval{\dv{P}{r}}_{R} = - \rho(R) \frac{M}{R(R-2M)} > 0.
    \end{equation}
    Hence, \(P(r)\) must be negative for \(r < R\) sufficiently close to \(R\). Negative pressures have been considered elsewhere in the study of ultra-compact objects, such as in the Mazur--Mottola gravastar \cite{mazur2004GravitationalVacuumCondensate,mazur2023GravitationalCondensateStars}. We shall not focus on this direction in this work due to the reasons discussed at the end of Section \ref{sec: stability-negative-mass}. 
    
    The above argument suggests that negative-mass stars with a sign flip should be unstable against perturbations that break spherical symmetry. It should be mentioned, however, that negative-mass stars without any sign flips also have curious equilibrium dynamics \cite{novikov2018StarsCreatingGravitational}. Since the gravitational interaction between negative masses is repulsive, the gravitational force tries to make a negative-mass star explode. Meanwhile, the pressure gradient applies an outward-pointing force on the fluid elements, which are then accelerated inward due to the negative mass sign. Hence, in a negative-mass star, the pressure pulls the star inward while gravity tries to make the star explode (rather than collapse). Thus, the roles of pressure and gravity in the stellar equilibrium dynamics are reversed.

\subsection{Solutions with Equations of State}
    Next, we consider examples of stars with a negative mass arising from a (barotropic) equation of state. 

    First, we should mention it seems to be particularly difficult to use this method to generate finite stars with a sign flip, such as models \ref{model:12-r} and \ref{model:25-r2} of table \ref{tab: profiles}. The reason is as follows. Assuming the star has positive pressure, there is a point inside the star where the NEC holds (because the pressure is always positive and the sign flip implies there is some point with positive energy density). If the NEC held everywhere, the star would have positive mass due to the Penrose--Sorkin--Woolgar theorem. Hence, the NEC must be violated somewhere. By continuity, this implies that there is a point with \(P+\rho = 0\). Since the pressure obeys the TOV equation and we are assuming the equation of state to be barotropic, we find that
    \begin{equation}\label{eq: rho-tov}
        \dv{\rho}{r} = - (P+\rho) \frac{4\pi P r^3 + m}{r[r-2m]} \dv{\rho}{P}.
    \end{equation}
    We see then that, at the point with \(P+\rho = 0\), both the derivatives of \(P\) and \(\rho\) typically vanish on account of Eqs. (\ref{eq: tov}) and (\ref{eq: rho-tov}). Since the differential equations are of first order, this means \(P\) and \(\rho\) become constant from that point onward, which implies the star will be infinite.

    This behavior can be bypassed if the equation of state is fine-tuned so that \(\dv*{\rho}{P}\) diverges precisely when \(P+\rho\) vanishes. This makes \(P+\rho\) traverse the point where it is zero, and the TOV equation allows the pressure to continue evolving. 

    Such fine-tuning can be easily enforced when creating an equation of state, so it is not problematic as long as one is aware of it. There is, however, a second concern. Both the TOV equation and Eq. (\ref{eq: rho-tov}) also have a factor of the form
    \begin{equation}
        4 \pi P r^3 + m.
    \end{equation}
    Assuming the NEC holds at the center of the star and that the pressure is positive, one can conclude that the strong energy condition (SEC) holds at the center of the star: \(\rho(0) + 3P(0) \geq 0\) and \(\rho(0) + P(0) \geq 0\) \cite{curiel2017PrimerEnergyConditions,kontou2020EnergyConditionsGeneral,martin-moruno2017ClassicalSemiclassicalEnergy,witten2020LightRaysSingularities}. In fact, we have \(\rho(0) + 3P(0) > 0\), since \(P(0) > 0\). This implies that, for sufficiently small \(r\),
    \begin{equation}
            4 \pi P r^3 + m = \frac{4\pi}{3} \int_0^r (3P(0) + \rho(r')) r'^2 \dd{r'} > 0,
    \end{equation}
    which follows from the fact that \(\rho(0) + 3P(0) > 0\) implies \(\rho(r) + 3P(0) > 0\) for sufficiently small \(r\), by continuity. Hence, \(4 \pi P r^3 + m\) is positive near the star's center. Meanwhile, it is negative at its boundary because there we have \(P(R) = 0\) and \(m(R) < 0\). Hence, it must flip sign at least once due to the intermediate value theorem. At the point in which \(4 \pi P r^3 + m = 0\), we run into a new problem as before: the evolution of \(P\) and \(\rho\) reaches an extremum. Since the derivative of \(m\) does not vanish, the solution manages to traverse this point and keep evolving. However, unless \(\dv*{\rho}{P}\) diverges to keep \(\dv*{\rho}{r}\) nonzero, \(\rho\) will ``turn around'': the extremum point \(r_0\) is reached and, if \(\rho\) decreased for \(r<r_0\), it starts to grow for \(r>r_0\). This means that if \(\rho\) approaches the value \(\rho^*\) in which \(P(\rho^*) = 0\) for \(r<r_0\) (signaling the end of the star), it departs from \(\rho^*\) for \(r>r_0\).

    Counteracting this effect requires an even finer tuning of the equation of state to ensure that \(\dv*{\rho}{P}\) diverges exactly at the point in which \(4 \pi P r^3 + m = 0\). This difficulty is illustrated for model \ref{model:12-r} of table \ref{tab: profiles} in figures \ref{fig: fit-eos} and \ref{fig: profile-vs-eos}.

    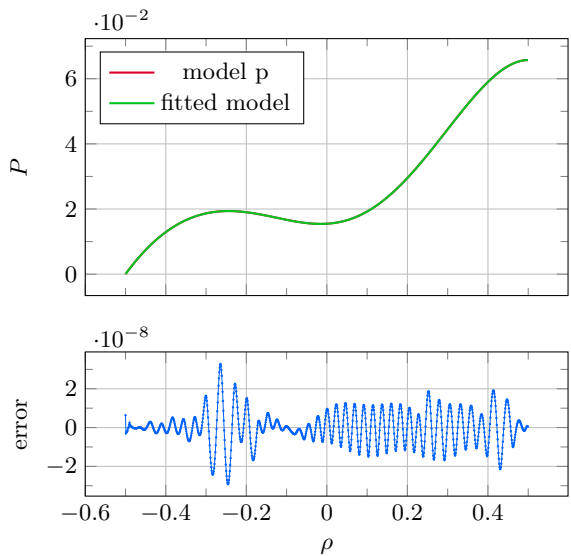
\begin{figure}[t]
        \centering
        \begin{tikzpicture}
            \begin{groupplot}[
                group style={group size=1 by 2,
                    x descriptions at=edge bottom,
                    vertical sep=0.75cm,
                    },
                minor tick num = 1,
                xmajorgrids=true,
                ymajorgrids=true,
                xlabel = {\(\rho\)},
                yticklabel style = {text width=1.35em, align=right},
                legend style={at={(0.03,0.95)},anchor=north west},
                width = 8cm,
            ]
            \nextgroupplot[ylabel = {\(P\)},
                height= 5cm
                ]
                
                \addplot[NickRed, thick] table [col sep=comma] {RawData.csv};
                    \addlegendentry{model \ref*{model:12-r}}
                \addplot[NickGreen, thick] table [col sep=comma] {FitData.csv};
                    \addlegendentry{fitted model}
            \nextgroupplot[height= 3.5cm,
                ylabel = {residuals},
                ]
                
                \addplot[NickBlue,mark=*,mark size=0.25pt,line width=0.05pt] table [col sep=comma] {FitError.csv};
            \end{groupplot}
        \end{tikzpicture}
        \caption{Fitted equation of state for the star model \ref{model:12-r} of table \ref{tab: profiles}. The fit was obtained by sampling a thousand points inside the star with evenly spaced values of \(r\). Using these thousand datapoints, we fitted a 25th-degree polynomial with the extra condition that \(\dv*{P}{\rho} = 0\) at the point with \(P(\rho) + \rho = 0\). Top: parametric plot of the equation of state obtained from the original profile by varying \(r\) along the star, together with the fitted equation of state. The difference between the two graphs is too small to be perceived. Bottom: residuals in the fit, obtained by subtracting the fit from the original values obtained directly from model \ref{model:12-r}.}
        \label{fig: fit-eos}
    \end{figure}

    \begin{figure}[t]
        \centering
        \begin{tikzpicture}
            \begin{groupplot}[
                group style={group size=1 by 2,
                    x descriptions at=edge bottom,
                    vertical sep=0.25cm,
                    },
                minor tick num = 1,
                xmajorgrids=true,
                ymajorgrids=true,
                xlabel = {\(r\)},
                yticklabel style = {text width=2.35em, align=right},
                legend style={at={(0.97,0.95)},anchor=north east},
                width = 8cm,
                height= 5cm
            ]
            \nextgroupplot[ylabel = {\(P\)},
                ]
                
                \addplot[NickRed,thick] table [x=r,y=P,col sep=comma] {ProfileSol.csv};
                    \addlegendentry{model \ref*{model:12-r}}
                \addplot[NickGreen,thick] table [x=r,y=P,col sep=comma] {EoSSol.csv};
                    \addlegendentry{fitted model}
            \nextgroupplot[
                ylabel = {\(\rho\)},
                ]
                
                \addplot[NickRed,thick] table [x=r,y=rho,col sep=comma] {ProfileSol.csv};
                \addplot[NickGreen,thick] table [x=r,y=rho,col sep=comma] {EoSSol.csv};
            \end{groupplot}
        \end{tikzpicture}
        \caption{Pressure and energy density for a star with the energy density profile of star model \ref{model:12-r} (see table \ref{tab: profiles}) and with the fitted equation of state of figure \ref{fig: fit-eos}. Notice that the tiny errors in the fit are sufficient to keep the star from behaving as desired. Those errors force the energy density to reach a local minimum, causing the pressure to depart from the point at which it would vanish. The star becomes infinite.}
        \label{fig: profile-vs-eos}
    \end{figure}
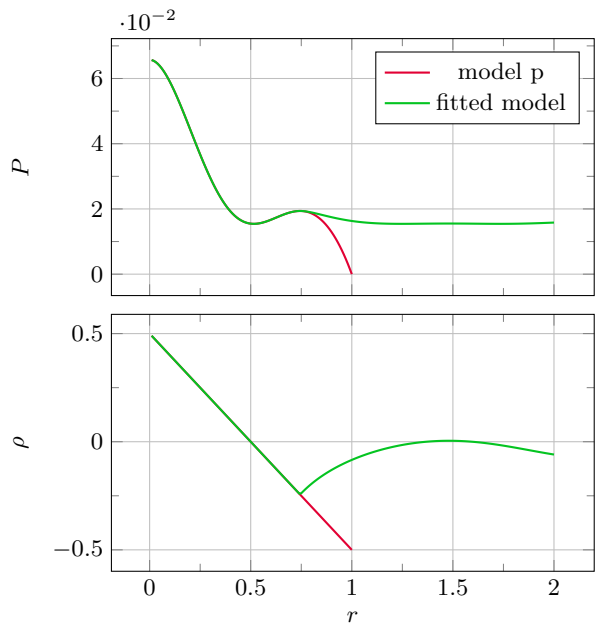 

    With this in mind, we focus on stars with negative-semidefinite energy density. Novikov, Bisnovatyi-Kogan, and Novikov previously considered two examples of this kind \cite{novikov2018StarsCreatingGravitational}, so we consider one of their models (their other model, with a linear equation of state, does not lead to finite-radius stars) and a few generalizations. More specifically, we will consider stars with an equation of state with the form
    \begin{equation}\label{eq: novikov-polytrope}
        P = (-1)^\Gamma \rho^\Gamma,
    \end{equation}
    where \(\Gamma\) is a positive integer for simplicity \footnote{In astrophysics, it is more common to drop the signal in Eq. (\ref{eq: novikov-polytrope}) and consider \(\Gamma = 1 + \frac{1}{n}\). This would be a relativistic polytrope, such as the ones considered in Section \ref{sec: stability}. However, since \(\rho\) is now negative, we keep \(\Gamma\) an integer so that the equation of state is differentiable.}. Ref. \cite{novikov2018StarsCreatingGravitational} considered this equation of state with \(\Gamma=1\)---which leads to an infinite star---and with \(\Gamma=2\). Notice this model is essentially a relativistic polytrope with negative mass.

    As noted in Ref. \cite{novikov2018StarsCreatingGravitational}, there is a bound on the parameters of the equation of state for the star to have a finite size. If the (negative) central density \(\rho_0\) is such that
    \begin{equation}\label{eq: novikov-polytrope-rho-crit}
        \rho_0 \leq \rho_c \equiv - 3^{\frac{1}{1-\Gamma}},
    \end{equation}
    then the star is infinite. The reason is that for \(\rho_0 = \rho_c\) the star admits a solution with a constant density (and hence a constant pressure). Notice that the bound diverges for \(\Gamma=1\), and all central densities lead to infinite stars.

    Figure \ref{fig: novikov-polytrope-plots} exhibits the mass-radius diagrams and mass per central density diagrams for a few stars satisfying the equation of state (\ref{eq: novikov-polytrope}). It is easy to see how the masses (and thus the radii) diverge as \(\rho_0\) approaches the critical central density \(\rho_c = -3^{\frac{1}{1-\Gamma}}\).

    \begin{figure*}
        \centering
        \begin{tikzpicture}
            \begin{groupplot}[
                group style={group size=2 by 1,y descriptions at=edge left,horizontal sep=0.1cm,},
                minor x tick num = 1,
                minor y tick num = 0,
                xmajorgrids=true,
                ymajorgrids=true,
                ylabel = {\(M\)},
                ytick={-11,-7,-3,1,5},
                yticklabels={\num{-e11},\num{-e7},\num{-e3},\num{-e-1},\num{-e-5}},
                legend style={at={(1.03,0.95)},anchor=north west},
                width = 8cm,
                height= 5cm
            ]
            \nextgroupplot[xlabel = {\(R\)},
                xmode = log]
                
                \addplot[NickRed,mark=*,mark size=1pt] table[x=R,y=log]\tabletwo;
            
                \addplot[NickGreen,mark=triangle*,mark size=1.2pt] table[x=R,y=log]\tablethree;
            
                \addplot[NickYellow,mark=square*,mark size=1pt] table[x=R,y=log]\tablefour;
            
                \addplot[NickPurple,mark=pentagon*,mark size=1.2pt] table[x=R,y=log]\tablefive;
            
            \nextgroupplot[xlabel = {\(\rho_0\)},
                ]
            
                \addlegendimage{empty legend}
                    \addlegendentry{\hspace{-.6cm}\(\Gamma\)}
                
                \addplot[NickRed,mark=*,mark size=1pt] table[x=rhoc,y=log]\tabletwo;
                    \addlegendentry{\(2\)}
            
                \addplot[NickGreen,mark=triangle*,mark size=1.2pt] table[x=rhoc,y=log]\tablethree;
                    \addlegendentry{\(3\)}
            
                \addplot[NickYellow,mark=square*,mark size=1pt] table[x=rhoc,y=log]\tablefour;
                    \addlegendentry{\(4\)}
            
                \addplot[NickPurple,mark=pentagon*,mark size=1.2pt] table[x=rhoc,y=log]\tablefive;
                    \addlegendentry{\(5\)}
            \end{groupplot}
        \end{tikzpicture}
        \caption{Mass-radius diagram and mass per central density \(\rho_0\) for negative-mass stars satisfying the equation of state (\ref{eq: novikov-polytrope}). Notice how the masses (and hence the radii) diverge close to the limiting value \(\rho_c = - 3^{\frac{1}{1-\Gamma}}\).}
        \label{fig: novikov-polytrope-plots}
    \end{figure*}
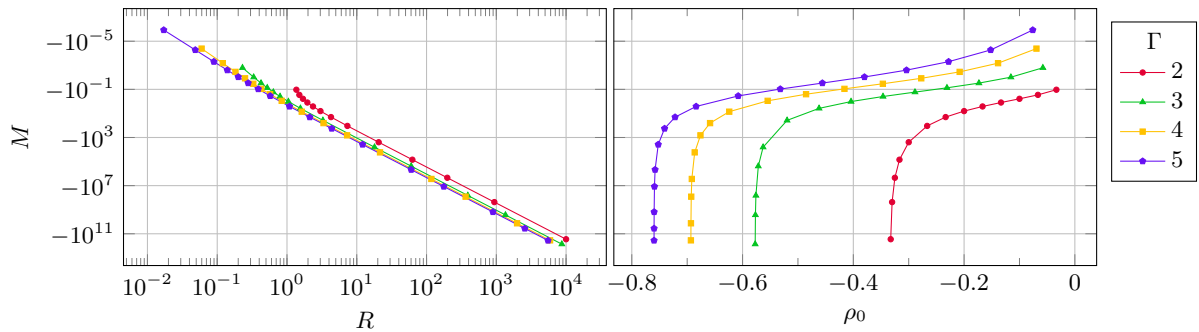

    It is interesting to point out that the models with the equation of state (\ref{eq: novikov-polytrope}) consistently violate the NEC.

\section{Stability of Negative-Mass Stars}\label{sec: stability-negative-mass}
The steps needed to perform the stability analysis of negative-mass stars are now the following:
\begin{enumerate}
    \item choose a density profile or an equation of state (in which case a value for the central density will also be needed);
    \item solve the TOV equation with this information and obtain the functions \(p_{\star}\), \(q_{\star}\), and \(w_{\star}\) of section \ref{sec: stability};
    \item if the model came from a density profile, verify that the parametric plot of \((\rho(r),P(r))\) is the graph of a function to ensure the model can be assumed to be barotropic;
    \item employ the variational or the Bardeen technique to determine if the star is stable.
\end{enumerate}

For some of the models we want to consider, the trial functions of Eq. (\ref{eq: trial-function}) will not be sufficient to establish instability. Hence, we define a new family of trial functions. To do so, we will first define the auxiliary function
\begin{equation}
    \theta_0(x) = \begin{cases}
        0, &\text{if } x \leq 0, \\
        \exp\pqty{-\frac{1}{x}}, &\text{if } x >0.
    \end{cases}
\end{equation}
\(\theta_0\) is a textbook example of a function that is everywhere smooth but fails to be analytic at \(x = 0\). Using \(\theta_0\) we can construct the functions
\begin{equation}\label{eq: bump-function}
    \theta_{a,b}(x) = \frac{\theta_0(a^2 - (x-b)^2)}{\theta_0(a^2)}.
\end{equation}
These are standard examples of smooth bump functions. These functions are everywhere smooth but have compact support. That is, \(\theta_{a,b}(x) = 0\) for any \(x\) such that \(\abs{x-b} > a\). For our purposes, we will consider bump functions with \(0 < b-a < b+a < R\), which ensures they are non-vanishing only in the interior of the star under consideration. Since the function and all of its derivatives vanish at the origin and the stellar surface, the boundary conditions for the pulsation equation are satisfied.

\subsection{Models with Density Profiles}
Variational stability tests for the models arising from density profiles are shown in table \ref{tab: stability}. It turns out that, for many models considered in table \ref{tab: profiles}, the trial function (\ref{eq: trial-function}) with \(n=2\) was sufficient to establish instability. Recall that all of these models violate the NEC consistently inside the star: they are such that \(P(r) + \rho(r) < 0\) for all \(r < R\) (but possibly not at \(r=R\)). Nevertheless, all of these models have \(\dv*{P}{\rho} < 0\), which renders \(p_{\star}\) negative. As a result, we cannot apply Bardeen's technique.

\begin{table}[t]
    \centering
    \caption{Values of \(n\) for the trial function of Eq. (\ref{eq: trial-function}) and of the quantity \(S[\psi_n]\) defined on Eq. (\ref{eq: S-psi}) for different profile choices \(\rho\). The leftmost column labels each stellar model following table \ref{tab: profiles} to facilitate discussion in the main text.}
    \label{tab: stability}
    \begin{ruledtabular}
    \begin{tabular}{cccc}
        model & \(\rho(r)\) & \(n\) & \(S[\psi_n]\) \\ \hline
        \ref*{model:r-1} & \(r-1\) & \num{2} & \num{-2.234e1} \\
        \ref*{model:r-2} & \(r-2\) & \num{2} & \num{-5.964e2} \\
        \ref*{model:r2-1} & \(r^2-1\) & \num{2} & \num{-1.095e2} \\
        \ref*{model:r2-2} & \(r^2-2\) & \num{2} & \num{-1.302e3} \\
        \ref*{model:r3-1} & \(r^3-1\) & \num{2} & \num{-3.813e2} \\
        \ref*{model:r3-2} & \(r^3-2\) & \num{2} & \num{-3.705e3} \\
        \ref*{model:exp-e} & \(\exp(r)-e\) & \num{2} & \num{-1.485e2} \\
        \ref*{model:exp-2e} & \(\exp(r)-2e\) & \num{2} & \num{-6.566e3} \\
        \ref*{model:-cos} & \(-\cos(\frac{\pi r}{2})\) & \num{2} & \num{-8.304e1}
    \end{tabular}
    \end{ruledtabular}
\end{table}

The trial function (\ref{eq: trial-function}) does not yield a good bound for models that partially satisfy the NEC (models \ref{model:-r} to \ref{model:12cos} in table \ref{tab: profiles}). For them, a better result is obtained using the bump function (\ref{eq: bump-function}). The results are shown in table \ref{tab: stability-bump}. All models are unstable. Since \(p_{\star}\) changes its sign inside the star, we cannot apply Bardeen's technique.

\begin{table}[t]
    \centering
    \caption{Values of \(a\) and \(b\) for the trial function of Eq. (\ref{eq: bump-function}) and of the quantity \(S[\theta_{a,b}]\) defined on Eq. (\ref{eq: S-psi}) for different profile choices \(\rho\). The leftmost column labels each stellar model following table \ref{tab: profiles} to facilitate discussion in the main text.}
    \label{tab: stability-bump}
    \begin{ruledtabular}
    \begin{tabular}{ccccc}
        model & \(\rho(r)\) & \(a\) & \(b\) & \(S[\theta_{a,b}]\) \\ \hline
        \ref*{model:-r} & \(-r\) & \num{1.00e-1} & \num{1.50e-1} & \num{-5.603e2} \\
        \ref*{model:-r2} & \(-r^2\) & \num{1.00e-1} & \num{3.00e-1} & \num{-5.412e2} \\
        \ref*{model:-r3} & \(-r^3\) & \num{1.00e-1} & \num{5.00e-1} & \num{-1.004e3} \\
        \ref*{model:-exp} & \(-\exp(r)\) & \num{1.00e-1} & \num{1.50e-1} & \num{-4.705e3} \\
        \ref*{model:-sin}& \(-\sin(\frac{\pi r}{2})\) & \num{1.00e-1} & \num{1.25e-1} & \num{-6.186e2} \\
        \ref*{model:12-r} & \(\frac{1}{2}-r\) & \num{1.25e-1} & \num{6.00e-1} & \num{-2.881e2} \\
        \ref*{model:25-r2} & \(\frac{2}{5}-r^2\) & \num{6.25e-2} & \num{7.50e-1} & \num{-5.095e3} \\
        \ref*{model:12cos} & \(\frac{1}{2}\cos(\pi r)\) & \num{1.00e-1} & \num{6.25e-1} & \num{-1.293e3} \\
    \end{tabular}
    \end{ruledtabular}
\end{table}

\subsection{Schwarzschild Star}\label{subsec: schwarzschild-star-stability}
Tables \ref{tab: stability} and \ref{tab: stability-bump} address almost all models of table \ref{tab: profiles}, the exception being model \ref{model:const}---the Schwarzschild star. A naive calculation for model \ref{model:const} would lead to infinite \(\dv*{P}{\rho}\) and to an ill-definition of \(p_{\star}\). This is because \(\dv*{\rho}{r}\) is zero since the density is constant. Nevertheless, one can treat \(\qty(\pdv*{P}{\rho})_s\) as finite by considering it corresponds to the value of the perturbations and that these perturbations do not have constant energy density. In Refs. \cite{chandrasekhar1964DynamicalInstabilityApJ,chandrasekhar1964DynamicalInstabilityPRL}, Chandrasekhar does exactly this by treating \(\gamma\) (as defined in Eq. (\ref{eq: gamma-definition})) as a constant. He then finds there is a critical value for \(\gamma\) which determines the onset of stability. We can establish a similar result. Define \(\bar{p}_{\star}\) through \(p_{\star}(r) = \gamma \bar{p}_{\star}(r)\). For the Schwarzschild star, one can compute \(\bar{p}_{\star}\), \(q_{\star}\), and \(w_{\star}\) exactly. The important properties about them are that \(\bar{p}_{\star}\) and \(q_{\star}\) are negative throughout the interior of the star, \(\bar{p}_{\star}\) only vanishes at the boundary, and \(q_{\star}\) never vanishes. The variational criterion establishes then that the star will be unstable if and only if,
\begin{equation}
    \frac{\int_0^R \bar{p}_{\star}(r) \psi'(r)^2 \dd{r} \gamma - \int_0^R q_{\star}(r)\psi(r)^2 \dd{r}}{\int_0^R w_{\star}(r) \psi(r)^2 \dd{r}} < 0
\end{equation}
for some \(\psi \neq 0\) satisfying the boundary conditions. This means the necessary and sufficient condition for stability is 
\begin{equation}
    \gamma > \frac{\int_0^R q_{\star}(r)\psi(r)^2 \dd{r}}{\int_0^R \bar{p}_{\star}(r) \psi'(r)^2 \dd{r}}
\end{equation}
for some \(\psi \neq 0\) satisfying the boundary conditions. The expression on the right-hand side can never vanish. Hence, we can define a critical value for \(\gamma\) by
\begin{equation}
    \gamma_c = \inf_{\psi} \frac{\int_0^R q_{\star}(r)\psi(r)^2 \dd{r}}{\int_0^R \bar{p}_{\star}(r) \psi'(r)^2 \dd{r}}
\end{equation}
where in the infimum it is understood that \(\psi \neq 0\) and that the boundary conditions are satisfied. 

By considering the trial functions (\ref{eq: trial-function}), one can show that 
\begin{equation}
    \gamma_c \leq \inf_n \frac{\int_0^R q_{\star}(r)\psi_n(r)^2 \dd{r}}{\int_0^R \bar{p}_{\star}(r) \psi_n'(r)^2 \dd{r}} \approx \num{0.777774}
\end{equation}
The actual infimum over \(n>0\) can also be found analytically and occurs in the limit \(n \to 0\).

The bump functions (\ref{eq: bump-function}) generate much better estimates. Fix \(\epsilon > 0\) and denote \(I_{\epsilon} = (\epsilon,R-\epsilon)\). Assume \(\epsilon < b-a < b+a < R - \epsilon\), which means \(\theta_{a,b}\) is compactly supported in \(I_{\epsilon}\). In this case, we have
\begin{subequations}
    \begin{align}
        \gamma_c &\leq \frac{\int_0^R q_{\star}(r)\theta_{a,b}(r)^2 \dd{r}}{\int_0^R \bar{p}_{\star}(r) \theta_{a,b}'(r)^2 \dd{r}}, \\
        &\leq \frac{\max_{I_{\epsilon}} q_{\star}(r) \int_0^R \theta_{a,b}(r)^2 \dd{r}}{\min_{I_{\epsilon}} \bar{p}_{\star}(r) \int_0^R \theta_{a,b}'(r)^2 \dd{r}}, \\
        &= c \frac{\int_0^R \theta_{a,b}(r)^2 \dd{r}}{\int_0^R \theta_{a,b}'(r)^2 \dd{r}},
    \end{align}
\end{subequations}
where \(c\) is the finite and positive constant obtained from the ratio between the minimum and maximum. The minimum and maximum are finite and non-vanishing because we assumed only a compact region of the integral contributes. Notice that the value of \(b\) does not change the integral as long as we integrate over the whole support of \(\theta_{a,b}\). This leads to 
\begin{equation}\label{eq: gamma-c-Ia}
    \gamma_c \leq c \frac{\int_{-a}^{+a} \theta_{a,0}(r)^2 \dd{r}}{\int_{-a}^{+a} \theta_{a,0}'(r)^2 \dd{r}} \leq c I(a),
\end{equation}
where \(I(a)\) is defined as
\begin{equation}\label{eq: Ia}
    I(a) = \frac{\int_{-a}^{+a} \theta_{a,0}(r)^2 \dd{r}}{\int_{-a}^{+a} \theta_{a,0}'(r)^2 \dd{r}}.
\end{equation}

These integrals can be evaluated analytically in terms of Meijer G-functions~\cite{andrews1985SpecialFunctionsEngineers,bateman1954HigherTranscendentalFunctionsVol1,luke1969SpecialFunctionsTheirVol1}. See App. \ref{app: bump-functions}. One finds that, for \(a \ll 1\), \(I(a)\) behaves as \(I(a) = a^4 + \order{a^6}\). Hence, by taking a sequence of bump functions with smaller values of \(a\) we can bound \(\gamma_c\) by increasingly smaller values. In the limit when we take the infimum, we find that \(\gamma_c = 0\).

Since \(\gamma_c\) vanishes, then the fact that the Schwarzschild star consistently violates the NEC together with Eq. (\ref{eq: gamma-adiabatic}) implies that stability requires
\begin{equation}
    \qty(\pdv{P}{\rho})_s > 0.
    \label{dpdrhos}
\end{equation}
Eq. (\ref{dpdrhos}) is a generic feature of stable stars, and we will comment on this later in this section.

\subsection{Models with an Equation of State}
Stability tests for the models arising from the equation of state (\ref{eq: novikov-polytrope}) are shown in table \ref{tab: stability-novikov-polytrope}. All tests indicated instability. For fixed \(\Gamma\) and \(n\), the value of \(S[\psi_n]\) appeared to increase in absolute value as \(\rho_0/\rho_c\) increased. This suggests that more negative mass leads to less instability since \(\sigma^2\) is bounded farther away from zero. Larger values of \(\sigma\) in absolute value are associated with faster exponential growth of the perturbations.

\begin{table}[t]
    \centering
    \caption{Values of \(n\) for the trial function of Eq. (\ref{eq: trial-function}) and of the quantity \(S[\psi_n]\) defined on Eq. (\ref{eq: S-psi}) for stars with equation of state of the form (\ref{eq: novikov-polytrope}) and different central densities \(\rho_0\). The central densities are given in units of the critical central density defined in Eq. (\ref{eq: novikov-polytrope-rho-crit}).}
    \label{tab: stability-novikov-polytrope}
    \begin{ruledtabular}
    \begin{tabular}{cccc}
        \(\Gamma\) & \(\rho_0/\rho_c\) & \(n\) & \(S[\psi_n]\) \\ \hline
        \num{2} & \num{0.1} & \num{2} & \num{-2.608e-1} \\ 
        \num{2} & \num{0.5} & \num{2} & \num{-6.057} \\ 
        \num{2} & \num{0.9} & \num{2} & \num{-1.040e3} \\ 
        \num{3} & \num{0.1} & \num{2} & \num{-1.075} \\ 
        \num{3} & \num{0.5} & \num{2} & \num{-9.669} \\ 
        \num{3} & \num{0.9} & \num{2} & \num{-3.910e2} \\
        \num{4} & \num{0.1} & \num{2} & \num{-2.189} \\ 
        \num{4} & \num{0.5} & \num{2} & \num{-1.412e1} \\ 
        \num{4} & \num{0.9} & \num{2} & \num{-2.827e2} \\
        \num{5} & \num{0.1} & \num{2} & \num{-3.426} \\ 
        \num{5} & \num{0.5} & \num{2} & \num{-1.927e1} \\ 
        \num{5} & \num{0.9} & \num{2} & \num{-2.427e2} 
    \end{tabular}
    \end{ruledtabular}
\end{table}

All of these models have negative values for \(p_{\star}\) due to the fact that \(\qty(\pdv*{P}{\rho})_s < 0\). Hence, we cannot apply Bardeen's method.

\subsection{Missing Examples}\label{subsec: missing-examples}
When dealing with the Schwarzschild star, we found that it is possible to have stability if \(\qty(\pdv*{P}{\rho})_s > 0\), but all the models we considered so far---both coming from profiles and from equations of state---violate this condition at some point in the star. Such violations can be understood under the assumption of the absence of singularities and a barotropic equation of state, in which case \(\qty(\pdv*{P}{\rho})_s > 0\) implies negative energies are only possible if there are also negative pressures.

The reason is as follows. First, notice that the condition~\eqref{dpdrhos} implies that pressure is a monotonically increasing function of the energy density. For a hypothetical star with zero density, we can readily see from the TOV equation that the pressure must vanish identically: 
\begin{equation}
    \dv{P}{r} = - 4 \pi P^2 r,
\end{equation}
where we used the absence of singularities to ensure that \(m(0) = 0\), and thus \(m(r) = 0\) everywhere. Since the equation of state is barotropic, we can write
\begin{equation}
    \dv{\rho}{r} = - 4 \pi P^2 r \qty(\pdv{\rho}{P})_s = 0,
\end{equation}
where the expression vanishes by the assumption that \(\rho = 0\) identically. If we assume \(\qty(\pdv*{\rho}{P})_s \neq 0\), we conclude \(P = 0\). The universality of the equation of state implies that the pressure must vanish when the energy density vanishes. Since \(P(\rho)\) is increasing, we conclude that \(+\infty > \qty(\pdv*{P}{\rho})_s > 0\) demands negative pressures for negative energy densities. This also means the NEC must be violated at all points with negative energy densities. In fact, \(\qty(\pdv*{P}{\rho})_s \geq 0\) is all one needs to conclude that the NEC is violated.

This can be summarized in the following proposition.

\begin{proposition}\label{prop: EoS-TOV}
    Consider a non-singular stationary spherically symmetric star in general relativity described by the Tolman--Oppenheimer--Volkoff equation. Suppose the star satisfies a barotropic and differentiable equation of state \(P = P(\rho)\) (and, consequently, \(\dv*{P}{\rho} = \qty(\pdv*{P}{\rho})_s\) is nowhere infinite). Suppose further that the equation of state admits a vacuum solution with \(\rho = 0\) identically on a neighborhood of the origin. Then the following statements hold true.
    \begin{enumerate}
        \item If the energy density vanishes in an interval \([0,a)\), \(0 < a < R\), then the pressure must vanish in the same interval.
        \item The equation of state must satisfy \(P(\rho=0) = 0\).
        \item If \(\qty(\pdv*{P}{\rho})_s \geq 0\), then the NEC is violated at points with negative energy densities.
    \end{enumerate}
\end{proposition}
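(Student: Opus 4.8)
The plan is to derive all three claims from a single observation: a region of vanishing energy density forces the enclosed mass to vanish there, which collapses the TOV equation to an elementary first-order ODE whose only barotropic solution is $P \equiv 0$. For claim 1, I would first invoke non-singularity at the origin to pin $m(0) = 0$ (the standard condition used to write the interior metric (\ref{eq: staticsphericallineel})). If $\rho \equiv 0$ on $[0,a)$, then Eq. (\ref{eq: dm-dr}) gives $\dv*{m}{r} = 4\pi\rho r^2 = 0$ there, so $m \equiv 0$ on $[0,a)$. Substituting $m = 0$ and $\rho = 0$ into the TOV equation (\ref{eq: tov}) reduces it to $\dv*{P}{r} = -4\pi P^2 r$ on $(0,a)$. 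On the other hand, barotropy forces $P(r) = P(\rho(r)) = P(0)$ to be constant on $[0,a)$, so $\dv*{P}{r} = 0$ identically; comparing the two expressions gives $4\pi P(0)^2 r = 0$ for every $r \in (0,a)$, which forces $P(0) = 0$ and hence $P \equiv 0$ on $[0,a)$.

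For claim 2, I would run exactly the same chain of implications on the hypothesized vacuum solution, which by assumption has $\rho \equiv 0$ on some neighborhood $[0,\epsilon)$ of the origin: $\rho = 0$ there yields $m = 0$ there, the TOV equation degenerates to $\dv*{P}{r} = -4\pi P^2 r$, barotropy makes $P$ constant, and the ODE forces that constant to vanish. Since $P = P(\rho)$ is a single-valued function, the value it assigns to $\rho = 0$ is the same in this solution as in any other, so $P(\rho = 0) = 0$ for the equation of state itself.

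For claim 3, suppose $\qty(\pdv*{P}{\rho})_s \geq 0$. For a barotropic equation of state this quantity coincides with $\dv*{P}{\rho}$, the ordinary derivative of the equation-of-state function (finite by hypothesis), so $P$ is a non-decreasing function of $\rho$. Then at any radius where $\rho < 0$, monotonicity together with claim 2 gives $P = P(\rho) \leq P(0) = 0$, so $\rho + P \leq \rho < 0$ there. Since the NEC for a perfect fluid is the inequality $\rho + P \geq 0$, it is violated at every such point.

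The mathematical content is elementary once the TOV equation is in hand; the care required is entirely in the bookkeeping. I expect the main (minor) obstacle to be the behavior at the coordinate singularity $r = 0$: one must justify that $m(0) = 0$ genuinely follows from non-singularity and that both the TOV relation and the derived ODE $\dv*{P}{r} = -4\pi P^2 r$ extend continuously down to $r = 0$, so that $P(0) = 0$ is a legitimate consequence rather than an artifact of the indeterminacy of the TOV right-hand side there. A secondary point is to state explicitly that the value $P(\rho = 0)$ extracted from the vacuum solution is the same function value entering the star's equation of state, which is just the assertion that $P$ is well-defined as a function of $\rho$ but deserves to be spelled out.
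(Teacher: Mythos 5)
Your proof is correct and follows essentially the same route as the paper: in a region of vanishing density the TOV equation degenerates to \(\dv*{P}{r} = -4\pi P^2 r\) (using \(m(0)=0\) from non-singularity), barotropy forces the pressure to be constant there and hence identically zero, universality of the equation of state then gives \(P(\rho=0)=0\), and monotonicity yields the NEC violation at negative densities. Your middle step---constant \(\rho\) plus \(P=P(\rho)\) directly giving constant \(P\)---is marginally cleaner than the paper's, which instead differentiates \(\rho\) via the chain rule and invokes \(\qty(\pdv*{\rho}{P})_s \neq 0\), but the two are interchangeable given the stated hypotheses.
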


Proposition \ref{prop: EoS-TOV} may seem in conflict with the results concerning stars obtained from a profile reported in section \ref{sec: examples}. Namely, figure \ref{fig: fit-eos} clearly shows an equation of state with \(P(\rho = 0) \neq 0\) which satisfies the TOV equation. However, as explained in section \ref{sec: examples}, stars with a sign flip originating from a choice of profile have extremely fine-tuned equations of state. Such equations of state may only hold for the particular solution to the TOV equations with a certain central density, in contradiction with the universality condition used in Proposition~\ref{prop: EoS-TOV}. They may admit no solutions in the physically meaningful case in which the central density is \(\rho(0) = 0\) (which should describe vacuum), and surely do not admit solutions in the case with \(\rho = 0\) identically. This behavior is also observed in an equation of state such as \(P = K_1\rho^4 + K_2\rho^2 + P_0\) for constant \(K_1 < 0\), \(K_2 > 0\), and \(P_0>0\), which does not admit a real solution for \(\rho(0) = 0\).

Can a star with \(\qty(\pdv*{P}{\rho})_s > 0\) have negative mass and negative pressure? Consider a star with negative pressure throughout the interior: \(P(r) < 0\) for all \(r \in [0,R)\). At the center, we have \(P(0) = P_0 < 0\) and \(\rho(0) = \rho_0 < 0\). Then the TOV equation tells us that, close to the center,
\begin{equation}
    \dv{P}{r} = - (P + \rho) \frac{4 \pi P r^3 + m}{r [r- 2m]} \leq 0,
\end{equation}
because \(P\), \(\rho\), and \(m\) are all negative there. We also have \(\qty(\pdv*{\rho}{P})_s > 0\), \(\dv*{\rho}{r} \leq 0\) as well. Hence, \(P\) and \(\rho\) can only become more negative as one gets farther from the center. Hence, there is no point \(R\) with \(P(R) = 0\) and the star is infinite. If we only assume \(\qty(\pdv*{P}{\rho})_s \geq 0\), we can adapt the argument by noticing that continuity impedes the positivity of \(\dv*{\rho}{r}\) at the points in which \(\qty(\pdv*{\rho}{P})_s\) diverges.

We can understand this result intuitively. In a negative-mass star, gravity is responsible for expanding the star, while pressure is responsible for contracting it. For pressure to contract the star, the pressure gradient must point inward. Hence, if pressure is negative, the star cannot be finite because the pressure would need to grow from a negative value up to zero close to the border, and hence the pressure gradient would point in the wrong direction close to the border.

The following proposition makes this result rigorous by means of the TOV equation.

\begin{proposition}\label{prop: NEC-boundary}
        Consider a static spherically symmetric star in general relativity with radius \(R > 0\) defined by the condition that \(P(R) = 0\), with \(P(r) < 0\) for all \(R - \epsilon < r < R\), for some \(\epsilon > 0\). Assume that the star's pressure \(P\) is differentiable in \([R-\epsilon,R]\) and that the energy density \(\rho\) is continuous in this interval. If the star has a negative total mass, the NEC must be satisfied in a subinterval of \([R-\epsilon,R]\).
    \end{proposition}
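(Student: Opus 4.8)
The plan is to extract everything from the TOV equation~\eqref{eq: tov} evaluated right at the stellar surface. Since \(P(R)=0\) and \(m(R)=M\), Eq.~\eqref{eq: tov} collapses at \(r=R\) to
\begin{equation}
    \eval{\dv{P}{r}}_{R} = -\,\rho(R)\,\frac{M}{R(R-2M)}.
\end{equation}
First I would settle the signs appearing here. Because \(M<0\), we have \(R-2M = R+2\abs{M} > R > 0\), so the factor \(M/[R(R-2M)]\) is strictly negative. Next, since \(P(R)=0\) while \(P(r)<0\) for all \(r\in(R-\epsilon,R)\), the pressure attains its maximum over the interval \([R-\epsilon,R]\) at the right endpoint \(r=R\); a function differentiable at such an endpoint has non-negative (one-sided) derivative there, so \(\eval{\dv*{P}{r}}_{R}\geq 0\). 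Feeding these two facts into the displayed identity forces \(\rho(R)\geq 0\), whence \((\rho+P)(R)=\rho(R)\geq 0\): the NEC already holds at the boundary point itself.

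It then remains to upgrade ``at a point'' to ``on a subinterval''. In the generic situation \(\rho(R)>0\) this is immediate: \(\rho+P\) is continuous (\(\rho\) is assumed continuous and \(P\) is differentiable, hence continuous), so there is \(\delta>0\) with \(\rho+P>0\) on \([R-\delta,R]\). The borderline case \(\rho(R)=0\) --- equivalently \(\eval{\dv*{P}{r}}_{R}=0\) --- has to be treated by contradiction. Suppose \(\rho+P<0\) on some whole left-neighborhood \((R-\delta',R)\). Shrinking \(\delta'\) so that the combination \([4\pi P r^{3}+m]/[r(r-2m)]\), which tends to \(M/[R(R-2M)]<0\) as \(r\to R\), remains strictly negative on \((R-\delta',R)\), the TOV equation gives \(\dv*{P}{r} = -(\rho+P)\cdot(\text{negative}) < 0\) there; hence \(P\) is strictly decreasing on \((R-\delta',R)\) and so \(P(r)>P(R)=0\) for \(r\) just below \(R\), contradicting \(P<0\). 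Therefore \(\{\,r : \rho+P\geq 0\,\}\) meets every left-neighborhood of \(R\), and by applying the same monotonicity argument to each connected component of the open set \(\{\rho+P<0\}\) accumulating at \(R\) (each such component must have its right endpoint strictly below \(R\)), one isolates a nondegenerate interval abutting \(R\) on which the NEC holds.

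The only genuine obstacle I anticipate is this degenerate case \(\rho(R)=0\): there \((\rho+P)(R)=0\), so soft continuity no longer delivers an open interval for free, and one is forced to control how \(\{\rho+P<0\}\) can approach \(R\) using the structure of the TOV flow rather than topology alone. Everything else --- the evaluation at \(r=R\), the sign bookkeeping that uses \(M<0\), and the remark that a differentiable function maximized at a boundary point has the right one-sided derivative sign --- is entirely routine.
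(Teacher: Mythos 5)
Your setup and sign bookkeeping coincide with the paper's proof: evaluate the TOV equation at \(r=R\), use \(M<0\) to conclude \(4\pi P(R)R^3+m(R)=M<0\) and \(R-2M>0\), and note that \(P\) attains an endpoint maximum at \(R\) so that \(P'(R)\geq 0\), forcing \(\rho(R)+P(R)=\rho(R)\geq 0\); the case \(\rho(R)>0\) then follows by continuity exactly as in the paper. The gap is in the degenerate case \(\rho(R)=0\). Your contradiction argument correctly shows that \(\rho+P<0\) cannot hold on any entire left-neighborhood of \(R\), but this only produces \emph{points} with \(\rho+P\geq 0\) accumulating at \(R\); since the value there may be exactly zero, continuity does not upgrade a point to an interval, and a priori the open set \(\{\,r:\rho+P<0\,\}\) could still be dense in every left-neighborhood of \(R\), leaving \(\{\,r:\rho+P\geq 0\,\}\) with empty interior. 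The closing sentence, in which you ``apply the same monotonicity argument to each connected component'' to ``isolate a nondegenerate interval abutting \(R\),'' does not work as stated: for an interior component \((a,b)\) with \(b<R\), the strict decrease of \(P\) only gives \(P(a)>P(b)\) with both values negative, which contradicts nothing; and the conclusion that the interval abuts \(R\) is stronger than what is provable, since the components of \(\{\rho+P<0\}\) may accumulate at \(R\) (the paper's interval sits around an interior point \(r^*\), not against the surface).

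The missing ingredient is a point where \(P'\) is \emph{strictly} positive, rather than the mere impossibility of \(P'<0\) throughout a neighborhood. This is precisely how the paper closes the case: choose \(\delta<\epsilon\) small enough that \(m<0\) on \((R-\delta,R]\), so that \(4\pi Pr^3+m<0\) and \(r-2m>0\) there; since \(P(R-\delta/2)<0=P(R)\), the mean value theorem supplies \(r^*\in(R-\delta/2,R)\) with \(P'(r^*)>0\), and the TOV equation at \(r^*\), where the factor multiplying \(-(P+\rho)\) is strictly negative, forces \(P(r^*)+\rho(r^*)>0\) strictly. Continuity of \(\rho+P\) then yields an open interval around \(r^*\) on which the NEC holds. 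With that one substitution for your final step, the argument is complete.
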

\begin{proof}
    From the conditions \(m(R) < 0\) and \(P(R) = 0\),
    \begin{equation}
        4\pi P(R) R^3 + m(R) < 0
    \end{equation}
    and 
    \begin{equation}
        R - 2m(R) > 0.
    \end{equation}

    Since \(P(r) < 0\) for all \(R - \epsilon < r < R\), we know that \(\dv*{P}{r} \geq 0\) at \(R\). Hence, the TOV equation reads
    \begin{equation}
        \eval{\dv{P}{r}}_{R} = - (P(R)+\rho(R)) \frac{4\pi P(R) R^3 + m(R)}{R [R - 2m(R)]} \geq 0.
    \end{equation}
    We see then that \(P(R) + \rho(R) \geq 0\).  

    If \(P(R) + \rho(R) > 0\), the result follows from continuity. Assume next that \(P(R) + \rho(R) = 0\). Since \(m(R) < 0\), continuity implies there is some \(\delta > 0\) (\(\delta < \epsilon\)) such that \(m(r) < 0\) for \(r \in (R-\delta,R]\). For \(r \in (R-\delta,R) \subsetneq (R-\epsilon,R)\) we know that \(P(r) < 0\). In particular,
    \begin{equation}
        P\qty(R - \frac{\delta}{2}) < 0.
    \end{equation}
    Since \(P(R) = 0\), the mean value theorem ensures that there is some point \(r^* \in \qty(R - \frac{\delta}{2},R)\) with \(P'(r^*) > 0\). At this point, we know that \(4\pi P(r^*) (r^*)^3 + m(r^*) < 0\). The TOV equation then enforces that \(P(r^*) + \rho(r^*) > 0\). Continuity ensures positivity holds in an interval.
\end{proof}

Hence, we learn that a negative mass star with negative pressure near the border must satisfy the NEC near the boundary. This contradicts our desire for a barotropic star with \(\qty(\pdv*{P}{\rho})_s \geq 0\) and \(P(\rho=0) = 0\). After all, there is at least one interval inside the star in which the pressure is negative, but the energy density is positive. We conclude that the barotropic scenarios with negative pressure and \(\qty(\pdv*{P}{\rho})_s \geq 0\) cannot lead to negative-mass stars. Since the star ends when the pressure changes sign, we conclude \(\qty(\pdv*{P}{\rho})_s < 0\) somewhere in the star.

These remarks can be summarized in the following proposition.

\begin{proposition}\label{prop: EoS-neg-mass}
    Consider a finite non-singular stationary spherically symmetric star in general relativity described by the Tolman--Oppenheimer Volkoff equation and satisfying the hypotheses of Proposition \ref{prop: NEC-boundary}. In particular, this implies the star is composed of a perfect fluid. Assume the fluid obeys a differentiable barotropic equation of state as in Proposition \ref{prop: EoS-TOV}. Then the star cannot have negative mass if the equation of state is such that \(\qty(\pdv*{P}{\rho})_s \geq 0\) everywhere inside the star.
\end{proposition}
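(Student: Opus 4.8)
The plan is to obtain the result as a short \emph{reductio} that merely stitches together Proposition \ref{prop: EoS-TOV} and Proposition \ref{prop: NEC-boundary}. I would assume the star has negative total mass, $M = m(R) < 0$. Since the equation of state is barotropic, differentiable, and admits a vacuum solution, Proposition \ref{prop: EoS-TOV}(ii) gives $P(\rho = 0) = 0$; and since $\qty(\pdv*{P}{\rho})_s \geq 0$ everywhere, the function $P(\rho)$ is non-decreasing, so $P(\rho) \leq 0$ for $\rho \leq 0$ and $P(\rho) \geq 0$ for $\rho \geq 0$. The key structural consequence I would record is the pointwise implication $\rho(r) > 0 \;\Longrightarrow\; P(r) = P(\rho(r)) \geq P(0) = 0$, equivalently $P(r) < 0 \;\Longrightarrow\; \rho(r) < 0$. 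This is the precise version of the informal claim that such an equation of state ``demands negative pressures for negative energy densities.''

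Next I would feed $M < 0$ into Proposition \ref{prop: NEC-boundary}, whose hypotheses are inherited by the present statement; in particular the pressure is negative on some interval $(R - \epsilon, R)$ just inside the surface. That proposition then supplies a nondegenerate subinterval $J \subseteq [R - \epsilon, R]$ on which the NEC $P + \rho \geq 0$ holds. Picking any $r_0 \in J$ with $r_0 < R$ — which exists because $J$ is nondegenerate — we have $P(r_0) < 0$, while the NEC forces $\rho(r_0) \geq - P(r_0) > 0$. But the structural implication from the previous step then yields $P(r_0) \geq 0$, contradicting $P(r_0) < 0$. Hence $M \geq 0$, which is the claim.

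Most of the real work has already been done inside Propositions \ref{prop: EoS-TOV} and \ref{prop: NEC-boundary}, so what remains here is essentially bookkeeping; the one point deserving care is verifying that the subinterval $J$ returned by Proposition \ref{prop: NEC-boundary} genuinely meets the open interval $(R - \epsilon, R)$ on which $P < 0$, so that the chosen $r_0$ has $P(r_0) < 0$ and not merely $P(r_0) = P(R) = 0$. This is read off directly from the two cases in that proposition's proof (where $J$ either abuts $R$ from the left, when $\rho(R) > 0$, or surrounds an interior point $r^{*} < R$, when $\rho(R) = 0$). The more delicate issue, which I would flag even though it sits outside the literal statement, is that this argument presupposes $P < 0$ near the surface; promoting it to cover \emph{every} barotropic negative-mass star with $\qty(\pdv*{P}{\rho})_s \geq 0$ would require separately excluding configurations with $P > 0$ near $r = R$, for which one combines $P > 0 \Rightarrow \rho > 0$ near the surface with $m(R) < 0$ and the no-sign-flip analysis of Section \ref{sec: examples}.
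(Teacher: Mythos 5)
Your argument is correct and is essentially the paper's own: the paper gives no formal proof environment for this proposition but derives it in the preceding paragraphs exactly as you do, combining $P(\rho=0)=0$ and monotonicity from Proposition \ref{prop: EoS-TOV} (so $\rho>0\Rightarrow P\geq 0$) with the interval of NEC validity from Proposition \ref{prop: NEC-boundary} to produce a point with $P<0$ and $\rho>0$, a contradiction. Your care in checking that the NEC subinterval actually meets the region where $P<0$, and your observation that the negative-pressure-near-the-surface assumption is inherited from Proposition \ref{prop: NEC-boundary} rather than derived, are both apt and consistent with how the paper frames the result.
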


It should be noted that if \(\qty(\pdv*{P}{\rho})_s < 0\) anywhere inside the star, then there is a region with \(p_{\star} < 0\), as one can tell from Eq. (\ref{eq: SLP-p-original}). This leads to two important observations.
\begin{enumerate}
    \item Bardeen's technique is generally not applicable to negative mass stars, as it relies on the assumption that \(p_{\star} > 0\) throughout the interior of the star (this is a condition for the Sturm Comparison Theorem).
    \item One will often be able to use a small bump function with large enough derivatives to make \(S[\theta_{a,b}]\) negative by restricting the support of \(\theta_{a,b}\) to the region in which \(p_{\star}\) is negative.
\end{enumerate}

This second observation corroborates that negative mass stars cannot be stable in general relativity and it is the basis for our main result, which is the following theorem.

\begin{theorem}\label{thm: main-theorem}
    Consider a finite non-singular stationary spherically symmetric star in general relativity described by the Tolman--Oppenheimer--Volkoff equation and satisfying the hypotheses of Proposition \ref{prop: NEC-boundary}. In particular, this implies the star is composed of a perfect fluid. Assume one of the two following conditions.
    \begin{enumerate}
        \item \(\qty(\pdv*{P}{\rho})_s < 0\) at some points of the star.
        \item The star obeys a differentiable barotropic equation of state as in Proposition \ref{prop: EoS-TOV} and it has negative total mass.
    \end{enumerate}
    Then the star is dynamically unstable. 
\end{theorem}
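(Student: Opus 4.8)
The plan is to show that each of the two hypotheses forces the Sturm--Liouville coefficient $p_\star$ to be strictly negative on some open subinterval of $(0,R)$, and then to make the functional $S[\psi]$ of Eq.~(\ref{eq: S-psi}) take a negative value on a bump function supported there. By the min-max principle this gives $\inf\mathrm{spec}(L)<0$, hence a pulsation mode with $\sigma^2<0$, hence an exponentially growing perturbation via Eq.~(\ref{eq: SHO-Chandra}), and the star is dynamically unstable. The first move is to dispose of hypothesis~2: under it the star obeys a differentiable barotropic equation of state and has negative total mass, so the contrapositive of Proposition~\ref{prop: EoS-neg-mass} rules out $\qty(\pdv*{P}{\rho})_s\geq 0$ holding everywhere inside the star, and therefore $\qty(\pdv*{P}{\rho})_s<0$ at some point---which is hypothesis~1. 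So it suffices to work under hypothesis~1.

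Combining Eqs.~(\ref{eq: SLP-p-original}) and (\ref{eq: gamma-adiabatic}) gives $p(r)=e^{\psi+3\phi}(\rho+P)\qty(\pdv*{P}{\rho})_s/r^2$, hence $p_\star(r)=s(r)p(r)=e^{\psi+3\phi}\abs{\rho+P}\qty(\pdv*{P}{\rho})_s/r^2$, and likewise $w_\star(r)=e^{3\psi+\phi}\abs{\rho+P}/r^2\geq 0$. Thus $p_\star$ carries the sign of $\qty(\pdv*{P}{\rho})_s$, and at a point $b$ where this slope is negative one necessarily has $\rho(b)+P(b)\neq 0$ (otherwise $p_\star(b)=0$), so $p_\star(b)<0$ and $w_\star(b)>0$. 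Since the background is non-singular, continuity supplies an interval $[b-a_0,b+a_0]\subset(0,R)$ and constants $\mu,Q>0$ with $p_\star\leq-\mu$, $w_\star>0$, and $\abs{q_\star}\leq Q$ on it. I would then take $\psi=\theta_{a,b}$ from Eq.~(\ref{eq: bump-function}) with $0<a<a_0$: since $\theta_{a,b}$ and all its derivatives vanish outside $(b-a,b+a)\subset(0,R)$, the boundary conditions (\ref{eq: boundary-SLP}) are satisfied, and on the support the numerator of $S$ obeys
\begin{equation*}
\int_0^R \bigl(p_\star\,\theta_{a,b}'(r)^2-q_\star\,\theta_{a,b}(r)^2\bigr)\dd{r}\leq \bigl(-\mu+Q\,I(a)\bigr)\int_0^R \theta_{a,b}'(r)^2\dd{r},
\end{equation*}
where $I(a)$ is the translation-invariant ratio defined in Eq.~(\ref{eq: Ia}). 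By the expansion $I(a)=a^4+\order{a^6}$ established in Appendix~\ref{app: bump-functions}, taking $a$ small enough makes the right-hand side strictly negative, while the denominator $\int_0^R w_\star\,\theta_{a,b}(r)^2\dd{r}$ is strictly positive because $w_\star>0$ on the support. Hence $S[\theta_{a,b}]<0$, so $\inf\mathrm{spec}(L)<0$, and the star is dynamically unstable.

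The analytic content is light once the earlier propositions and the bump-function integrals are in hand; the one genuinely delicate point is the bookkeeping around the locus $\rho+P=0$, where $p_\star$, $q_\star$, and $w_\star$ all vanish and $s(r)$ jumps. This is precisely why the base point $b$ must be chosen where $p_\star$ itself---not merely $\qty(\pdv*{P}{\rho})_s$---is strictly negative: there $\rho+P\neq 0$, so on a sufficiently small neighborhood all the coefficients are smooth, $w_\star$ is bounded away from zero, and the localized variational estimate closes. The step I would write out most carefully is therefore the implication ``$\qty(\pdv*{P}{\rho})_s<0$ somewhere $\Rightarrow p_\star<0$ on an interval''; after that, the scaling $I(a)\to 0$ finishes the job.
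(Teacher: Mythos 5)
Your proposal follows the paper's proof essentially step for step: dispose of condition~2 by reducing it to condition~1 via Proposition~\ref{prop: EoS-neg-mass}, localize a bump function \(\theta_{a,b}\) on an interval where \(p_{\star}<0\), and drive \(S[\theta_{a,b}]\) negative using \(I(a)=a^4+\order{a^6}\). Your additive bound \(-\mu+Q\,I(a)\) is a slightly cleaner packaging of the paper's ratio bound \(\max q_{\star}/\min p_{\star}\cdot I(a)\), and your explicit observation that \(w_{\star}>0\) on the support of the bump (so the denominator of \(S\) is strictly positive) makes explicit a sign issue the paper leaves implicit. In all of these respects the argument is correct.

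The one place you have not actually closed the argument is the step you yourself single out as delicate: the existence of a point where \(\qty(\pdv*{P}{\rho})_s<0\) \emph{and} \(P+\rho\neq 0\). Your parenthetical ``(otherwise \(p_{\star}(b)=0\))'' explains why you need \(P+\rho\neq0\) at the base point \(b\), but it does not show that such a \(b\) exists---a priori the hypothesis only supplies points with negative slope, and these could all lie on the locus \(P+\rho=0\), where \(p_{\star}\) vanishes and the variational estimate gives nothing. The paper excludes this degenerate case in a footnote by invoking the finiteness hypothesis: if \(P+\rho=0\) at some \(r_0\), the TOV equation forces \(\dv*{P}{r}=0\) there, and since \(\qty(\pdv*{P}{\rho})_s<0\) is in particular nonzero, the chain rule gives \(\dv*{\rho}{r}=0\) as well; the constant continuation then solves the first-order system, so the pressure never reaches zero and the star is infinite, contradicting \(R<\infty\). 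Adding that observation is all that is needed to make your argument complete.
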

\begin{proof}
    If the second condition holds, Proposition \ref{prop: EoS-neg-mass} ensures that \(\qty(\pdv*{P}{\rho})_s < 0\) at some points inside the star. Hence, the first condition holds.
    
    Using continuity, we may assume that one of the points with \(\qty(\pdv*{P}{\rho})_s < 0\) is in the open interval \((0,R)\) without loss of generality. Denote this point by \(r^*\). Eq. (\ref{eq: SLP-p-original}) ensures \(p_{\star}(r^*) < 0\) \footnote{\(p_{\star}\) vanishes in the particular scenario in which \(\qty(\pdv*{P}{\rho})_s < 0\) only at points with \(P+\rho = 0\). Nevertheless, \(P+\rho = 0\) means the pressure is constant by the TOV equation. Furthermore, since \(\qty(\pdv*{P}{\rho})_s < 0\), applying the chain rule to the TOV equation also leads to the conclusion that \(\dv*{\rho}{r} = 0\). Hence, if \(\qty(\pdv*{P}{\rho})_s < 0\) at a point with \(P+\rho = 0\), then the star will be infinite, in contradiction with the hypotheses of the theorem.}. By continuity, we have the stronger result that there is an open interval \(I \subset (0,R)\) containing \(r^*\) in which \(p_{\star}\) is strictly negative. 

    The variational criterion establishes that the star will be unstable if there is any \(\zeta\) satisfying the boundary conditions~\eqref{eq: boundary-SLP} such that
    \begin{equation}\label{eq: instability-final-argument}
        \int_0^R p_{\star}(r) \qty(\zeta'(r))^2 \dd{r} - \int_0^R q_{\star}(r) \qty(\zeta(r))^2 \dd{r} < 0.
    \end{equation}
    This equation is a direct consequence of the expression \(S[\zeta] < 0\).
    
    Choose \(\zeta\) to be a bump function of the form (\ref{eq: bump-function}) with support in \(I\). Let us say \(\zeta = \theta_{a,b}\) for concreteness. Then Eq. (\ref{eq: instability-final-argument}) is equivalent to
    \begin{equation}
        1 > \frac{\int_c^d q_{\star}(r) \qty(\theta_{a,b}(r))^2 \dd{r}}{\int_c^d p_{\star}(r) (\theta_{a,b}'(r))^2 \dd{r}}.
    \end{equation}
    Notice, however, that 
    \begin{equation}
        \frac{\max q_{\star} \int_c^d \qty(\theta_{a,b}(r))^2 \dd{r}}{\min p_{\star} \int_c^d (\theta_{a,b}'(r))^2 \dd{r}} > \frac{\int_c^d q_{\star}(r) \qty(\theta_{a,b}(r))^2 \dd{r}}{\int_c^d p_{\star}(r) (\theta_{a,b}'(r))^2 \dd{r}},
    \end{equation}
    where the maximum and minimum are taken over \((c,d)\). Therefore, a sufficient condition for instability is that
    \begin{equation}
        1 > \frac{\max q_{\star} \int_c^d \qty(\theta_{a,b}(r))^2 \dd{r}}{\min p_{\star} \int_c^d (\theta_{a,b}'(r))^2 \dd{r}},
    \end{equation}
    where \(c < b-a < b+a < d\) by hypothesis. As in the Schwarzschild star case, we can now write this expression as
    \begin{equation}
        1 > \frac{\max q_{\star} \int_{-a}^a \qty(\theta_{a,0}(r))^2 \dd{r}}{\min p_{\star} \int_{-a}^a (\theta_{a,0}'(r))^2 \dd{r}},
    \end{equation}
    because we are still integrating over the whole support of \(\theta_{a,b}\) and the value of \(b\) does not modify the value of the integrals. 

    Recalling Eq. (\ref{eq: Ia}), we see thus that the very general star we are considering will be unstable if there is at least one value of \(0 < a < R/2\) for which  
    \begin{equation}\label{eq: criterion-main-theorem}
        1 > \frac{\max q_{\star}}{\min p_{\star}} I(a).
    \end{equation}
    Nevertheless, as mentioned in Section \ref{subsec: schwarzschild-star-stability}, \(I(a) = a^4 + \order{a^6}\) for small values of \(a\), which establishes \(I(a)\) can be made as small as desired by picking sufficiently small values of \(a\). Hence, it is always possible to choose \(a\) such that Eq. (\ref{eq: criterion-main-theorem}) holds, which establishes that the class of stars considered in this theorem is always unstable. 
\end{proof}

In practice, what the proof teaches us is that the regions of the star with \(\qty(\pdv*{P}{\rho})_s < 0\) are unstable. By carefully exciting these regions with a perturbation, we can disrupt the equilibrium of the star. This does not require any assumption on the sign of the energy density. 

One may notice Proposition \ref{prop: EoS-TOV} (and therefore Theorem \ref{thm: main-theorem}) assumes ``barotropic'' to mean a differentiable equation of state of the form \(P = P(\rho)\). This rules out the limiting case \(\rho = \rho(P) = \rho_0\) for a constant \(\rho_0\). This is the case, for example, for the Schwarzschild star. As discussed in section \ref{subsec: schwarzschild-star-stability}, there are situations in which the pulsation equation suggests the Schwarzschild star could be stable. Nevertheless, since the key quantity in the stability analysis, \(\qty(\pdv*{P}{\rho})_s\), is ill defined in this case, the Chandrasekhar pulsation equation should be viewed with skepticism. In scenarios where \(\qty(\pdv*{\rho}{P})_s=0\), it may be more meaningful to rederive the pulsation equation with the assumption that the perturbation in the energy density always vanishes. Indeed, if the equation of state demands \(\rho(P) = \rho_0\) in a region, then the perturbation cannot change the value of \(\rho_0\) in said region. Since this limiting case is not of particular physical interest, we will not consider it in more depth.

\subsection{Instability Timescales}
    Finally, we estimate the timescales associated with the hydrodynamical instabilities. For this, we must recover the units.

    Let \(N(\sigma)\) denote the absolute value of the numerical value obtained for \(\sigma\) in the dimensionless system of units we have been using so far. To restore dimensions, we will use appropriate powers of \(c\), \(G\), and some length scale \(R_0\) characterizing the system. The value of \(\abs{\sigma}\) in an arbitrary system of units will then be given by
    \begin{equation}
        \abs{\sigma} = N(\sigma) \frac{c}{R_0}.
    \end{equation}
    
    Accordingly, the instability happens in the timescale
    \begin{equation}
        T = \frac{R_0}{N(\sigma) c},
    \end{equation}
    which is the time it takes for light to cross a length \(R_0\) corrected by the numerical factor associated with the instability. 

    According to tables \ref{tab: stability}, \ref{tab: stability-bump}, and \ref{tab: stability-novikov-polytrope}, the smallest value we found for \(\sigma^2\) was of order unity (\(\Gamma = 3\), \(\rho_0 = 0.1\rho_c\) in table \ref{tab: stability-novikov-polytrope}). Hence, the smallest value we found for \(N(\sigma)\) was at order \(N(\sigma) \sim 1\). Larger values would lead to smaller instability timescales, so this is the ``most stable scenario''. 

    We thus find that the instability timescales for the models we considered are such that
    \begin{equation}
        T \lesssim \frac{R_0}{c}.
    \end{equation}
    It would be incorrect to conclude from this expression that the instability is faster than light. One should recall that the instabilities we are considering are local in nature. They are derived from the pulsation equation, which describes how each fluid element moves around its equilibrium position. A way to picture this is to remember that Sturm--Liouville problems traditionally have an infinite and unbounded sequence of eigenvalues. Hence, even in a stable star, there are oscillation modes with arbitrarily large frequencies, which are thus associated with arbitrarily small oscillation periods. The phenomenon we are seeing here is the same. \(T\) does not measure the time it takes for the star to undergo gravitational collapse, for example, but rather the time it takes for the perturbations to grow significantly.

    For \(R_0 \sim \SI{7e8}{\meter}\) (about the size of the solar radius), \(T \sim \SI{2e2}{\second}\), indicating that only larger structures with negative masses can live longer than a couple of minutes, posing the astrophysical question of their formation.

\section{Conclusions}\label{sec: conclusions}
We have discussed the properties of negative-mass relativistic stars. In particular, we reviewed how quantum theory, grounded on Borde's and Penrose--Sorkin--Woolgar's theorems, hampers the existence of negative-mass stars. Furthermore, we found that classical general relativity abhors negative masses in the sense that all models of negative mass stars considered in this work---arising both from an equation of state or an energy density profile---turned out to be unstable \footnote{As mentioned at the end of Section \ref{subsec: missing-examples}, the Schwarzschild star and stars with \(\qty(\pdv*{\rho}{P})_s = 0\) may need to be treated separately. Nevertheless, this case may require modifying the Chandrasekhar pulsation equation and hence we chose not to consider it in depth, as it seems uninteresting.}. \emph{A fortiori}, we showed that any star with \(\qty(\pdv*{P}{\rho})_s < 0\) at some point must be dynamically unstable, and also that all barotropic negative-mass stars satisfy this condition. None of our stability analyses depends on the validity of energy conditions.

It is interesting to note the implications of these results for the cosmic-weight watcher conjecture about negative masses. Although general relativity cannot forbid an equilibrium solution with negative mass---after all, any Lorentzian geometry is a solution to the Einstein field equations for the appropriate stress tensor---it may forbid stable solutions with negative mass. This happens in the cases considered in this work. While one can construct a negative-mass star, one cannot expect it to be stable under small hydrodynamical perturbations. 

Notice this gives a classical answer to a problem that may appear to be quantum in nature. One would typically expect that the energy conditions of quantum theory would be necessary, not only sufficient, to forbid negative-mass relativistic stars. The argument would be essentially that quantum field theory gives input on the sorts of matter available to general relativity and thus forbids otherwise valid solutions. However, our results indicate that this is not the only mechanism used by the cosmic weight-watcher of Costa and Matsas to forbid negative masses. Stability is also an important mechanism that seems to rule out negative masses even at a classical level. 

\acknowledgments
We thank Daniel A. T. Vanzella, and George E. A. Matsas for illuminating discussions that initiated and fueled this project. Part of the calculations in this work were carried out with the aid of \texttt{Mathematica} \cite{wolframresearch2024Mathematica140} (and in particular the \texttt{OGRe} package \cite{shoshany2021OGReObjectOrientedGeneral}) in a license to the University of São Paulo (N. A. A.'s \emph{alma mater}). N. A. A. was supported by the Coordenação de Aperfeiçoamento de Pessoal de Nível Superior---Brasil (CAPES)---Finance Code 001.

\begin{appendix}
\section{ANEC in a Stellar Spacetime}\label{app: anec-stellar}
    In this appendix, we rewrite the ANEC integral in a way that is easier to compute in a spacetime representing a star. To do so, we begin by noticing that Eqs. (\ref{eq: fluid-stress-tensor}) and (\ref{eq: anec-integral}) imply that 
    \begin{equation}\label{eq: anec-integral-u-gamma}
        \int \tensor{T}{_a_b} \tensor{\dot{\gamma}}{^a}\tensor{\dot{\gamma}}{^b} \dd{\lambda} = \int (P+\rho) \qty(\tensor{u}{_a} \tensor{\dot{\gamma}}{^a})^2 \dd{\lambda}.
    \end{equation}
    
    Notice that
    \begin{equation}
        \tensor{\dot{\gamma}}{^a}(\lambda) = \dv{t}{\lambda}\tensor{\qty(\pdv{t})}{^a} + \dv{r}{\lambda}\tensor{\qty(\pdv{r})}{^a} + \dv{\varphi}{\lambda}\tensor{\qty(\pdv{\varphi})}{^a},
    \end{equation}
    where we are considering an arbitrary null geodesic and will impose the radial condition only later. Notice that spherical symmetry allowed us to make the simplifying assumption that \(\theta = \frac{\pi}{2}\) without any loss of generality. 
    
    The spacetime has a timelike Killing vector field and an axial Killing vector field (among others). This allows us to identify the conserved quantities 
    \begin{gather}
        \varepsilon = - \tensor{\qty(\pdv{t})}{^a}\tensor{\dot{\gamma}}{^b}\tensor{g}{_a_b} = e^{2\phi(r)} \dv{t}{\lambda} \\
        \intertext{and}
        \ell = \tensor{\qty(\pdv{\varphi})}{^a}\tensor{\dot{\gamma}}{^b}\tensor{g}{_a_b} = r^2 \dv{\varphi}{\lambda},
    \end{gather}
    which are interpreted as energy and angular momentum.
    
    It follows then that
    \begin{equation}
        \tensor{\dot{\gamma}}{^a}(\lambda) = \varepsilon e^{-2\phi(r)} \tensor{\qty(\pdv{t})}{^a} + \dv{r}{\lambda}\tensor{\qty(\pdv{r})}{^a} + \frac{\ell}{r^2}\tensor{\qty(\pdv{\varphi})}{^a},
    \end{equation}
    and hence, using Eq. (\ref{eq: four-velocity-tov}),
    \begin{equation}
        \tensor{u}{_a}\tensor{\dot{\gamma}}{^a} = - e^{-\phi(r)}\varepsilon.
    \end{equation}
    
    The condition \(\tensor{\dot{\gamma}}{^a}\tensor{\dot{\gamma}}{_a} = 0\) implies that 
    \begin{equation}\label{eq: drdlambda-anec-integral}
        \qty(\dv{r}{\lambda})^2 = \qty(1 - \frac{2m(r)}{r})\qty(\varepsilon e^{-2\phi(r)} - \frac{\ell^2}{r^2}).
    \end{equation}
    Notice that (assuming \(2m(r) < r\) for all \(r\), which holds in our cases of interest) the point of closest approach, \(r_{\text{min}}\), is characterized by
    \begin{equation}
        r_{\text{min}}^2 e^{-2\phi(r_{\text{min}})} = \frac{\ell^2}{\varepsilon^2},
    \end{equation}
    and if this equation admits multiple solutions, then \(r_{\text{min}}\) is the largest solution with \(r_{\text{min}} < R\).
    
    Let us focus on the time interval in which the coordinate \(r\) grows along the geodesic. We choose \(\lambda\) such that \(r(0) = r_{\text{min}}\), so that this corresponds to computing the integral
    \begin{equation}
        \int_0^{+\infty} \tensor{T}{_a_b} \tensor{\dot{\gamma}}{^a}\tensor{\dot{\gamma}}{^b} \dd{\lambda} = \int_0^{+\infty} (P+\rho) \qty(\tensor{u}{_a} \tensor{\dot{\gamma}}{^a})^2 \dd{\lambda},
    \end{equation}
    which we already know will yield 
    \begin{equation}
        \int_0^{+\infty} \tensor{T}{_a_b} \tensor{\dot{\gamma}}{^a}\tensor{\dot{\gamma}}{^b} \dd{\lambda} = \varepsilon^2 \int_0^{+\infty} (P+\rho) e^{-2\phi} \dd{\lambda}.
    \end{equation}
    We can change the integration parameter from \(\lambda\) to \(r\) by using Eq. (\ref{eq: drdlambda-anec-integral}) with the additional assumption that \(\dv*{r}{\lambda} > 0\). This yields, upon simplification,
    \begin{multline}
        \int_0^{+\infty} \tensor{T}{_a_b} \tensor{\dot{\gamma}}{^a}\tensor{\dot{\gamma}}{^b} \dd{\lambda} \\ = \varepsilon \int_{r_{\text{min}}}^{R} \frac{(P(r)+\rho(r)) e^{-\phi(r)} \dd{r}}{\sqrt{\qty(1 - \frac{2 m(r)}{r})\qty(1 - \frac{\ell^2}{\varepsilon^2}\frac{e^{2\phi(r)}}{r^2})}}.
    \end{multline}
    
    To get the full ANEC integral, we notice that the expression is symmetric on whether \(r\) is increasing or decreasing. Hence, we find that
    \begin{multline}\label{eq: ANEC-integral-in-TOV}
        \int_{-\infty}^{+\infty} \tensor{T}{_a_b} \tensor{\dot{\gamma}}{^a}\tensor{\dot{\gamma}}{^b} \dd{\lambda} \\ = 2 \varepsilon \int_{r_{\text{min}}}^{R} \frac{(P(r)+\rho(r)) e^{-\phi(r)} \dd{r}}{\sqrt{\qty(1 - \frac{2 m(r)}{r})\qty(1 - \frac{\ell^2}{\varepsilon^2}\frac{e^{2\phi(r)}}{r^2})}}.
    \end{multline}
    
    In the particular case with \(\ell = 0\), corresponding to a radial geodesic, we get \(r_{\min} = 0\) and 
    \begin{equation}\label{eq: radial-ANEC-integral-in-TOV}
        \int_{-\infty}^{+\infty} \tensor{T}{_a_b} \tensor{\dot{\gamma}}{^a}\tensor{\dot{\gamma}}{^b} \dd{\lambda} = 2 \varepsilon \int_0^{R} (P+\rho) \frac{e^{-\phi(r)}}{\sqrt{1 - \frac{2m(r)}{r}}} \dd{r}.
    \end{equation}

\section{NEC Violations in QFT}\label{app: nec-violations-qft}
    In this appendix included only in the \texttt{arXiv} version of the paper, we give a short, but standard, argument showing that one can violate the null energy condition in quantum field theory. The argument is similar to the one given in Refs. \cite{fewster2012LecturesQuantumEnergy,fewster2017QuantumEnergyInequalities}. In summary, we will find a state for which the expectation value of \(\normord{\tensor{\hat{T}}{_a_b}\tensor{k}{^a}\tensor{k}{^b}}\)---where \(\tensor{\hat{T}}{_a_b}\) is the stress tensor operator, \(\tensor{k}{^a}\) is a null vector, and the colons denote normal ordering---becomes negative. This appendix uses Planck units \(G = c = \hbar = 1\).

    We consider a minimally coupled massive scalar field in Minkowski spacetime. The stress-energy tensor for such a field is known to be 
    \begin{equation}
        \tensor{\hat{T}}{_a_b} = \tensor{\nabla}{_a}\hat{\varphi}\tensor{\nabla}{_b}\hat{\varphi} - \frac{1}{2}\tensor{\eta}{_a_b}[\tensor{\nabla}{_c}\hat{\varphi}\tensor{\nabla}{^c}\hat{\varphi} + m^2 \hat{\varphi}^2],
    \end{equation}
    where the hats remind us that we should consider these objects as operators. We will soon need to renormalize the stress tensor, which can be implemented by the normal-ordering prescription.

    For any null vector field \(\tensor{k}{^a}\),
    \begin{equation}\label{eq: Tab-ka-kb}
        \tensor{\hat{T}}{_a_b}\tensor{k}{^a}\tensor{k}{^b} = \tensor{k}{^a}\tensor{\nabla}{_a}\hat{\varphi}\tensor{k}{^b}\tensor{\nabla}{_b}\hat{\varphi} \equiv \hat{\mathcal{H}},
    \end{equation}
    where we defined the ``null Hamiltonian density'' \(\hat{\mathcal{H}}\) with the sole intention of simplifying the notation. As a side note, if we were in the classical theory Eq. (\ref{eq: Tab-ka-kb}) would be manifestly non-negative since we would have
    \begin{equation}
        \tensor{T}{_a_b}\tensor{k}{^a}\tensor{k}{^b} = (\tensor{k}{^a}\tensor{\nabla}{_a}\varphi)^2 \geq 0,
    \end{equation}
    showing that a minimally coupled classical scalar field always obeys the NEC.

    Our next step is to express the normal-ordered operator \(\normord{\hat{\mathcal{H}}}\) in terms of creation and annihilation operators. For this, we decompose the quantum field \(\hat{\varphi}\) in creation and annihilation operators by writing
    \begin{equation}\label{eq: varphi-ladders}
        \hat{\varphi}(x) = \frac{1}{\qty(2\pi)^{\frac{d}{2}}} \int \qty[\hat{a}_{\vb{p}} e^{i p \cdot x} + \hat{a}_{\vb{p}}^\dagger e^{-i p \cdot x}] \frac{\dd[d]{p}}{\sqrt{2 \omega_{\vb{p}}}},
    \end{equation}
    where \(p \cdot x = \tensor{p}{_\mu}\tensor{x}{^\mu}\),
    \begin{equation}
        \tensor{p}{^a}\tensor{p}{_a} = - (\tensor{p}{^0})^2 + \norm{\vb{p}}^2 = -m^2
    \end{equation}
    is understood, and we denote \(\omega_{\vb{p}} = \sqrt{\norm{\vb{p}}^2 + m^2}\). For generality, we are assuming at this stage a \(d+1\) dimensional spacetime.

    The decomposition in Eq. (\ref{eq: varphi-ladders}) leads to the canonical commutation relations
    \begin{equation}
        \comm{\hat{a}_{\vb{p}}}{\hat{a}_{\vb{q}}^\dagger} = \delta^{(d)}(\vb{p}-\vb{q}).
    \end{equation}

    From Eq. (\ref{eq: varphi-ladders}) we extract
    \begin{equation}\label{eq: k-nabla-varphi}
        \tensor{k}{^b}\tensor{\nabla}{_b}\hat{\varphi} = \frac{i}{\qty(2\pi)^{\frac{d}{2}}} \int \qty[\hat{a}_{\vb{p}} e^{i p \cdot x} - \hat{a}_{\vb{p}}^\dagger e^{-i p \cdot x}] \frac{\tensor{k}{^b}\tensor{p}{_b} \dd[d]{p}}{\sqrt{2 \omega_{\vb{p}}}}.
    \end{equation}

    From Eqs. (\ref{eq: Tab-ka-kb}) and (\ref{eq: k-nabla-varphi}) we find that 
    \begin{widetext}
    \begin{equation}
        \normord{\hat{\mathcal{H}}} = -\frac{1}{\qty(2\pi)^{d}} \int \qty[\hat{a}_{\vb{p}}\hat{a}_{\vb{q}} e^{i (p + q) \cdot x} - \hat{a}_{\vb{p}}^\dagger \hat{a}_{\vb{q}} e^{-i (p - q) \cdot x} - \hat{a}_{\vb{q}}^\dagger\hat{a}_{\vb{p}} e^{-i (q-p) \cdot x} + \hat{a}_{\vb{p}}^\dagger\hat{a}_{\vb{q}}^\dagger e^{-i (p + q) \cdot x}] \frac{(\tensor{k}{^b}\tensor{p}{_b})(\tensor{k}{^c}\tensor{q}{_c}) \dd[d]{p} \dd[d]{q}}{2 \sqrt{\omega_{\vb{p}} \omega_{\vb{q}}}}.
    \end{equation}
    \end{widetext}

    The next step is to find a state \(\ket{\psi}\) which yields \(\ev{\normord{\hat{\mathcal{H}}(x)}}{\psi} < 0\) in some region. We take
    \begin{equation}
        \ket{\psi} = \cos\beta \ket{0} + \sin\beta \ket{fg},
    \end{equation}
    where \(\beta\) is a real parameter which we will fix later, \(\ket{0}\) is the (Minkowski) vacuum, and 
    \begin{equation}
        \ket{fg} = \left(\int \frac{\sqrt{2 \omega_{\vb{p}}} f(\vb{p})}{\tensor{k}{^b}\tensor{p}{_b}} \hat{a}^\dagger_{\vb{p}} \dd[d]{p} \int \frac{\sqrt{2 \omega_{\vb{q}}} g(\vb{q})}{\tensor{k}{^c}\tensor{q}{_c}} \hat{a}^\dagger_{\vb{q}} \dd[d]{q}\right) \ket{0}
    \end{equation}
    is a smeared two-particle state. The smearing is necessary to later keep the expectation value of \(\normord{\hat{\mathcal{H}}}\) well-defined. The factors of \(\omega\) and \(\tensor{k}{^a}\tensor{p}{_a}\) are chosen for later convenience. The normalization of this state is given by
    \begin{equation}
        \bra{f_2 g_2}\ket{f_1 g_1} = (f_2,f_1) (g_2,g_1) + (f_2,g_1) (g_2,f_1),
    \end{equation}
    where 
    \begin{equation}
        (f, g) = \int f^*(\vb{p}) g(\vb{p}) \frac{2 \omega_{\vb{p}} \dd[d]{p}}{(\tensor{k}{^b}\tensor{p}{_b})^2}.
    \end{equation}
    For simplicity, we will take \(f = g\). Imposing \(\braket{ff} = 1\) implies
    \begin{equation}
        (f,f) = \frac{1}{\sqrt{2}}.
    \end{equation}
    
    We recall that the Minkowski vacuum is annihilated by all annihilation operators \(\hat{a}_{\vb{p}}\), and this implies
    \begin{equation}
        \ev{\normord{\hat{\mathcal{H}}(x)}}{0} = 0.
    \end{equation}
    We also have the normalization condition \(\braket{0} = 1\).

    We notice then that
    \begin{multline}
        \ev{\normord{\hat{\mathcal{H}}}}{\psi} = \sin(2\beta) \Re[\mel{0}{\normord{\hat{\mathcal{H}}}}{ff}] \\ + \sin^2\beta \ev{\normord{\hat{\mathcal{H}}}}{ff}.
    \end{multline}

    The matrix elements are given by
    \begin{equation}
        \mel{0}{\normord{\hat{\mathcal{H}}}}{ff} = - \frac{2}{(2\pi)^d} \int f(\vb{p}) f(\vb{q}) e^{i (p+q) \cdot x} \dd[d]{p} \dd[d]{q}.
    \end{equation}
    and
    \begin{equation}
        \ev{\normord{\hat{\mathcal{H}}}}{ff} = \frac{4}{\sqrt{2} (2\pi)^d} \int f^*(\vb{p})f(\vb{q}) e^{- i (p-q)\cdot x} \dd[d]{p} \dd[d]{q}.
    \end{equation}

    In total, we find that
    \begin{widetext}
        \begin{equation}\label{eq: ev-null-ham}
            \ev{\normord{\hat{\mathcal{H}}(x)}}{\psi} = - \frac{2 \sin(2 \beta)}{(2\pi)^d} \Re\qty[\int f(\vb{p}) f(\vb{q}) e^{i (p+q) \cdot x} \dd[d]{p} \dd[d]{q}] + \frac{4 \sin^2\beta}{\sqrt{2} (2\pi)^d} \int f^*(\vb{p})f(\vb{q}) e^{- i (p-q)\cdot x} \dd[d]{p} \dd[d]{q}.
        \end{equation}
    Once \(x\) is fixed, one can then choose \(\beta\) arbitrarily small so that the first term is negative (\(f\) can be chosen so that it does not vanish). In the small \(\beta\) limit, the second term becomes negligible, exhibiting local violations of the NEC. 

    Let us consider now what happens when we integrate along a null geodesic parallel to \(\tensor{k}{^a}\), which will allow us to consider the ANEC. For simplicity, we focus on a massive field (thus \(m>0\)) in a two-dimensional Minkowski spacetime (\(d = 1\)). A curve parallel to \(\tensor{k}{^a}\) is described in coordinates by \(\tensor{x}{^\mu} = \lambda \tensor{k}{^\mu}\), where \(\lambda\) is an affine parameter. 

    With this in mind, we see that Eq. (\ref{eq: ev-null-ham}) leads to the expression 
        \begin{multline}
            \int \ev{\normord{\hat{\mathcal{H}}}}{\psi} \dd{\lambda} = - \frac{2 \sin(2 \beta)}{(2\pi)^{d-1}} \Re\qty[\int f(\vb{p}) f(\vb{q}) \delta((p+q) \cdot k) \dd[d]{p} \dd[d]{q}] \\ + \frac{4 \sin^2\beta}{\sqrt{2} (2\pi)^{d-1}} \int f^*(\vb{p})f(\vb{q}) \delta((p-q) \cdot k) \dd[d]{p} \dd[d]{q}.
        \end{multline}
    \end{widetext}
    The first integral vanishes because the delta's argument,
    \begin{equation}
        (p+q)\cdot k = (\vb{p} + \vb{q}) \vdot \vb{k} - (\omega_{\vb{p}} + \omega_{\vb{q}}) \tensor{k}{^0},
    \end{equation}
    is strictly positive since \(\tensor{k}{^0} = \norm{\vb{k}}\) and \(\omega_{\vb{p}} > \norm{\vb{p}}\). Notice the main reason this quantity cannot vanish is the fact that \(\omega_{\vb{p}} > 0\), so we cannot have \(\tensor{p}{^a} = -\tensor{q}{^a}\) because \(\omega_{\vb{p}} = - \omega_{\vb{q}}\) is impossible.

    To deal with the second term, we use the simplifying assumption that \(d=1\). Then the ANEC integral becomes
    \begin{equation}
        \int \ev{\normord{\hat{\mathcal{H}}}}{\psi} \dd{\lambda} = \frac{4 \sin^2\beta \omega_p}{\sqrt{2}} \int \frac{f^*(p)f(p)}{\abs{\omega_p - p}} \dd{p},
    \end{equation}
    which is manifestly positive. Therefore, we see that the ANEC holds for the state \(\ket{\psi}\) of a massive scalar field in a two-dimensional Minkowski spacetime.

\section{Bump Function Integrals}\label{app: bump-functions}
    While dealing with the stability of the Schwarzschild star, we faced the integrals
    \begin{equation}
        \int_{-a}^{a} \theta_{a,0}(r)^2 \dd{r} \qq{and} \int_{-a}^{a} \theta'_{a,0}(r)^2 \dd{r}.
    \end{equation}
    In this appendix, which is included only in the \texttt{arXiv} version of the paper, we evaluate them exactly. We begin by defining
    \begin{equation}
        I_1(a) = \int_{-a}^{a} \theta_{a,0}(r)^2 \dd{r}
    \end{equation}
    and
    \begin{equation}
        I_2(a) = \int_{-a}^{a} \theta'_{a,0}(r)^2 \dd{r}.
    \end{equation}
    
    From the definition of \(\theta_{a,0}\) (Eq. \ref{eq: bump-function}), it is straightforward to show that
    \begin{equation}
        I_1(a) = e^{\frac{2}{a^2}} \int_{-a}^{a} e^{\frac{2}{r^2-a^2}} \dd{r}
    \end{equation}
    and
    \begin{equation}
        I_2(a) = 4 e^{\frac{2}{a^2}} \int_{-a}^{a} \frac{r^2 e^{\frac{2}{r^2-a^2}}}{(r^2-a^2)^4} \dd{r}.
    \end{equation}
    
    Using the fact that the integrands are even and performing the substitution \(r = ax\) we get that
    \begin{equation}
        I_1(a) = 2 a e^{\frac{2}{a^2}} \int_{0}^{1} e^{\frac{2}{a^2(x^2-1)}} \dd{x}
    \end{equation}
    and
    \begin{equation}
        I_2(a) = 8 a^{-5} e^{\frac{2}{a^2}} \int_{0}^{1} \frac{x^2 e^{\frac{2}{a^2(x^2-1)}}}{(x^2-1)^4} \dd{x}.
    \end{equation}
    
    Next, we make the substitutions 
    \begin{equation}
        u = - \frac{1}{x^2-1},
    \end{equation}
    which involve
    \begin{equation}
        \dd{x} = \frac{\dd{u}}{2 u \sqrt{u^2 - u}}.
    \end{equation}
    This leads to 
    \begin{equation}
        I_1(a) = a e^{\frac{2}{a^2}} \int_1^{+\infty} \frac{e^{\frac{-2u}{a^2}}}{u \sqrt{u^2 - u}} \dd{u}
    \end{equation}
    and
    \begin{equation}
        I_2(a) = 4 a^{-5} e^{\frac{2}{a^2}} \int_{1}^{+\infty} e^{\frac{-2u}{a^2}} u\sqrt{u^2 - u} \dd{u}.
    \end{equation}
    
    The problem of finding both \(I_1(a)\) and \(I_2(a)\) has now been reduced to computing integrals of the form
    \begin{equation}
        J(z;\alpha,\beta) \equiv \int_1^{+\infty} e^{- z u} u^{-\alpha} (u-1)^{\beta-1} \dd{u}.
    \end{equation}
    For \(I_1\) we have \(z = 2/a^2\), \(\alpha = 3/2\), and \(\beta = 1/2\). For \(I_2\), \(z = 2/a^2\), \(\alpha=-3/2\), and \(\beta = 3/2\). The trick to solve this integrals is the same one often used in symbolic integration: we express the integrals in terms of Meijer G-functions \cite{andrews1985SpecialFunctionsEngineers,bateman1954HigherTranscendentalFunctionsVol1,luke1969SpecialFunctionsTheirVol1} and then apply an integration theorem. 
    
    We thus begin by noticing that \cite{andrews1985SpecialFunctionsEngineers}
    \begin{equation}
        e^{-z u} = G^{1,0}_{0,1}\left(z u\middle| \genfrac{}{}{0pt}{}{}{0} \right),
    \end{equation}
    which leads us to 
    \begin{equation}
        J(z;\alpha,\beta) = \int_1^{+\infty} G^{1,0}_{0,1}\left(z u\middle| \genfrac{}{}{0pt}{}{}{0} \right) u^{\beta} (u-1)^{\gamma} \dd{u}.
    \end{equation}
    The result of this integral is known and given by \cite{bateman1954TablesIntegralTransformsVol2,luke1969SpecialFunctionsTheirVol1}
    \begin{equation}
        J(z;\alpha,\beta) = \Gamma(\beta) G^{2,0}_{1,2}\left(z \middle| \genfrac{}{}{0pt}{}{\alpha}{\alpha-\beta,0} \right),
    \end{equation}
    where \(\Gamma\) is the gamma function. Therefore, we conclude that
    \begin{equation}
        \int_1^{+\infty} \frac{e^{\frac{-2u}{a^2}}}{u \sqrt{u^2 - u}} \dd{u} = \sqrt{\pi} G^{2,0}_{1,2}\left(\frac{2}{a^2}\middle| \genfrac{}{}{0pt}{}{\frac{3}{2}}{0,1} \right)
    \end{equation}
    and
    \begin{equation}
        \int_1^{+\infty} e^{\frac{-2u}{a^2}} u \sqrt{u^2 - u} \dd{u} = \frac{\sqrt{\pi}}{2} G^{2,0}_{1,2}\left(\frac{2}{a^2}\middle| \genfrac{}{}{0pt}{}{-\frac{3}{2}}{-3,0} \right).
    \end{equation}
    
    Bringing everything together, we find that
    \begin{equation}
        I_1(a) = a \sqrt{\pi} e^{\frac{2}{a^2}} G^{2,0}_{1,2}\left(\frac{2}{a^2}\middle| \genfrac{}{}{0pt}{}{\frac{3}{2}}{0,1} \right)
    \end{equation}
    and
    \begin{equation}
        I_2(a) = 2 \sqrt{\pi} a^{-5} e^{\frac{2}{a^2}} G^{2,0}_{1,2}\left(\frac{2}{a^2}\middle| \genfrac{}{}{0pt}{}{-\frac{3}{2}}{-3,0} \right).
    \end{equation}
    
    Ultimately, the quantity we are interested in is the function
    \begin{subequations}
        \begin{align}
            I(a) &= \frac{I_1(a)}{I_2(a)}, \\
            &= \frac{a^6}{2} \frac{G^{2,0}_{1,2}\left(\frac{2}{a^2}\middle| \genfrac{}{}{0pt}{}{\frac{3}{2}}{0,1} \right)}{G^{2,0}_{1,2}\left(\frac{2}{a^2}\middle| \genfrac{}{}{0pt}{}{-\frac{3}{2}}{-3,0} \right)}.
        \end{align}
    \end{subequations}
    More specifically, we are interested in the behavior of \(I(a)\) for small \(a\). To that end, it is useful to use an asymptotic expansion for the Meijer G-functions. For large \(z > 0\), it holds that \cite{luke1969SpecialFunctionsTheirVol1}
    \begin{equation}
        G^{2,0}_{1,2}\left(z\middle| \genfrac{}{}{0pt}{}{\alpha}{\alpha-\beta,0} \right) \sim e^{-z} z^{-\beta} \qty(1 + \sum_{k=1}^{+\infty} M_k z^{-k}),
    \end{equation}
    where \(M_k\) are constants determined by the specific values of \(\alpha\) and \(\beta\). We thus find that 
    \begin{equation}
        \frac{G^{2,0}_{1,2}\left(z\middle| \genfrac{}{}{0pt}{}{\frac{3}{2}}{0,1} \right)}{G^{2,0}_{1,2}\left(z\middle| \genfrac{}{}{0pt}{}{-\frac{3}{2}}{-3,0} \right)} = z + \order{1}.
    \end{equation}
    
    Using this asymptotic expression we conclude that, in the limit with \(a \ll 1\), it holds that
    \begin{equation}\label{eq: Ia-small-a}
        I(a) = a^4 + \order{a^6}.
    \end{equation}
    Hence, \(I(a)\) tends to zero as \(a \to 0\).
\end{appendix}

\bibliography{bibliography}
\end{document}